\documentclass[a4paper,12pt]{article}
\usepackage{tensor}

\usepackage[utf8]{inputenc}
\usepackage[english]{babel}

\usepackage{amsmath,amsthm}

\usepackage{amsfonts}
\usepackage{amssymb}

\usepackage{array}

\usepackage{mathrsfs}
\usepackage{hyperref}

\usepackage[pdftex]{graphicx}
\usepackage{xcolor}

\theoremstyle{plain}
\newtheorem{theorem}{Theorem}

\theoremstyle{definition}

\theoremstyle{remark}
\newtheorem{remark}[theorem]{Remark}

\begin{document}

\markboth{R. Niardi}
{Gauge Field Theories and Propagators in Curved Space-Time}

\date{\today}

\author{Roberto Niardi ORCID: 0000-0001-9216-3322\thanks{E-mail: roberto93niardi@gmail.com} \\
	Dipartimento di Fisica \lq\lq Ettore Pancini'', \\
	Università degli Studi di Napoli \lq\lq Federico II '', Italy
}

\title{GAUGE FIELD THEORIES AND PROPAGATORS IN CURVED SPACE-TIME}

\maketitle

\begin{abstract}
In this paper DeWitt's formalism for field theories is presented; it provides a framework in which the quantization of fields possessing infinite dimensional invariance groups may be carried out in a manifestly covariant (non-Hamiltonian) fashion, even in curved space-time.
Another important virtue of DeWitt's approach is that it emphasizes the common features of apparently very different theories such as Yang-Mills theories and General Relativity; moreover, it makes it possible to classify all gauge theories in three categories characterized in a purely geometrical way, i.e., by the algebra which the generators of the gauge group obey; the geometry of such theories is the fundamental reason underlying the emergence of ghost fields in the corresponding quantum theories, too. These \lq\lq tricky extra particles'', as Feynman called them in 1964, contribute to a physical observable such as the stress-energy tensor, which can be expressed in terms of Feynman's Green function itself.
Therefore, an entire section is devoted to the study of the Green functions of the neutron scalar meson: in flat space-time, the choice of a particular Green's function is the choice of an integration contour in the \lq\lq momentum'' space; in curved space-time the momentum space is no longer available, and the definition of the different Green functions requires a careful discussion itself. After the necessary introduction of bitensors, world function and parallel displacement tensor, an expansion for the Feynman propagator in curved space-time is obtained.
Most calculations are explicitly shown.
\end{abstract}

\section{Introduction}
The study of quantum g\-a\-u\-g\-e f\-i\-e\-l\-d t\-h\-e\-o\-r\-i\-e\-s a\-n\-d g\-r\-a\-v\-i\-t\-a\-t\-i\-o\-n i\-s b\-o\-t\-h a\-n i\-n\-t\-e\-l\-l\-e\-c\-t\-u\-a\-l p\-u\-r\-s\-u\-i\-t a\-n\-d a n\-e\-c\-e\-s\-s\-i\-t\-y c\-o\-n\-n\-e\-c\-t\-e\-d t\-o b\-l\-a\-c\-k h\-o\-l\-e t\-h\-e\-o\-r\-y a\-n\-d q\-u\-a\-n\-t\-u\-m c\-o\-s\-m\-o\-l\-o\-g\-i\-c\-a\-l m\-o\-d\-e\-l\-s. 
On one hand, electroweak and strong interactions are described by Yang-Mills theories, and are satisfactorily set in a quantum framework; on the other hand, although General Relativity is in many ways a gauge theory too, it stands apart from the other three forces of nature, and \textit{a quantum theory of gravity does not yet exist}, at least as a coherent discipline.
Nevertheless, it is possible to discuss the influence of the gravitational field on quantum phenomena: one can study the regime for quantum aspects of gravity in which the \textit{gravitational field} is described as a \textit{classical background} through Einstein's theory while \textit{matter fields are quantized}; this is reasonable as long as length and time scales of quantum processes of interest are greater than the Planck values ($l_{Planck} \equiv (G\hbar /c^3)^{1/2} \sim 1.616 \times 10^{-33}{\rm cm}$, $t_{Planck} \equiv (G\hbar /c^5)^{1/2} \sim 5.39 \times 10^{-44}{\rm s}$). Since Planck length is so small (twenty orders of magnitude below the size of an atomic nucleus), one can hope that such a \lq\lq semiclassical" approach has some predictive power. Therefore, one is naturally led to the subject of quantum field theory in a curved background spacetime. Its basic physical prediction is that \textit{strong gravitational fields can \lq\lq polarize" the vacuum and, when time-dependent, lead to pair creation}; moreover, in a curved space-time, notions of \lq\lq vacuum" and \lq\lq particles" need a deeper discussion than in the flat case. These two fundamental results are strongly linked to the most important predictions of the theory, i.e., Hawking and Unruh effects (see \cite{hawking1974black}, \cite{hawking1975particle} and \cite{unruh1976notes}): according to the Hawking effect, a classical, spherically symmetric \textit{black hole} of mass $M$ has the \textit{same spectrum of emission of a black body} having the temperature $T_{Hawking} \equiv \frac{1}{8\pi M}$; according to the Unruh effect, \textit{from the point of view of an accelerating observer}, \textit{empty space contains a gas of particles at a temperature proportional to the acceleration}. For a detailed treatise on these subjects, see also \cite{birrell1984quantum}, \cite{fulling1989aspects} and \cite{parker2009quantum}.

Th\-is \- pa\-pe\-r \- is \- s\-tr\-uc\-tu\-re\-d \- in \- six se\-ct\-io\-ns, i\-nc\-lu\-di\-n\-g t\-he i\-nt\-ro\-du\-ct\-io\-n; 
in Sec. 2, gauge field theories are introduced in DeWitt's formalism \cite{de1983houches}, \cite{dewitt1988dynamical},  \cite{dewitt1975quantum1}, \cite{dewitt2003global}, and the \textit{set of all fields} is presented as an \textit{infinite-dimensional manifold}, on which an action functional is defined; then gauge transformations are viewed as flows which leave the action functional unchanged; whenever the \textit{gauge group realization} is \textit{linear} (this is the case for Yang-Mills theories and General Relativity; see \cite{palais1957imbedding} and \cite{mostow1957equivariant} for precious results in the finite-dimensional case), \textit{manifest covariance} is ensured, i.e., the group transformation laws for the various symbols that appear in the theory may be inferred simply from the position and nature of their indices, and both sides of any equation transform similarly. Last, the theory of small disturbances is discussed, Green's functions are introduced in a general fashion and Peierls brackets are defined, in light of their importance in the quantization procedure.

In Sec. 3, \textit{quantization of non-gauge field theories} is discussed in the framework of DeWitt's formalism: problems with the heuristic quantization rules are stressed, and Schwinger's variational principle is introduced as a way to get around them; then the operator dynamical equations are presented, with the necessary introduction of the measure functional, which, at the simplest level, may be thought of as correcting the lack of self-adjointness of the \lq\lq time-ordered'' version of the classical dynamical equations; nevertheless, it plays a far deeper role, closely linked to the Wick rotation for the evaluation of divergent integrals. Last, the path integral for non-gauge theories is derived.

In Sec. 4, the treatment is extended to gauge theories: some insight is given about the \textit{geometrical structure of the infinite-dimensional manifold of field theories} belonging to the same class as Yang-Mills and General Relativity; under the group action, this space is separated into \textit{orbits}; one could say that it is in the \textit{space of orbits} that the real physics of the system takes place; then one can choose on the manifold \textit{a new set of coordinates} made of two parts: the first one labelling the orbit, and the second one labelling a particular field configuration belonging to the specified orbit; when one writes down the functional integral and reverts to the original coordinates, one finds out that a new term appears: it is the \textit{ghost contribution}, which involves two ghost fields; they are fermionic for a bosonic theory and bosonic for a fermionic theory. Therefore, an \lq\lq extended'' space may be introduced, where ghost fields appear too and a new action functional, containing also ghost terms, may be defined: then one is fascinated to find out that the gauge transformations for the original theory correspond to a set of rigid invariance transformations for the theory on the extended space, i.e., the \textit{BRST transformations}. 
It is important to stress that, altough many manipulations contained in this paper are highly formal, many rigorous results concerning functional integration and path integral can be found in \cite{simon1979functional}, \cite{glimm2012quantum}, \cite{cartier2006functional}.

Section 5 is dedicated to the study of the \textit{Green functions} of free, massive, scalar field, first in flat, then in curved space-time; it is shown that all Green functions can be derived from Feynman's Green function: in flat space-time, one can pass to the \lq\lq momentum'' space and easily verify that the choice of the Green function is the choice of an integration contour which passes around two poles; \textit{in curved space-time}, where \textit{the momentum space is no longer available}, one can nevertheless define the Feynman's Green function: it is the one that obeys the same variational law as finite square matrices and is symmetric. In sight of the curved space-time treatment, it is of fundamental importance to derive, even in the flat case, \textit{a formula for the Feynman propagator involving space-time coordinates only}; therefore, after the introduction of some necessary mathematical tools such as bitensors and some geometric quantities such as the world function (see \cite{hawking1973large}, \cite{milnor1969morse}, \cite{relativity1960general}), \textit{an expansion for the Feynman Green function in curved space-time} is obtained, \textit{valid for small values of the geodetic distance}. It is important to observe that the expansion obtained ceases to exist when the massless limit is taken: the massless case has to be treated with alternative methods.

In the conclusions some applications are outlined, with their related literature.

\section{Field Theories in DeWitt's Formulation}
\subsection{A few words on space-time (I)}
Basic to the whole of quantum field theory is the assumption that space-time, which we shall denote by $M$ (for manifold), has the topological structure 
\begin{equation}
M = \mathcal{R} \times \Sigma
\end{equation}
where $\mathcal{R}$ is the real line and $\Sigma$ is some connected three-dimensional manifold, compact or non-compact. In particular, space-time will be assumed to be endowed with a hyperbolic metric $g$ which admits a foliation of space-time into spacelike sections, each being a complete Cauchy hypersurface (i.e., a spacelike surface which intersects every non spacelike curve \textit{exactly once}) and a topological copy of $\Sigma$. Being characterized by a manifold and a tensor field defined on the manifold, it is more accurate to say that a \lq\lq space-time'' is an equivalence class of pairs $(M,g)$: the equivalence relation is the following:
\begin{equation}
(M,g) \sim (N,h) \longleftrightarrow \exists\psi \in Diff(M) | \; \; N=\psi(M), \; g=\psi^* h,
\end{equation}
where $\psi^*$ is the pullback map associated to $\psi$.
Throughout this work, the following sign convention will be assumed for the signature of the metric tensor: $(-.+,+,...)$.
\subsection{Space of histories and functional differentiation}
In this section, DeWitt's formalism for field theories will be introduced, both bosonic and fermionic ones.

Denote by $\Phi$ the set, or space, of all possible field histories; it will be useful to view $\Phi$ as an infinite-dimensional manifold; in this work the coordinates $\phi^{i}$ will be assumed to be real-valued, whether $c$-type or $a$-type (see \cite{berazin2012method} and \cite{dewitt1992supermanifolds}).  The concept of differentiation on $\Phi$, on which the idea of tangent space at $\phi \in \Phi$ is based, can be introduced through \textit{functional derivative}:

Let $F: \Phi \ni \phi \mapsto F[\phi] \in \Lambda_{\infty}$, where $\Lambda_{\infty}$ is the \textit{supernumber algebra}; $F$ is called a supernumber-valued scalar field or \textit{functional} on $\Phi$, and its value at a point $\phi$ of $\Phi$ is denoted by $F[\phi]$. Let $\delta \phi$ be an infinitesimal variation in $\phi$; it can be represented by a set of functions $\delta \phi^{i}$ on the manifold $M$, where, at each point $x \in M$, the $\delta \phi^{i}(x)$ are components, in the appropriate chart of $\Phi$, of an infinitesimal vector in its tangent space at the point having coordinates $\phi^{i}(x)$. Let the $\delta \phi^{i}(x)$ be $C^{\infty}$ and have compact support in $M$, and let $\delta F[\phi]$ denote the change in value that $F[\phi]$ undergoes in shifting from $\phi$ to $\phi + \delta \phi$. If, for all $\phi \in \Phi$ and for all $C^{\infty}$ variations $\delta \phi$ of compact support, $\delta F[\phi]$ can be written in the form\footnote{The summation convention over repeated indices is assumed throughout this work.}
\begin{equation}
\delta F[\phi] = \int_{M} \delta \phi^{i}(x) \; _{i(x),}F[\phi] \; d^{n}x = \int_{M}   F_{,i(x)}[\phi] \; \delta\phi^{i}(x) \; d^{n}x, \label{funct_var}
\end{equation}
where $_{i(x),}F[\phi]$, $F_{,i(x)}[\phi]$ in the integrands are independent of $\delta \phi^{i}$ and depend at most on $\phi$, then $F$ is called a \textit{differentiable functional} on $\Phi$, and $_{i(x),}F[\phi]$, $ F_{,i(x)}[\phi]$ are called \textit{left} and \textit{right} functional derivatives of $F$, respectively: 
\begin{eqnarray}
_{i(x),}F[\phi] &\equiv & \dfrac{\overrightarrow{\delta}}{\delta\phi(x)^i}F[\phi], \label{left_funct_deriv}\\
F_{,i(x)}[\phi] &\equiv & F[\phi]\dfrac{\overleftarrow{\delta}}{\delta\phi(x)^i}. \label{right_funct_deriv}
\end{eqnarray}
Differentiation will be indicated by a comma followed by one or more indices: Greek indices will denote differentiation with respect to the chart coordinates $x^{\mu}$ in $M$, while Latin indices will denote differentiation with respect to the field coordinates.

In a repeated functional derivative it does not matter whether the left differentiations or the right differentiations are performed first, but the order of the induced indices on either side is important. That is, although left differentiations commute with right differentiations, left differentiations do not generally commute with each other, nor do right differentiations. The laws for interchanging are
\begin{equation}
_{ij',}F= (-1)^{ij'}\,_{j'i,}F \; \; \; \; \; F_{,ij'}= (-1)^{ij'}F_{,j'i},
\end{equation}
where the convention is here adopted that an index or symbol appearing in an exponent of $(-1)$ is to be understood as assuming the value $0$ or $1$ according as the associated quantity is $c$-type or $a$-type. 

If the functional $F$ is \textit{pure}, i.e., either $c$-number-valued or $a$-number-valued, then \eqref{funct_var} implies that its left and right functional derivatives are related by 
\begin{equation}
_{i,}F= (-1)^{i(F+1)}F_{,i}. \label{left_right_der}
\end{equation}
In the previous equation too, the symbol $F$ in the exponent of $(-1)$ assumes the value $0$ if $F$ is a $c$-type quantity, the value $1$ if $F$ is a $a$-type quantity.

When indices appear in exponents of $-1$ a special rule must be introduced regarding the summation convention: although an index appearing in an exponent of $-1$ may participate in the summation induced by its appearance twice elsewhere in a term of a given expression, it may not itself induce a summation. 
\subsection{Condensed and supercondensed notations}
In developing the general formalism of field theory we shall find it often convenient to lump the symbol $x$ with the generic index $i$ and to make the latter do double duty as a discrete label for the field components and as a continuous label for the points 
of space-time. With this notation, the symbol $\tensor{\delta}{^i _{j'}}$ should be understood as a combined $\delta$-distribution Kronecker delta, while Kronecker deltas shall be $\tensor{\underline{\delta}}{^i _j}$ for the sake of clarity:
\begin{equation}
\phi^i \frac{\overleftarrow{\delta}}{\delta \phi^{j'}} \equiv \phi^i(x) \frac{\overleftarrow{\delta}}{\delta \phi^{j}(x')} = \tensor{\delta}{^i _{j'}} \equiv \tensor{\underline{\delta}}{^i _j} \; \delta(x,x')
\end{equation} 
Therefore, it seems natural to establish a new convention: the summation over repeated field indices includes (by virtue of their role as continuous labels) integration over $M$. Hence \eqref{funct_var} takes the form:
\begin{equation}
\delta F[\phi]=\delta \phi^{i} \; _{i,}F[\phi]=F_{,i}[\phi] \; \delta\phi^{i}.
\end{equation}
Sometimes, even this \textit{condensed} notation is cumbersome, and a \textit{supercondensed} notation is used: the indices themselves are suppressed and the following replacement is made:
\begin{equation}
_{i_{r}...i_{1},}F_{j_{1}...j_{s}} \to \; _{r}F_{s}.
\end{equation}
\begin{remark}
	The condensed and supercondensed notations must be used with care because the associative law of multiplication does not always hold. For example, the value of an expression such as $\chi^{i} \, _{i,}F_{,j}\psi^{j}$ may depend on which summation-integration is performed first. They give the same results only in certain cases\footnote{For example, if $F$ is given by the expression $F[\phi]\equiv \frac{1}{2}\int_{M}K_{ij}^{\mu}(x)\phi^{i}(x)\phi^{j}_{,\mu}(x)d^{n}x$ then the reader may easily verify that if the $i$ summation-integration is performed first one gets a result that differs by an amount $-\int_{M}(K_{ij}^{\mu}\phi^{i}\phi^{j})_{,\mu}d^{n}x$ from that obtained when the $j$ summation-integration is performed first. In order to get one result from the other one has to carry out an integration by parts, and this is legitimate only in certain cases, for example if the intersection of the supports of $\chi^{i}$ and $\psi^{j}$ is compact in $M$.}. When the law does not hold, ambiguities in condensed expressions will be removed by the use of parentheses or arrows.
\end{remark}
\subsection{A few words on space-time (II)}
Use of the condensed notation underscores the following point: \textit{The manifold $M$ of space-time, independently of any physical fields that may be imposed on it, is an index set}. Its points are labels that may be lumped together with the indices for field components. 

When $M$ is viewed in this way the notion that \textit{alternative topologies} for space-time may be alternative dynamical possibilities for a given universe makes no sense. Changing the topology of $M$ corresponds to changing the index set, and one cannot change the index set of a theory in midstream. A different index set means a different theory. 

Transitions from one topology to another \textit{could} be followed if space-time were embedded in a higher dimensional manifold endowed with physical properties. But then space-time and its contents would not be all there is; the \lq\lq universe'' would be something bigger. Since nobody has yet developed a successful embedding theory of space-time we shall assume that space-time is the universe and leave its topology fixed.

\subsection{Action functional and dynamical equations}
Throughout this work, the following (fundamental) principle will be postulated: the nature and dynamical properties of a classical dynamical system are completely determined by specifying an \textit{action functional} $S$ for it. 

The action functional is a differentiable real-$c$-number-valued scalar field on the space of histories $\Phi$, i.e., a functionally differentiable mapping $S: \Phi \ni \phi \mapsto S[\phi] \in \mathcal{R}_{c}$. 

The choice of action functional for a given system is not unique but depends on the choice of dynamical variables $\phi^{i}$ used to describe the system and on the boundary conditions that one imposes on the $\phi^{i}$ at the time limits and at spatial infinity. However, all the possible action functionals for a given system must yield equivalent families of \textit{dynamical histories}. A dynamical history is any stationary point of $S$, i.e., any point $\phi$ of $\Phi$ that satisfies 
\begin{equation}
_{i,}S[\phi]= 0 \; \; \; \; \;     \text{or, equivalently}   \; \; \; \; \;  S_{,i}[\phi] =0. \label{dynam_eq}
\end{equation}
The set of all stationary points is called the \textit{dynamical subspace} of $\Phi$, or the \textit{dynamical shell}, and all histories 
satisfying \eqref{dynam_eq} are said to be \textit{on shell}. 

Equations \eqref{dynam_eq} are called the \textit{dynamical equations} of the system. They will be assumed to be \textit{local} in time, i.e., involving no time integrals and not more than a finite number of time derivatives. In a relativistic theory this implies that they must also be local in space. This greatly limits the possible choices for $S$. Throughout this work $S$ will have the general form 
\begin{equation}
S[\phi]= \int_{M} L(\phi^{i},\phi^{i}_{,\mu},x) \, d^{n}x + \text{boundary terms}. \label{action_lagr}
\end{equation}
The integrand of this expression, known as the \textit{Lagrange function} or \textit{Lagrangian}, is a scalar density of unit weight. If the gravitational field is not numbered among the $\phi^{i}$, then $L$ usually has an explicit dependence on a fixed background metric. Expression \eqref{action_lagr} immediately yields
\begin{equation}
0= \, _{i,}S[\phi] \equiv \dfrac{\overrightarrow{\delta}}{\delta\phi^i}L[\phi] - \left(\dfrac{\overrightarrow{\delta}}{\delta\phi^i_{,\mu}}L[\phi]\right)_{,\mu},
\end{equation}
in which the boundary terms do not appear. 

It is worth remarking already at this point that the condition of locality, which is imposed on the dynamical equations largely in order to have easy control over causality, is by no means the only condition that is imposed in practice. Even when the action functional has the structure \eqref{action_lagr} there are additional criteria, of a physical nature, that greatly restrict the Lagrange function itself. For example $L$ must satisfy the constraints imposed by relativistic invariance, either special or general; it should lead to an energy that is bounded from below and the stationary points of the action should be non-trivial.   
\subsection{Invariance transformations}
For many of the most interesting dynamical systems there exists, on the space of histories $\Phi$, a set of \textit{flows} that leave the value of the action invariant. That is, there exists a set of \textit{nowhere vanishing} vector fields $Q_{\alpha}$ on $\Phi$ such that 
\begin{equation}
SQ_{\alpha} \equiv 0. \label{inv_flows}
\end{equation}
Here the vector fields are written as operators acting from the right. They can be expressed in terms of the basis vectors $\dfrac{\overleftarrow{\delta}}{\delta\phi^{i}}$ associated to the coordinate patch on $\Phi$ defined by the $\phi^{i}$:
\begin{equation}
Q_{\alpha}= \dfrac{\overleftarrow{\delta}}{\delta\phi^{i}} \, ^{i}Q_{\alpha}.
\end{equation}
In terms of the components $^{i}Q_{\alpha}$, \eqref{inv_flows} becomes
\begin{equation}
S_{,i} \, ^{i}Q_{\alpha} \equiv 0. \label{inv_flows2}
\end{equation}
Alternative forms are 
\begin{equation}
_{\alpha}QS \equiv 0,   \; \; \; \; \;     \text{or, equivalently}   \; \; \; \; \; _{\alpha}Q^{\sim i}\,_{i,}S \equiv 0,
\end{equation}
where $_{\alpha}Q$ acts from the left, and \lq\lq $\sim$'' denotes the supertranspose: 
\begin{equation}
_{\alpha}Q = (-1)^{\alpha}Q_{\alpha},  \; \; \;  \; \; \;  \,_{\alpha}Q^{\sim i} = (-1)^{\alpha(i+1)} \,^{i}Q_{\alpha}.
\end{equation}
It will be noted that allowance has been made in the previous equation for the possibility that some of the $Q_{\alpha}$ may be $a$-type. The index $\alpha$ is said to be $c$-type or $a$-type according as the vector field that it designates is $c$-type or $a$-type. It should also be noted that eq. \eqref{inv_flows} will generally lead to difficulties at the boundary of the action integral \eqref{action_lagr} unless, for each $\alpha$, the $Q_{\alpha}$ have compact support in $M$. If the $Q_{\alpha}$ must be independent of both boundary conditions and any special coordinate frame in space-time, then they can only be $\delta$-distributions or derivatives of $\delta$-distributions times local functions of the fields and their derivatives. This means that the index $\alpha$, like the index $i$, must include a space-time point and hence range over a continuous infinity of values. 

Because of the invariance equation, the value of the action remains invariant under infinitesimal changes in the dynamical variables of the form
\begin{equation}
\delta\phi^{i}= \,^{i}Q_{\alpha}\delta\xi^{\alpha}. \label{gauge_field}
\end{equation} 
The infinitesimal parameters $\delta\xi^{\alpha}$ of these transformations are $C^{\infty}$ functions over space-time, $c$-number-valued or $a$-number-valued according as the index $\alpha$ is $c$-type or $a$-type. The $\delta\xi^{\alpha}$ will be assumed to be real (i.e., taking their values in either $\mathcal{R}_{c}$ or $\mathcal{R}_{a}$), and since the dynamical variables $\phi^{i}$ are real-valued this implies that the vector fields $Q_{\alpha}$ are real or imaginary according as $\alpha$ is $c$-type or $a$-type. The $\delta\xi^{\alpha}$ are additionally required to have compact support in space-time or else to satisfy such conditions at the time boundaries and at spatial infinity as are needed in order that the integrations in
\begin{equation}
0 \equiv \delta S = S_{,i}\delta\phi^{i}=S_{,i}\,^{i}Q_{\alpha}\delta\xi^{\alpha}
\end{equation} 
be performable in any order.

In the following sections, two important examples of such theories will be presented: \textit{Yang-Mills} theories and \textit{General Relativity}.

\subsection{Yang-Mills theories (I)}
The dynamical object in the $N$-dimensional Yang-Mills theory is a Lie-algebra-valued $1$-form, which can be expressed, in a suitable chart, as follows:
\begin{eqnarray}
A &\equiv & A_{\mu}^{\alpha} \; T_{\alpha}\otimes dx^{\mu} \\
&\equiv & A_{\mu} dx^{\mu},
\end{eqnarray}
where the $T_{\alpha}$ are a basis for the Lie algebra of $SU(N)$, which will be denoted by $su(N)$; for $N \geq 2$, it is the algebra of square anti-hermitian traceless matrices with Lie bracket the commutator; this algebra can be endowed with the Euclidean metric\footnote{this metric is nonsingular if and only if the group is semisimple; additionally, whenever the group is compact, as in this case, then it is positive definite.} 
\begin{equation}
\gamma_{\alpha \beta} \equiv -tr(T_{\alpha}T_{\beta})
\end{equation}
which will be used to lower/raise Lie algebra indices.

Therefore the following Lie-algebra-valued $2$-form is defined:
\begin{eqnarray}
F &\equiv & \frac{1}{2}F_{\mu\nu}^{\alpha} \; T_{\alpha}\otimes \left(dx^{\mu} \wedge dx^{\nu}\right) \\
&\equiv & F_{\mu\nu}^{\alpha} \; T_{\alpha}\otimes dx^{\mu} \otimes dx^{\nu} \\
&\equiv & \left( A_{\nu;\mu}^{\alpha} - A_{\mu;\nu}^{\alpha} +\tensor{f}{^{\alpha}_{\beta\gamma}}A_{\mu}^{\beta}A_{\nu}^{\gamma}\right) T_{\alpha}\otimes dx^{\mu} \otimes dx^{\nu},
\end{eqnarray}
where the semicolon denotes covariant differentiation associated with the Levi-Civita connection $\nabla$, and the $\tensor{f}{^{\alpha}_{\beta\gamma}}$ are the \textit{structure constants} of $su(N)$ associated to the basis of the $T_{\alpha}$, i.e., they satisfy the equation 
\begin{equation}
[T_{\alpha},T_{\beta}] = T_{\gamma}\tensor{f}{^{\gamma}_{\alpha\beta}}.
\end{equation}
The dynamical equations follow from the action functional
\begin{equation}
S_{YM}[A^{\alpha}_{\mu}] \equiv -\frac{1}{4} \int_{M} \sqrt{| \mathrm{g}|} F_{\mu\nu}^{\alpha}F^{\mu\nu}_{\alpha} \, d^{n}x,
\end{equation}
where $\mathrm{g}$ is defined to be ${\rm det}(g_{\mu\nu})$.

It can be readily seen that such an action is invariant under the transformation
\begin{equation}
A_{\mu}(x) \mapsto \tensor*[^{U}]{A}{_{\mu}}(x)\equiv U^{\dagger}(x) A_{\mu}(x) U(x) +  U^{\dagger}(x) U_{,\mu}(x) \label{YM_gauge}
\end{equation}
for every $U: M \ni x \mapsto U(x) \in SU(N)$. Consider an element $T \equiv \omega_{\alpha}T^{\alpha} \in su(N)$; as is well known, \textit{exponentiation} yields an element in $SU(N)$; let now $\omega^{\alpha}(x)$ be a set of real funcions on $M$; then, for every $x \in M$, $U(x) \equiv e^{\omega^{\alpha}(x)T_{\alpha}}$ $\in SU(N)$; if the $\omega^{\alpha}(x)$ are infinitesimal, then $U(x)$ is close to the identical transformation, and \eqref{YM_gauge} reads
\begin{eqnarray}
A^{\gamma}_{\mu} & \mapsto & \tensor*[^U]{A}{_{\mu}^{\gamma}} \equiv A^{\gamma}_{\mu}+ A^{\alpha}_{\mu}\omega^{\beta}\tensor{f}{^{\gamma}_{\alpha\beta}}+\omega^{\gamma}_{,\mu}, \\
\delta A^{\gamma}_{\mu} & \equiv & \tensor*[^U]{A}{_{\mu}^{\gamma}}-A^{\gamma}_{\mu} =A^{\alpha}_{\mu}\omega^{\beta}\tensor{f}{^{\gamma}_{\alpha\beta}}+\omega^{\gamma}_{,\mu}. \label{YM_inf_gauge}
\end{eqnarray}
With DeWitt's notation, eq. \eqref{YM_inf_gauge} can be written
\begin{eqnarray}
\delta A^{\gamma}_{\mu} &=& \tensor*[^{\gamma}_{\mu}]{Q}{_\alpha} \delta \xi^{\alpha}, \\
\tensor*[^{\gamma}_{\mu}]{Q}{_\alpha}(x,x') &\equiv & \delta(x,x') \left( A^{\beta}_{\mu}(x)\tensor{f}{^{\gamma}_{\beta\alpha}} + \underline{\delta}^{\gamma}_{\alpha}\partial_{\mu'} \right)  \\
&=&\delta(x,x') A^{\beta}_{\mu}(x)\tensor{f}{^{\gamma}_{\beta\alpha}} - \underline{\delta}^{\gamma}_{\alpha}\delta(x,x')_{,\mu '}. \label{YM_Q}
\end{eqnarray}
\subsection{General Relativity (I)}
The dynamical object is the Lorentzian metric tensor defined on a manifold $M$; it can be expressed, in a suitable chart, as follows:
\begin{equation}
g=g_{\mu\nu} dx^{\mu} \otimes dx^{\nu}.
\end{equation}
It is important to observe that, if a theory describes nature in terms of a manifold $M$ and tensor fields $T^{(i)}$ defined on the manifold, then if $\phi: M \to N$ is a diffeomorphism, the solutions $(M,T^{(i)})$ and $(N,\phi^{*}T^{(i)})$ have physically identical properties. Any physically meaningful statement about $(M,T^{(i)})$ will hold with equal validity for $(N,\phi^{*}T^{(i)})$. On the other hand, if $(N,\phi^{*}T^{(i)})$ is not related to $(M,T^{(i)})$ by a diffeomorphism, then $(N,\phi^{*}T^{(i)})$ will be physically distinguishable from $(M,T^{(i)})$.

Thus, the \textit{diffeomorphisms} comprise the gauge freedom of any theory formulated in terms of tensor fields on a manifold. In particular, diffeomorphisms comprise the gauge freedom of general relativity.

Now consider the case where $N=M$; let $\phi: M \ni x \mapsto \phi(x) \in M$ be a diffeomorphism; its pullback acts on tensors of type $(0,2)$ as follows:
\begin{eqnarray}
& &\phi^{*}: T_{\phi(x)}^{*}M \otimes T_{\phi(x)}^{*}M \ni g \big|_{\phi(x)}  \mapsto  \phi^{*}g \big|_{\phi(x)} \in T_{x}^{*}M \otimes T_{x}^{*}M, \\
& &\phi^{*}g \big|_{\phi(x)} = g_{\mu\nu} \big|_{\phi(x)}\dfrac{\partial \phi^{\mu}}{\partial x^{\rho}}\bigg|_{x}\dfrac{\partial \phi^{\nu}}{\partial x^{\sigma}}\bigg|_{x}dx^{\rho}\big|_{x} \otimes dx^{\sigma}\big|_{x}. \label{pullback}
\end{eqnarray}
If $\phi$ is close to the identical diffeomorphism, then it can be always seen as an element of the \textit{flow} $\sigma^{\xi}(t,\cdot)\equiv \sigma_{t}$ associated to a \textit{vector field} $\xi$ defined on $M$; hence, for an \lq\lq infinitesimal'' diffeomorphism, eq. \eqref{pullback} reads:
\begin{eqnarray}
\phi^{*}g \big|_{\phi(x)}&=&\sigma^{\xi *}_{\epsilon}g \big|_{\sigma_{\epsilon}^{\xi}(x)}=g_{\mu\nu} \big|_{\sigma^{\xi}_{\epsilon}(x)}\dfrac{\partial \tensor{\sigma}{^{\xi}_{\epsilon}^{\mu}}}{\partial x^{\rho}}\bigg|_{x}\dfrac{\partial \tensor{\sigma}{^{\xi}_{\epsilon}^{\nu}}}{\partial x^{\sigma}}\bigg|_{x}dx^{\rho}\big|_{x} \otimes dx^{\sigma}\big|_{x} \nonumber \\
&=&g\big|_{x}+\epsilon \left(\xi^{\alpha}\partial_{\alpha}g_{\rho\sigma}+g_{\rho\mu}\partial_{\sigma}\xi^{\mu}+g_{\mu\sigma}\partial_{\rho}\xi^{\mu} \right)\big|_{x} dx^{\rho}\big|_{x} \otimes dx^{\sigma}\big|_{x} \nonumber\\
&=& g\big|_{x}+\epsilon \left(\xi_{\rho;\sigma}+\xi_{\sigma;\rho}\right)\big|_{x}dx^{\rho}\big|_{x} \otimes dx^{\sigma}\big|_{x} \nonumber\\
&=& g\big|_{x} + \epsilon \mathcal{L}_{\xi}g \big|_{x}. \label{diff_algebra}
\end{eqnarray}
Equation \eqref{diff_algebra} shows a remarkable result: the Lie algebra of the diffeomorphism group of $M$, $Diff(M)$, is the space of vector fields on $M$ with Lie bracket Lie derivative.

Therefore, General Relativity is invariant under the trasformation
\begin{eqnarray}
g &\mapsto & g + \epsilon \mathcal{L}_{\xi}g, \\
\delta g &=& \epsilon \mathcal{L}_{\xi}g. \label{gauge_gravity}
\end{eqnarray}
for every vector field $\xi$ on $M$.

With DeWitt's notation, eq. \eqref{gauge_gravity} can be written
\begin{eqnarray}
\delta g_{\mu\nu}&=& \tensor*[_\mu_\nu]{Q}{_\rho}\delta \xi^{\rho}, \nonumber \\
\tensor*[_\mu_\nu]{Q}{_\rho}(x,x')&=& \delta(x,x')\left( g_{\mu\rho}\nabla{_\nu}+g_{\nu\rho}\nabla_{\mu}\right)\big|_{x'}\\
&=&\partial_{\rho}g_{\mu\nu}\big|_{x}\delta(x,x')-g_{\mu\rho}\big|_{x}\delta(x,x')_{,\nu}-g_{\nu\rho}\big|_{x}\delta(x,x')_{,\mu}. \label{GR_Q}
\end{eqnarray}
\subsection{Commutator of invariance transformations}
Let $B$ be a functional on $\Phi$; by applying two invariance transformations $Q_{\alpha},Q_{\beta}$ to $G$, one arrives at
\begin{eqnarray}
B\left(Q_{\alpha}Q_{\beta}\right)&\equiv &\left(BQ_{\alpha}\right)Q_{\beta}  \nonumber\\
&=&\left(B_{,i}\tensor[^i]{Q}{_\alpha}\right)Q_{\beta}  \nonumber\\
&=& B_{,i}\tensor[^i]{Q}{_\alpha_{,j}}\tensor[^j]{Q}{_\beta}+(-1)^{j\alpha+ij}B_{,ij}\tensor[^i]{Q}{_\alpha}\tensor[^j]{Q}{_\beta}
\end{eqnarray} 
while
\begin{eqnarray}
\left(BQ_{\beta}\right)Q_{\alpha}&=&B_{,i}\tensor[^i]{Q}{_\beta_{,j}}\tensor[^j]{Q}{_\alpha}+(-1)^{j\beta+ij}B_{,ij}\tensor[^i]{Q}{_\beta}\tensor[^j]{Q}{_\alpha}  \nonumber \\
&=&B_{,i}\tensor[^i]{Q}{_\beta_{,j}}\tensor[^j]{Q}{_\alpha}+(-1)^{i\beta+ji}B_{,ji}\tensor[^j]{Q}{_\beta}\tensor[^i]{Q}{_\alpha}  \nonumber\\
&=&B_{,i}\tensor[^i]{Q}{_\beta_{,j}}\tensor[^j]{Q}{_\alpha}+(-1)^{i\beta}B_{,ij}\tensor[^j]{Q}{_\beta}\tensor[^i]{Q}{_\alpha}  \nonumber\\
&=&B_{,i}\tensor[^i]{Q}{_\beta_{,j}}\tensor[^j]{Q}{_\alpha}+(-1)^{i\beta+i\beta+ij+\alpha\beta+\alpha j}B_{,ij}\tensor[^i]{Q}{_\alpha}\tensor[^j]{Q}{_\beta}  \nonumber\\
&=&B_{,i}\tensor[^i]{Q}{_\beta_{,j}}\tensor[^j]{Q}{_\alpha}+(-1)^{ij+\alpha\beta+\alpha j}B_{,ij}\tensor[^i]{Q}{_\alpha}\tensor[^j]{Q}{_\beta}.
\end{eqnarray} 
Hence one obtains
\begin{eqnarray}
B\left(Q_{\alpha}Q_{\beta}-(-1)^{\alpha\beta}Q_{\beta}Q_{\alpha}\right)&=& B_{,i}\tensor[^i]{Q}{_\alpha_{,j}}\tensor[^j]{Q}{_\beta}-(-1)^{\alpha\beta}B_{,i}\tensor[^i]{Q}{_\beta_{,j}}\tensor[^j]{Q}{_\alpha} \nonumber \\
&=&B_{,i}\left(\tensor[^i]{Q}{_\alpha_{,j}}\tensor[^j]{Q}{_\beta}-(-1)^{\alpha\beta}\tensor[^i]{Q}{_\beta_{,j}}\tensor[^j]{Q}{_\alpha}\right).
\end{eqnarray}
Thus, given two fields $Q_{\alpha},Q_{\beta}$, their \textit{supercommutator} or \textit{super Lie bracket} $[Q_{\alpha},Q_{\beta}]$ is itself a vector field:
\begin{eqnarray}
[Q_{\alpha},Q_{\beta}] &\equiv & Q_{\alpha}Q_{\beta}-(-1)^{\alpha\beta}Q_{\beta}Q_{\alpha},  \label{supercomm} \\ 
\tensor[^i]{[Q_{\alpha},Q_{\beta}]}{}&=&\tensor[^i]{Q}{_\alpha_{,j}}\tensor[^j]{Q}{_\beta}-(-1)^{\alpha\beta}\tensor[^i]{Q}{_\beta_{,j}}\tensor[^j]{Q}{_\alpha}.
\end{eqnarray}
In the particular case where $B$ is the action functional, it is immediately obvious that 
\begin{equation}
S[Q_{\alpha},Q_{\beta}]=0. \label{comm_inv}
\end{equation}
Hence, the commutator of two invariance transformations is an invariance transformation itself. 

It must be pointed out at once that, for \textit{every} dynamical system there exist, on the space of histories $\Phi$, vector fields that, like the $Q_{\alpha}$, yield zero when acting on the action, i.e., vector fields $V$ of the form 
\begin{equation}
V^{i}=S_{,j}\tensor*[^j]{T}{^i}, \label{skew_fields}
\end{equation}
where $T$ is any antisupersymmetric tensor field:\footnote{The components of $T$ should also have the necessary support or rate-of-fall-off properties in 
	space-time for the implicit summation integration in \eqref{skew_fields} to converge. }
\begin{equation}
\tensor*[^j]{T}{^i}=-(-1)^{ij} \; \tensor*[^i]{T}{^j}.
\end{equation}
Such vector fields, however, vanish on the dynamical shell and are not true flows. They will be called \textit{skew fields}.

It will be assumed that all true flows can be expressed, at each point of $\Phi$, as linear combinations of the $Q_{\alpha}$'s and skew fields at that point, i.e., that the $Q_{\alpha}$'s form a pointwise complete set of flows \textit{modulo} skew fields. Pointwise completeness of the $Q_{\alpha}$'s and eq. \eqref{comm_inv} imply that the supercommutator in \eqref{supercomm} must have the general structure 
\begin{equation}
[Q_{\alpha},Q_{\beta}]=Q_{\gamma} \; \tensor{c}{^\gamma _\alpha _\beta} + T_{\alpha \beta} \; \tensor[_1]{S}{}, \label{flow_bracket}
\end{equation}
or, in component form:
\begin{equation}
\tensor*[^i]{[Q_{\alpha},Q_{\beta}]}{} = \tensor[^i]{Q}{_\gamma} \; \tensor{c}{^\gamma_\alpha_\beta}+\tensor[^i]{T}{_\alpha_\beta^j} \;\tensor*[_{j,}]{S}{},
\end{equation}
where the $\tensor{c}{^\gamma_\alpha_\beta}$ are scalar fields on $\Phi$ and the $\tensor[]{T}{_\alpha_\beta}$ are tensor fields, having the symmetries:
\begin{eqnarray}
\tensor{c}{^\gamma_\alpha_\beta}&=&-(-1)^{\alpha\beta} \; \tensor{c}{^\gamma_\beta_\alpha} \\
\tensor[^i]{T}{_\alpha_\beta^j}&=&-(-1)^{\alpha\beta} \; \tensor[^i]{T}{_\beta_\alpha^j}=-(-1)^{ij+(\alpha+\beta)(i+j)} \; \tensor[^j]{T}{_\alpha_\beta^i}.
\end{eqnarray}
In addition to these symmetries the $\tensor{c}{^\gamma_\alpha_\beta}$ and $\tensor[]{T}{_\alpha_\beta}$ must satisfy functional differential conditions imposed by the Jacobi identity
\begin{equation}
[Q_{\alpha},[Q_{\beta},Q_{\gamma}]]\epsilon^{\gamma\beta\alpha}=0,
\end{equation} 
the $\epsilon^{\alpha\beta\gamma}$ being any coefficients completely antisupersymmetric in their indices:
\begin{equation}
\epsilon^{\alpha\beta\gamma}=-(-1)^{\alpha\beta}\epsilon^{\beta\alpha\gamma}=-(-1)^{\beta\gamma}\epsilon^{\alpha\gamma\beta}.
\end{equation}
\subsection{Gauge algebra, gauge groups and orbits}
One may easily verify that the super Lie bracket of any two skew fields is a skew field. 
By functionally differentiating \eqref{inv_flows2}, one obtains:
\begin{eqnarray}
0&=& (S_{,j} \; \tensor*[^j]{Q}{_\alpha})_{,i}  \nonumber \\
&=&S_{,j} \; \tensor*[^j]{Q}{_\alpha_{,i}}+(-1)^{ij+i\alpha} \; S_{,ji} \; \tensor*[^j]{Q}{_\alpha}  \nonumber \\
&=&S_{,j} \; \tensor*[^j]{Q}{_\alpha_{,i}}+(-1)^{ij+i\alpha+ij}\; S_{,ij} \; \tensor*[^j]{Q}{_\alpha}  \nonumber \\
&=& S_{,j} \; \tensor*[^j]{Q}{_\alpha_{,i}}+(-1)^{i\alpha} \; S_{,ij} \;\tensor*[^j]{Q}{_\alpha},  \nonumber \\
S_{,j} \; \tensor*[^j]{Q}{_\alpha_{,i}} &=&-(-1)^{i\alpha} \; S_{,ij} \; \tensor*[^j]{Q}{_\alpha}. \label{Ward_id}
\end{eqnarray}
By using this identity, one can easily verify that the super Lie Bracket of a $Q_{\alpha}$ with a skew field is again a skew field:
\begin{eqnarray}
S_{,i}  \; ^i[Q_{\alpha},S_{,j} \tensor[^j]{T}{^\bullet}] &=& S_{,i} \left( \tensor[^i]{Q}{_{\alpha}_{,k}} \; S_{,j} \; \tensor[^j]{T}{^k} - (-1)^{\alpha T} \left(S_{,j} \; \tensor[^j]{T}{^i}\right)_{,k} \; \tensor[^k]{Q}{_\alpha} \right)  \nonumber\\
&=& S_{,i} \; \tensor[^i]{Q}{_{\alpha}_{,k}} \; S_{,j} \;  \tensor[^j]{T}{^k} - (-1)^{\alpha T} \; S_{,i}S_{,j} \; \tensor[^j]{T}{^i_{,k}} \; \tensor[^k]{Q}{_\alpha}  \nonumber\\ &&-(-1)^{\alpha T +ki+kT+kj} \; S_{,i}S_{,jk} \; \tensor[^j]{T}{^i} \; \tensor[^k]{Q}{_\alpha}  \nonumber\\
&=&S_{,i}\; \tensor[^i]{Q}{_{\alpha}_{,k}} \; S_{,j} \; \tensor[^j]{T}{^k}-(-1)^{\alpha T +ki+kT+kj} \; S_{,i}S_{,jk} \; \tensor[^j]{T}{^i} \; \tensor[^k]{Q}{_\alpha}  \nonumber\\
&=&S_{,i} \; \tensor[^i]{Q}{_{\alpha}_{,k}} \;S_{,j} \; \tensor[^j]{T}{^k}-(-1)^{\alpha T +\alpha i+\alpha T+\alpha j} \;S_{,i}S_{,jk} \; \tensor[^k]{Q}{_\alpha}\; \tensor[^j]{T}{^i}  \nonumber\\
&=& S_{,i} \; \tensor[^i]{Q}{_{\alpha}_{,k}} \; S_{,j} \; \tensor[^j]{T}{^k} +(-1)^{ \alpha i} \;S_{,i}S_{,k} \;\tensor[^k]{Q}{_\alpha_{,j}} \;\tensor[^j]{T}{^i}  \nonumber\\
&=& (-1)^{jk+ j\alpha+ji} \; S_{,i}S_{,j} \; \tensor[^i]{Q}{_{\alpha}_{,k}} \; \tensor[^j]{T}{^k}+(-1)^{ \alpha i} \; S_{,i}S_{,k} \; \tensor[^k]{Q}{_\alpha_{,j}} \; \tensor[^j]{T}{^i}  \nonumber\\
&=& (-1)^{ik+ i\alpha+ji} \; S_{,j}S_{,i} \; \tensor[^j]{Q}{_{\alpha}_{,k}} \; \tensor[^i]{T}{^k}+(-1)^{ \alpha i} \; S_{,i}S_{,j} \;\tensor[^j]{Q}{_\alpha_{,k}} \; \tensor[^k]{T}{^i}  \nonumber\\
&=& (-1)^{ik+ i\alpha} \; S_{,i}S_{,j} \; \tensor[^j]{Q}{_{\alpha}_{,k}} \; \tensor[^i]{T}{^k}+(-1)^{\alpha i}\; S_{,i}S_{,j} \; \tensor[^j]{Q}{_\alpha_{,k}} \; \tensor[^k]{T}{^i}  \nonumber\\
&=& (-1)^{i \alpha} \; \left[ S_{,i}S_{,j} \; \tensor[^j]{Q}{_{\alpha}_{,k}} \left( (-1)^{ik} \; \tensor[^i]{T}{^k} + \tensor[^k]{T}{^i} \right) \right]  \nonumber\\
&=& 0.
\end{eqnarray} 
From these facts, together with eq. \eqref{flow_bracket}, it follows that the set of all vector fields on $\Phi$ of the form
\begin{equation}
\tensor[^i]{Q}{_{\alpha}} \; \tensor{\xi}{^\alpha}+\tensor[^i]{T}{^j} \; \tensor[_{j,}]{S}{},
\end{equation}
the $\xi^{\alpha}$ being arbitrary ($\phi$-dependent) coefficients, $T$ being an arbitrary antisupersymmetric tensor field, form a closed algebra under the super Lie bracket operation. When true flows exist this algebra is called \textit{a gauge algebra}. 

The vector fields $Q_{\alpha}$ characterizing the flows on $\Phi$ are evidently not unique. They are defined only up to transformations of the form 
\begin{equation}
\tensor[^i]{\bar{Q}}{_\alpha}= \tensor[^i]{Q}{_\beta}\; \tensor{X}{^\beta_\alpha}+\tensor[^i]{T}{_\alpha^j} \; \tensor[_{j,}]{S}{}, \label{Qtransform}
\end{equation}
where the $\tensor{X}{^\beta_\alpha}$ are functionally differentiable scalar fields on $\Phi$ which, at each point of $\Phi$, form the elements of an invertible matrix, whose inverse is formed by functionally differentiable scalar fields, too, while $\tensor[^i]{T}{_\alpha^j}$ obey 
\begin{equation}
\tensor[^i]{T}{_\alpha^j}=-(-1)^{ij+(i+j)\alpha} \; \tensor[^j]{T}{_\alpha^i}.
\end{equation} 
It is easy to see that such transformations leave eq. \eqref{flow_bracket} unchanged. It is also easy to see that even when the $Q_{\alpha}$ are fixed, the $T_{\alpha\beta}$ in eq. \eqref{flow_bracket} are not unique but are determined only up to transformations of the form 
\begin{equation}
\tensor[^i]{\bar{T}}{_\alpha_\beta^j}=\tensor[^i]{T}{_\alpha_\beta^j}+\tensor[^i]{Q}{_\gamma}\tensor[^\gamma]{U}{_\alpha_\beta^\delta}\tensor[_\delta]{Q}{^\sim^j}, \label{Ttransform}
\end{equation}
where the coefficients $\tensor[^\gamma]{U}{_\alpha_\beta^\delta}$ satisfy 
\begin{equation}
\tensor[^\gamma]{U}{_\alpha_\beta^\delta}=-(-1)^{\alpha\beta}\;\tensor[^\gamma]{U}{_\beta_\alpha^\delta}=-(-1)^{\gamma\delta+(\gamma+\delta)(\alpha+\beta)}\;\tensor[^\delta]{U}{_\alpha_\beta^\gamma}.
\end{equation} 
By carrying out these transformations, one may often simplify the relations satisfied by the $Q_{\alpha}$. Three cases may be distinguished:
\subsection{Type-I}
The $Q_{\alpha}$ and the $T_{\alpha\beta}$ may be chosen in such a way that the latter vanish and the $\tensor{c}{^\gamma_\alpha_\beta}$ are $\phi$-independent; then eq. \eqref{flow_bracket} becomes
\begin{eqnarray}
[Q_{\alpha},Q_{\beta}]&=&Q_{\gamma} \tensor{c}{^\gamma _\alpha _\beta}, \label{case_i1} \\ 
\tensor{c}{^\gamma _\alpha _\beta_{,i}}&=&0, \label{case_i2}
\end{eqnarray}
and the Jacobi identity implies:
\begin{equation}
\tensor{c}{^\eta _\alpha _\delta}\tensor{c}{^\delta_\beta_\gamma}\tensor{\epsilon}{^\gamma^\beta^\alpha}=0. \label{case_i3}
\end{equation}
In this case the $\tensor{c}{^\gamma_\alpha_\beta}$ are the \textit{structure constants} of an infinite dimensional Lie group known as the \textit{gauge group} of the system. The gauge group, or more correctly, the \textit{proper gauge group} is defined as the set of transformations of $\Phi$ into itself obtained by exponentiating the transformation \eqref{gauge_field} with $\phi$-independent $\xi^{\alpha}$ and taking products of the resulting exponential maps. The proper gauge group is viewed as acting on $\Phi$, and its actions leave $S$ invariant. The \textit{full gauge group} is obtained by appending to the proper gauge group all other $\phi$-independent transformations that leave $S$ invariant and do not arise from global symmetries. Elements of the proper group are sometimes called \textit{little gauge transformations}, while elements of the full group outside the proper group are called \textit{big gauge transformations}. When big gauge transformations exist the gauge group has disconnected components. It should be remarked that for the systems encountered in practice a choice of flow vectors $Q_{\alpha}$ satisfying \eqref{case_i1} \eqref{case_i2} is usually given a priori, and it is not necessary to carry out transformations of the forms \eqref{Qtransform} \eqref{Ttransform} to find them. The closure property expressed by \eqref{case_i1}, which is stronger than eq. \eqref{flow_bracket}, implies that the gauge group decomposes $\Phi$ into subspaces to which the $Q_{\alpha}$ are tangent. These subspaces are known as orbits, and the point-wise linear independence of the $Q_{\alpha}$ implies that each orbit is a copy of the gauge group supermanifold. If the gauge group has disconnected components, then so does $\Phi$ itself. $\Phi$ may be viewed as a principal fibre bundle of which the orbits are the fibres. The base space of this bundle is a supermanifold of which the orbits may be regarded as the points. It is called the \textit{space of orbits}. The action functional is a scalar field on the space of orbits, and one might be tempted to say that it is in this space that the real physics of the system takes place. However, there may exist physical observables that remain invariant under little gauge transformations but not big ones, so a separate \lq\lq physical'' base space should in principle be assigned to each component of $\Phi$. But in practice the amounts by which physical observables change under a big gauge transformation are always dynamically inert. Therefore we shall from now on focus solely on the proper gauge group. 
\subsection{Type-II}
The $T_{\alpha\beta}$ can be made to vanish but the $\tensor{c}{^\gamma_\alpha_\beta}$ cannot be made $\phi$-independent globally on $\Phi$. Equation \eqref{case_i1} continues to hold, and the space of histories is again decomposed into orbits to which the $Q_{\alpha}$ are tangent, but the orbits are not group supermanifolds. If the $Q_{\alpha}$ are pointwise linearly independent, then the components of the orbits are all topologically identical, and each is a \textit{parallelizable} supermanifold. The space of histories may again be viewed as a fibre bundle, and the real physics of the system takes place in the space of orbit components. The Jacobi identity, in this case, implies 
\begin{equation}
(\tensor{c}{^\eta _\alpha _\delta}\tensor{c}{^\delta_\beta_\gamma}-\tensor{c}{^\eta _\alpha _\beta_{,i}}\tensor[^i]{Q}{_\gamma})\tensor{\epsilon}{^\gamma^\beta^\alpha}=0.
\end{equation}
\subsection{Type-III}
The $T_{\alpha\beta}$ cannot be made to vanish globally on $\Phi$. Flow vectors of the form $Q_{\alpha}\xi^{\alpha}$, where the $\xi^{\alpha}$ are $\phi$-dependent, do not by themselves form a closed system under the super Lie bracket operation, except on the dynamical shell. Only the dynamical shell, not the full space of histories $\Phi$, is decomposed into orbits. The space of histories cannot be viewed as a fibre bundle; only the dynamical shell can. This means that although the real physics takes place in the space of orbit components as usual, the dynamics cannot be derived from an action functional on this space. The full space $\Phi$ is needed. 
\subsection{Yang-Mills theories (II)}
In this section it will be shown that Yang-Mills theories are Type-I theories, and their structure functions will be calculated explicitly. By recalling eq. \eqref{YM_Q}, one obtains
\begin{eqnarray}
\tensor[^{\gamma}_{\mu}]{Q}{_{\alpha'}_{,}^{\nu''}_{\delta''}} &\equiv  & \tensor*[^{\gamma}_{\mu}]{Q}{_{\alpha'}} \dfrac{\overleftarrow{\delta}}{\delta A(x'')_{\nu}^{\delta}}  \nonumber \\
&=& \delta (x,x')\tensor*{\delta}{^{\nu ''}_\mu} \tensor*{\underline{\delta}}{^{\rho}_{\delta}} \tensor{f}{^\gamma _\rho _{\alpha '}}  \nonumber\\
&=& \delta (x,x') \tensor*{\delta}{^{\nu ''}_\mu} \tensor{f}{^\gamma _{\delta} _{\alpha '}}.
\end{eqnarray}
Hence
\begin{eqnarray}
& &\tensor[^{\gamma}_{\mu}]{Q}{_{\alpha'}_{,}^{\nu''}_{\delta''}} \; \; \tensor[^{\delta ''}_{\nu ''}]{Q}{_{\beta'''}} =  \nonumber\\
&=& \int_{M} dx'' \left(\delta (x,x')  \tensor*{\delta}{^{\nu ''}_\mu} \tensor{f}{^\gamma _{\delta} _{\alpha '}} \right) \left( \delta(x'',x''') A^{\rho''}_{\nu''}\tensor{f}{^{\delta}_{\rho''}_{\beta'''}} - \tensor*{\delta}{^{\delta ''}_{\beta'''}_{,\nu''}} \right)  \nonumber\\
&=& \delta (x,x')\delta (x,x''')A^{\rho}_{\mu} \tensor{f}{^\gamma _{\delta} _{\alpha '}}\tensor{f}{^\delta _{\rho} _{\beta'''}} - \delta (x,x')\tensor{f}{^\gamma _{\delta} _{\alpha '}}\tensor*{\delta}{^{\delta}_{\beta'''}_{,\mu}};
\end{eqnarray}
by swapping $(\alpha ',x')$ and $(\beta''',x''')$, one obtains:
\begin{eqnarray}
& &\tensor[^{\gamma}_{\mu}]{Q}{_{\beta'''}_{,}^{\nu''}_{\delta''}}\;\; \tensor[^{\delta ''}_{\nu ''}]{Q}{_{\alpha'}} =  \nonumber \\
&=& \delta (x,x')\delta (x,x''')A^{\rho}_{\mu} \; \tensor{f}{^\gamma _{\delta} _{\beta'''}}\; \tensor{f}{^\delta _{\rho} _{\alpha'}} - \delta (x,x''')\tensor{f}{^\gamma _{\delta} _{\beta'''}}\;\tensor*{\delta}{^{\delta}_{\alpha'}_{,\mu}},
\end{eqnarray}
therefore
\begin{eqnarray}
& &\tensor*[^\gamma _\mu]{[Q_{\alpha},Q_{\beta}]}{}=  \nonumber\\
&=&\tensor[^{\gamma}_{\mu}]{Q}{_{\alpha'}_{,}^{\nu''}_{\delta''}}\;\;\tensor[^{\delta ''}_{\nu ''}]{Q}{_{\beta'''}}-\tensor[^{\gamma}_{\mu}]{Q}{_{\beta'''}_{,}^{\nu''}_{\delta''}}\;\;\tensor[^{\delta ''}_{\nu ''}]{Q}{_{\alpha'}}  \nonumber\\
&=&\delta (x,x')\delta (x,x''')A^{\rho}_{\mu} \left( \tensor{f}{^\gamma _{\delta} _{\alpha '}}\tensor{f}{^\delta _{\rho} _{\beta'''}}- \tensor{f}{^\gamma _{\delta} _{\beta'''}}\tensor{f}{^\delta _{\rho} _{\alpha'}} \right)   \nonumber\\ &&- \left(\delta (x,x')\tensor{f}{^\gamma _{\delta} _{\alpha '}}\tensor*{\delta}{^{\delta}_{\beta'''}_{,\mu}} - \delta (x,x''')\tensor{f}{^\gamma _{\delta} _{\beta'''}}\tensor*{\delta}{^{\delta}_{\alpha'}_{,\mu}}\right).
\end{eqnarray}
The first term contains  $\tensor{f}{^\gamma _{\delta} _{\alpha '}}\tensor{f}{^\delta _{\rho} _{\beta'''}}- \tensor{f}{^\gamma _{\delta} _{\beta'''}}\tensor{f}{^\delta _{\rho} _{\alpha'}}$; by using the Jacobi identity for the structure constants and their antisimmetry in the lower indices, one obtains:
\begin{equation}
\tensor{f}{^\gamma _{\delta} _{\alpha '}}\tensor{f}{^\delta _{\rho} _{\beta'''}}- \tensor{f}{^\gamma _{\delta} _{\beta'''}}\tensor{f}{^\delta _{\rho} _{\alpha'}} = \tensor{f}{^\gamma _{\rho} _{\delta}}\tensor{f}{^\delta _{\alpha'}_{\beta'''}},
\end{equation}  
therefore the first term is
\begin{eqnarray}
& &\delta (x,x')\delta (x,x''')A^{\rho}_{\mu} \tensor{f}{^\gamma _{\rho}_{\delta}}\tensor{f}{^\delta _{\alpha'}_{\beta'''}}=  \nonumber\\
&=& \int_{M}dx'' \tensor{f}{^{\delta''}_{\alpha'}_{\beta'''}}\delta (x'',x')\delta (x'',x''') \left(\delta(x,x'')A^{\rho}_{\mu} \tensor{f}{^\gamma _{\rho} _{\delta''}}\right)
\end{eqnarray} 
the second term is, instead:
\begin{eqnarray}
& &- \left(\delta (x,x')\tensor{f}{^\gamma _{\delta} _{\alpha '}}\tensor*{\delta}{^{\delta}_{\beta'''}_{,\mu}} - \delta (x,x''')\tensor{f}{^\gamma _{\delta} _{\beta'''}}\tensor*{\delta}{^{\delta}_{\alpha'}_{,\mu}}\right) =  \nonumber\\
&=& - \left(\delta (x,x')\tensor{f}{^\gamma _{\beta'''} _{\alpha '}}\tensor*{\delta(x,x''')}{_{,\mu}} - \delta (x,x''')\tensor{f}{^\gamma _{\alpha'} _{\beta'''}}\tensor*{\delta(x,x')}{_{,\mu}} \right) \nonumber\\
&=&-\tensor{f}{^\gamma_{\beta'''}_{\alpha '}}\left( \delta (x,x')\tensor*{\delta(x,x''')}{_{,\mu}}+\tensor*{\delta(x,x')}{_{,\mu}}\delta (x,x''') \right) \nonumber \\
&=&\tensor{f}{^\gamma_{\alpha'}_{\beta'''}} \left( \delta (x,x')\tensor*{\delta(x,x''')}{}+\tensor*{\delta(x,x')}{}\delta (x,x''') \right)_{,\mu}  \nonumber\\
&=&\int_{M}dx'' \tensor{f}{^\gamma_{\alpha'}_{\beta'''}} \delta (x,x'')\left( \delta (x'',x')\tensor*{\delta(x'',x''')}{}\right)_{,\mu}  \nonumber\\
&=&-\int_{M}dx'' \tensor{f}{^\gamma_{\alpha'}_{\beta'''}} \delta (x,x'')_{,\mu} \left( \delta (x'',x')\tensor*{\delta(x'',x''')}{}\right)  \nonumber\\
&=&-\int_{M}dx'' \tensor{f}{^{\delta''}_{\alpha'}_{\beta'''}} \tensor*{\delta}{^{\gamma}_{\delta''}_{,\mu}} \delta (x'',x')\tensor*{\delta(x'',x''')}{} \nonumber \\
&=&\int_{M}dx'' \tensor{f}{^{\delta''}_{\alpha'}_{\beta'''}} \delta (x'',x')\tensor*{\delta(x'',x''')}{} \left(-\tensor*{\delta}{^{\gamma}_{\delta''}_{,\mu}} \right)
\end{eqnarray}
Putting it all together, one obtains, eventually:
\begin{eqnarray}
& &\tensor*[^\gamma _\mu]{[Q_{\alpha},Q_{\beta}]}{}=  \nonumber\\
&=&\int_{M}dx'' \tensor{f}{^{\delta''}_{\alpha'}_{\beta'''}} \delta (x'',x')\tensor*{\delta(x'',x''')}{} \left(\delta(x,x'')A^{\rho}_{\mu} \tensor{f}{^\gamma _{\rho} _{\delta''}}-\tensor*{\delta}{^{\gamma}_{\delta''}_{,\mu}} \right)  \nonumber\\
&=& \int_{M}dx'' \tensor*[^\gamma _\mu]{Q}{_{\delta''}} \tensor{f}{^{\delta''}_{\alpha'}_{\beta'''}} \delta (x'',x')\tensor*{\delta(x'',x''')}{} \label{YM_dim}
\end{eqnarray}
Equation \eqref{YM_dim} shows that Yang-Mills theories are Type-I theories, with structure constants
\begin{equation}
\tensor{c}{^{\delta''} _{\alpha'} _{\beta'''}} \equiv \tensor{f}{^\delta _{\alpha} _{\beta}} \delta (x'',x')\delta(x'',x''').
\end{equation}
\subsection{General Relativity (II)}
By recalling the commutation law for the Lie derivative, one obtains
\begin{equation}
[\mathcal{L}_{X},\mathcal{L}_{Y}]=\mathcal{L}_{[X,Y]};
\end{equation}
But $[X,Y]\big|_{x}=\left(X^{\nu}\partial_{\nu}Y^{\mu}-Y^{\nu}\partial_{\nu}X^{\mu}\right)\big|_{x}\partial_{\mu}\big|_{x} $ can be expressed as
\begin{eqnarray}
[X,Y]\big|_{x} &=& \int_{M} dx' \int_{M} dx'' \tensor{c}{^{\sigma} _{\mu'} _{\nu''}}X^{\mu'}Y^{\nu''},  \\
\tensor{c}{^{\sigma} _{\mu'} _{\nu''}}&=& \tensor*{\delta}{^\sigma_{\mu'}_{;\tau}}\delta^{\tau}_{\nu''}-\tensor*{\delta}{^\sigma_{\nu''}_{;\tau}}\delta^{\tau}_{\mu'}, \label{str_f_GR}
\end{eqnarray}
as is straightforward to verify. These equations show that General Relativity is a Type-I theory too, and its structure functions are given by \eqref{str_f_GR}.
\subsection{Physical Observables}
A change in the dynamical variables of the form \eqref{gauge_field} leaves the action functional invariant. Such changes therefore play no role in determining the dynamical shell. Moreover, they map the dynamical shell into itself, as may be seen by varying the dynamical equations and making use of eq. \eqref{Ward_id}:
\begin{eqnarray}
\delta \tensor*[_{j,}]{S}{}&=& \tensor*[_{j,}]{S}{_{,i}}\delta \phi^{i}  \nonumber\\
&=&\tensor*[_{j,}]{S}{_{,i}} \tensor*[^i]{Q}{_\alpha}\delta \xi^{\alpha}  \nonumber\\
&=& -(-1)^{j\alpha+j}\tensor*{S}{_{,i}} \tensor*[^i]{Q}{_\alpha_{,j}}\delta \xi^{\alpha},
\end{eqnarray}
therefore
\begin{equation}
\tensor*[_{j,}]{S}{}=0 \implies \delta \tensor*[_{j,}]{S}{}=0.
\end{equation}
Hence transformations generated by the $Q_{\alpha}$ are unphysical. No functional of the dynamical variables that is affected by them can be a physical quantity. Conversely, any functional that \textit{is} invariant under \eqref{gauge_field} will be called a \textit{physical observable}. In the classical 
theory this nomenclature constitutes an abuse of language because both $c$-type and $a$-type quantities can be invariant under \eqref{gauge_field}, and of course nobody can observe an $a$-number. However, the quantum counterpart of a real-valued classical observable, whether $c$-type or $a$-type, will, for any valid physical theory, be a self-adjoint linear operator in the super Hilbert space of the full quantum theory, having ordinary real numbers as eigenvalues. 

It is useful to distinguish two types of invariants under \eqref{gauge_field}: \textit{absolute invariants} and \textit{conditional invariants}. An absolute invariant $A$ is a functional of the $\phi^{i}$ that is invariant under \eqref{gauge_field} at all points of $\Phi$. It satisies
\begin{equation}
A Q_{\alpha} = A_{,i}\; \tensor[^i]{Q}{_\alpha} = 0   \; \; \; \; \; \forall\phi \in \Phi. \label{abs_inv}
\end{equation}
The action functional is always an absolute invariant. A conditional invariant $B$ is a funcional of the $\phi^{i}$ that is invariant under \eqref{gauge_field} on shell but not everywhere on $\Phi$. It typically satisfies
\begin{equation}
B Q_{\alpha} = B_{,i} \; \tensor[^i]{Q}{_\alpha} = S_{,i} \; \tensor[^i]{b}{_\alpha}   \; \; \; \; \; \forall\phi \in \Phi,
\end{equation}
where the $\tensor[^i]{b}{_\alpha}$ are certain $\phi^{i}$-dependent coefficients. A simple example of a conditional invariant is 
\begin{equation}
\bar{A}= A+ S_{,i} \;\tensor[^i]{a}{}, \label{obs_transf}
\end{equation}
where $A$ is an absolute invariant and $\tensor[^i]{a}{}$ are arbitrary  $\phi^{i}$-dependent coefficients.

A physical observable may be either an absolute invariant or a conditional invariant. In a physical situation (i.e., when $\phi$ is on shell) there is in fact no distinction between the two. As a functional of the $\phi^{i}$ a physical observable is really defined only modulo the dynamical equations, i.e., up to transformations of the form \eqref{obs_transf}.
\subsection{Gauge groups and manifest covariance}
\textit{Manifest covariance} refers to the following facts: the group transformation laws for the various symbols that appear in the theory may be inferred simply from the position and nature of their indices, and both sides of any equation transform similarly. This applies for Type-I theories when the group realization is linear, i.e.  
\begin{equation}
\tensor[^i]{Q}{_\alpha_{,jk}}=0. \label{linearity}
\end{equation}
In fact, by deriving \eqref{case_i1}, one obtains:
\begin{equation}
\tensor[^i]{Q}{_\alpha_{,j}}\;\tensor[^j]{Q}{_\beta_{,k}}-(-1)^{\alpha\beta}\;\tensor[^i]{Q}{_\beta_{,j}}\;\tensor[^j]{Q}{_\alpha_{,k}}= (-1)^{k(\alpha+\beta+\gamma)}\;\tensor[^i]{Q}{_\gamma_{,k}}\;\tensor{c}{^\gamma _\alpha _\beta}, \label{def_rep}
\end{equation}
which implies that the matrices $(\tensor[^i]{Q}{_\alpha_{,j}})$ (which, in view of eq. \eqref{linearity}, are $\phi$-independent) generate a representation of the Lie algebra of the gauge group and, by exponentiation, of the gauge group itself. Call this representation the \textit{defining representation} and call the contragradient representation (generated by the negative (super)transposes of the above matrices) the \textit{co-defining representation}. Similarly, eq. \eqref{case_i3} implies that $(\tensor{c}{^\gamma _\alpha _\beta})$ generate a representation of the Lie algebra of the gauge group too: call it \textit{adjoint representation}; call \textit{co-adjoint representation} the representation generated by the negative (super)transpose of the structure constants.

Given an absolute invariant $A$, one can take subsequent derivatives of eq. \eqref{abs_inv} and use \eqref{linearity}; the first two derivatives yield:
\begin{eqnarray}
(A_{,i} \tensor[^i]{Q}{_{\alpha}})_{,j}&=&0,  \nonumber\\
(-1)^{ij+j\alpha}A_{,ij} \tensor[^i]{Q}{_{\alpha}}&=& -A_{,i} \tensor[^i]{Q}{_{\alpha}_{,j}}; \\
& &  \nonumber\\
(A_{,i} \tensor[^i]{Q}{_{\alpha}})_{,jk}&=&0,  \nonumber\\
(A_{,i} \tensor[^i]{Q}{_{\alpha}_{,j}}+(-1)^{ij+j\alpha}A_{,ij} \tensor[^i]{Q}{_{\alpha}})_{,k}&=&0,  \nonumber\\
(-1)^{ij+j\alpha+\alpha k+ik}A_{,ijk} \tensor[^i]{Q}{_{\alpha}}&=&-(-1)^{jk+j\alpha+ij}A_{,ik} \tensor[^i]{Q}{_{\alpha}_{,j}}  \nonumber\\ &&-(-1)^{ij+j\alpha}A_{,ij} \tensor[^i]{Q}{_{\alpha}_{,k}}.
\end{eqnarray}
These identities relate functional derivatives of any absolute invariant of adjacent order; when the functional under observation is the action functional, these derivatives are called \textit{vertex functions}, and these identities are called \textit{bare Ward identities}. They imply the transformation laws
\begin{eqnarray}
\delta A_{,j} &\equiv & A_{,ji}\delta \phi^{i}  \nonumber\\
&=& A_{,ji}\tensor[^i]{Q}{_{\alpha}}\delta\xi^{\alpha}  \nonumber\\
&=& (-1)^{ij}A_{,ij}\tensor[^i]{Q}{_{\alpha}}\delta\xi^{\alpha}  \nonumber\\
&=& (-1)^{j\alpha}(-1)^{ij+j\alpha}A_{ji}\tensor[^j]{Q}{_{\alpha}}\delta\xi^{\alpha}  \nonumber \\
&=& (-1)^{j\alpha}\left[(A_{,i} \tensor[^i]{Q}{_{\alpha}})_{,j}-A_{,i} \tensor[^i]{Q}{_{\alpha}_{,j}}\right]\delta\xi^{\alpha}  \nonumber\\
&=& -(-1)^{j\alpha}A_{,i} \tensor[^i]{Q}{_{\alpha}_{,j}}\delta\xi^{\alpha}; \\
& &  \nonumber\\
\delta A_{,jk} &\equiv & A_{,jki}\delta \phi^{i}  \nonumber\\
&=& A_{,jki}\tensor[^i]{Q}{_{\alpha}}\delta\xi^{\alpha}  \nonumber\\ 
&=& (-1)^{ij+ik}A_{,ijk}\tensor[^i]{Q}{_{\alpha}}\delta\xi^{\alpha}  \nonumber\\
&=& (-1)^{j\alpha+k\alpha}(-1)^{ij+ik+j\alpha+k\alpha}A_{,ijk}\tensor[^i]{Q}{_\alpha}\delta\xi^{\alpha}  \nonumber\\
&=& (-1)^{j\alpha+k\alpha}[(A_{,i} \tensor[^i]{Q}{_{\alpha}})_{,jk}-(-1)^{jk+j\alpha+ij}A_{,ik} \tensor[^i]{Q}{_{\alpha}_{,j}}  \nonumber\\ &&-(-1)^{ij+j\alpha}A_{,ij} \tensor[^i]{Q}{_{\alpha}_{,k}} ]\delta\xi^{\alpha}  \nonumber\\
&=& -(-1)^{j\alpha+k\alpha}[(-1)^{jk+j\alpha+ij}A_{,ik} \tensor[^i]{Q}{_{\alpha}_{,j}}  \nonumber\\ &&+(-1)^{ij+j\alpha}A_{,ij} \tensor[^i]{Q}{_{\alpha}_{,k}}]\delta\xi^{\alpha}  \nonumber\\
&=& -(-1)^{k\alpha+kj+ij}A_{,ik} \tensor[^i]{Q}{_{\alpha}_{,j}}\delta\xi^{\alpha}-(-1)^{k\alpha}A_{,ji} \tensor[^i]{Q}{_{\alpha}_{,k}}\delta\xi^{\alpha};
\end{eqnarray}
analogous equations hold for higher order derivatives.

The above equations show that the functional derivatives of absolute invariants transform according to direct products of the codefining representation. Equation \eqref{case_i1} may itself be regarded as a transformation law:
\begin{eqnarray}
\delta \tensor[^i]{Q}{_{\alpha}}&\equiv & \tensor[^i]{Q}{_{\alpha}_{,j}}\; \tensor[^j]{Q}{_{\beta}}\;\delta\xi^{\beta}  \nonumber\\
&=&(\tensor[^i]{[Q_{\alpha},Q_{\beta}]}{} + (-1)^{\alpha\beta}\;\tensor[^i]{Q}{_{\alpha}_{,j}}\;\tensor[^j]{Q}{_{\beta}})\delta\xi^{\beta} \nonumber \\
&=&(\tensor[^i]{Q}{_{\gamma}}\;\tensor{c}{^\gamma _\alpha _\beta}+(-1)^{\alpha\beta}\;\tensor[^i]{Q}{_{\beta}_{,j}}\;\tensor[^j]{Q}{_{\alpha}})\delta\xi^{\beta}  \nonumber\\
&=&(-\tensor[^i]{Q}{_{\gamma}}\;\tensor{c}{^\gamma _\beta _\alpha}+(-1)^{\alpha\beta}\;\tensor[^i]{Q}{_{\beta}_{,j}}\;\tensor[^j]{Q}{_{\alpha}})\delta\xi^{\beta},
\end{eqnarray}
which says that $\tensor[^i]{Q}{_{\alpha}}$ transforms according to the direct product of the defining representation and the coadjoint representation.

Hence, when the group realization is linear, quite generally, field indices (Latin) and group indices (Greek) signal respectively the defining representation and the adjoint representation when they are in the upper position and the contragradient representations when they are in the lower position. 

One may then wonder whether the realization can be always made linear for Type-I theories: for Yang-Mills theories and General Relativity, this is possible, as has been shown in the previous sections, but the answer is not known in general. However, certain results in the theory of finite-dimensional compact Lie groups are suggestive in this connection. Palais \cite{palais1957imbedding} and Mostow \cite{mostow1957equivariant} showed that if a manifold is acted on by a compact Lie group with finitely many orbit types, then it can be embedded into some finite-dimensional linear, homogeneous, orthogonal representation. Moreover, results of this kind can usually be extended to the case of finite-dimensional semisimple Lie groups whether compact or not. 

If similar results could be extended to \textit{field realizations of gauge groups} (which are infinite-dimensional), then one could simply add enough extra fields to Type-I systems to make the realization linear. The extra fields could be made dynamically innocuous by inclusion of appropriate Lagrange-multiplier fields in the action. There is one difference between the finite-dimensional and field theoretical cases that apparently cannot be eliminated: in the case of fields the variables $\phi^i$ cannot always be chosen in such a way as to yield a realization that is simultaneously linear and homogeneous. In the case of the Yang-Mills field the infinitesimal gauge transformation law \eqref{YM_inf_gauge} includes an inhomogeneous term that cannot be removed by any choice of variables. In the case of the Maxwell field the inhomogeneous term is all there is.  
\subsection{Equation of small disturbances}
Let $\phi^i$ and $\phi^i+\delta\phi^i$ be two neighboring solutions of the dynamical equations \eqref{dynam_eq}:
\begin{eqnarray}
0&=&\tensor[_{i,}]{S}{}[\phi], \\
0&=&\tensor[_{i,}]{S}{}[\phi+\delta\phi]=\tensor[_{i,}]{S}{}[\phi]+\tensor[_{i,}]{S}{_{,j}}[\phi]\delta\phi^j + ... ,
\end{eqnarray}
where the dots stand for terms which are at least quadratic in $\delta \phi^i$.

Evidently, to first order in $\delta \phi^i$, we have:
\begin{equation}
\tensor[_{i,}]{S}{_{,j}}[\phi]\delta\phi^j =0.
\end{equation}
This is called \textit{homogeneous equation of small disturbances}. Its solutions are known as \textit{Jacobi fields} relative to the on-shell field $\phi$. In the following equations the argument $\phi$ will often be suppressed. In practice a small disturbance is produced by a weak external agent, which may be described by a small change in the functional form of the action. Let $A$ be a pure real-valued scalar field on $\Phi$ and $\epsilon$ be an infinitesimal real $c$-number or imaginary $a$-number according as $A$ is $c$-type or $a$-type; therefore $A\epsilon$ is a real valued $c$-type scalar field, and the following change in the functional form of the action is admissible:
\begin{equation}
S[\phi] \mapsto S[\phi] + A[\phi]\epsilon \label{action_var}
\end{equation}
Let $\delta\phi^i$ be a solution of 
\begin{equation}
\tensor[_{i,}]{S}{_{,j}}\delta\phi^j = - \tensor[_{i,}]{A \epsilon}{}. \label{inhom_sd}
\end{equation}
It is easy to see that, neglecting higher order terms, $\phi^i+\delta\phi^i$ satisfies the dynamical equations of the system $S + A\epsilon$ if and only if $\phi^i$ satisfies those of the system $S$:
\begin{eqnarray}
\tensor[_{i,}]{(S+A\epsilon)}{}[\phi+\delta\phi]&=& \tensor[_{i,}]{S}{}[\phi+\delta\phi]+\tensor[_{i,}]{A}{}[\phi+\delta\phi]\epsilon   \nonumber\\
&=&\tensor[_{i,}]{S}{}[\phi]+\tensor[_{i,}]{S}{_{,j}}[\phi] \; \delta\phi^j+\tensor[_{i,}]{A}{}[\phi]\;\epsilon +\tensor[_{i,}]{A}{_{,j}}[\phi]\;\delta\phi^j\;\epsilon + ...  \nonumber\\
&=&\tensor[_{i,}]{S}{}+(\tensor[_{i,}]{S}{_{,j}}\delta\phi^j+\tensor[_{i,}]{A}{}\epsilon) + ... .
\end{eqnarray} 
Equation \eqref{inhom_sd} is called \textit{inhomogeneous equation of small disturbances}. Its general solution is obtained by adding to a particular solution an arbitrary Jacobi field.

When the dynamical equation \eqref{dynam_eq} is satisfied, eq. \eqref{Ward_id} reads:
\begin{equation}
\tensor[_\alpha]{Q}{^{\sim}^i}\tensor[_{i,}]{S}{_{,j}}=0 \; \; \; (\tensor{S}{_{,j}}=0) \label{Ward_id_shell}
\end{equation}
When applied to \eqref{inhom_sd}, by virtue of the arbitrariness of $\epsilon$, this equation implies 
\begin{equation}
\tensor[_\alpha]{Q}{^{\sim}^i}\tensor[_{i,}]{A}{}=0 \; \; \; (\tensor{S}{_{,j}}=0).
\end{equation}
Therefore eq. \eqref{inhom_sd} is seen to be inconsistent unless $A$ is a conditional invariant, i.e., a physical observable. If this is not the case, then the solution of $\tensor[_{i,}]{(}{}S+A\epsilon)=0$ cannot differ from $\phi$ by infinitesimal amounts. Evidently small changes in the action functional will produce small changes in the on-shell dynamical variables only if they leave intact the flow invariances of the theory. 
\subsection{Supplementary conditions}
Equation \eqref{Ward_id_shell} implies that, on the dynamical shell
\begin{equation}
\tensor[_{i,}]{S}{_{,j}}\tensor[^j]{Q}{_{\alpha}}\delta\xi^{\alpha}=0 \; \; \; (\tensor{S}{_{,j}}=0)
\end{equation}
for every $\delta\xi^{\alpha}$ of compact support in space-time: this implies that $\tensor[_1]{S}{_1}$ is not an invertible operator; hence, it has no Green's functions. 

When $\tensor[^i]{Q}{_{\alpha}}\delta\xi^{\alpha}$ is added to a solution of eq. \eqref{inhom_sd}, the result is another solution: however, they are physically identical, since they differ merely by an invariance transformation \eqref{gauge_field}. It is convenient to remove this redundancy by imposing a differential \textit{supplementary condition} on the small disturbances $\delta\phi^i$, of the form\footnote{Position and nature of the indices of the auxiliary distributions introduced throughout this work is not accidental: when dealing with Type-I theories with linear gauge group realization, they show how these distributions must transform under gauge transformations. }
\begin{equation}
\tensor[_\alpha]{P}{_i}\delta\phi^{i}=0. \label{supp_cond} 
\end{equation}
The supplementary condition is effective as long as the operator
\begin{equation}
\tensor[_\alpha]{\mathcal{F}}{_\beta}\equiv \tensor[_\alpha]{P}{_j}\tensor[^j]{Q}{_{\beta}} \label{Fcal}
\end{equation}
is nonsingular and has Green's functions; in fact, given a Jacobi field $\delta\phi^{i}$, all the physically identical solutions can be written as:
\begin{equation}
\delta\phi^{i}+\tensor[^i]{Q}{_{\alpha}}\delta\xi^{\alpha};
\end{equation}
by imposing the supplementary condition \eqref{supp_cond}, one obtains
\begin{eqnarray}
\tensor[_\alpha]{P}{_i}(\delta\phi^{i}+\tensor[^i]{Q}{_{\beta}}\delta\xi^{\beta})&=&0  \nonumber\\
\tensor[_\alpha]{P}{_i}\delta\phi^{i}+\tensor[_\alpha]{P}{_i}\tensor[^i]{Q}{_{\beta}}\delta\xi^{\beta}&=&0  \nonumber\\
\tensor[_\alpha]{P}{_i}\delta\phi^{i}+\tensor[_\alpha]{\mathcal{F}}{_\beta}\delta\xi^{\beta}&=&0.
\end{eqnarray}
Being $\tensor[_\alpha]{\mathcal{F}}{_\beta}$ invertible, this equation determines $\delta\xi^{\alpha}$ and, therefore, the solution $\delta\phi^{i}+\tensor[^i]{Q}{_{\alpha}}\delta\xi^{\alpha}$.

Let $\eta$ be a local, continuous, nonsingular, supersymmetric matrix whose elements are:
\begin{equation}
\eta^{\alpha\beta}=(-1)^{\alpha+\beta}\eta^{\beta\alpha};
\end{equation} 
introduce the following differental operator:
\begin{equation}
\tensor[_i]{F}{_j}\equiv \tensor[_{i,}]{S}{_{,j}} + \tensor[_i]{P}{^{\sim}_\alpha}\eta^{\alpha \beta}\tensor[_\beta]{P}{_j}, \label{F_op}
\end{equation}
where $\tensor[_i]{P}{^{\sim}_\alpha}$ is the supertranspose of $\tensor[_\alpha]{P}{_i}$:
\begin{equation}
\tensor[_i]{P}{^{\sim}_\alpha}=(-1)^{i+\alpha+i\alpha}\;\tensor[_\alpha]{P}{_i}.
\end{equation}
It is easy to see that $\tensor[_i]{F}{_j}$ has the same supersymmetry properties as $\tensor[_{i,}]{S}{_{,j}}$, i.e. it is supersymmetric:
\begin{equation}
\tensor[_i]{F}{_j}=(-1)^{i+j+ij}\tensor[_j]{F}{_i}. \label{supersymm_F}
\end{equation}
From now on, we will assume that the kernel of the linear differential operator $\tensor[_{i,}]{S}{_{,j}}$ consists of the fields of the form $\tensor[^i]{Q}{_{\alpha}}\delta\xi^{\alpha}$, with $\delta\xi^{\alpha}$ of compact support in space-time: this is true in all the practical cases. Hence, the folloing holds:
\begin{theorem}
	If $\tensor[_\alpha]{\mathcal{F}}{_\beta}$ is nonsingular, then $\tensor[_i]{F}{_j}$ is nonsingular too.
\end{theorem}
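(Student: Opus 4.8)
The plan is to establish that $\tensor[_i]{F}{_j}$ has trivial kernel; since $\tensor[_i]{F}{_j}$ is supersymmetric by \eqref{supersymm_F}, this is precisely what \emph{nonsingular} means for it, exactly as a symmetric matrix is invertible if and only if its kernel is trivial (supersymmetry ties the cokernel to the kernel, so injectivity is the whole story). I work at a dynamical history, i.e.\ at an on-shell background $\phi$, so that the Ward identity \eqref{Ward_id_shell} is available and the fields $\tensor[^i]{Q}{_\alpha}\delta\xi^\alpha$ indeed lie in the kernel of $\tensor[_{i,}]{S}{_{,j}}$, consistently with the standing hypothesis. Suppose then that $\psi^j$ satisfies $\tensor[_i]{F}{_j}\psi^j=0$, that is
\begin{equation}
\tensor[_{i,}]{S}{_{,j}}\psi^j + \tensor[_i]{P}{^{\sim}_\alpha}\eta^{\alpha\beta}\tensor[_\beta]{P}{_j}\psi^j = 0.
\end{equation}

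First I would contract this equation from the left with $\tensor[_\gamma]{Q}{^{\sim}^i}$. The term $\tensor[_\gamma]{Q}{^{\sim}^i}\tensor[_{i,}]{S}{_{,j}}\psi^j$ vanishes identically by \eqref{Ward_id_shell}. What survives is $\tensor[_\gamma]{Q}{^{\sim}^i}\tensor[_i]{P}{^{\sim}_\alpha}\eta^{\alpha\beta}\tensor[_\beta]{P}{_j}\psi^j$, and the crucial observation is that $\tensor[_\gamma]{Q}{^{\sim}^i}\tensor[_i]{P}{^{\sim}_\alpha}$ is exactly the supertranspose $\tensor[_\gamma]{\mathcal{F}}{^{\sim}_\alpha}$ of the operator $\tensor[_\alpha]{\mathcal{F}}{_\gamma}=\tensor[_\alpha]{P}{_i}\tensor[^i]{Q}{_\gamma}$ of \eqref{Fcal}, since the supertranspose reverses the order of a product. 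As $\tensor[_\alpha]{\mathcal{F}}{_\beta}$ is nonsingular, so is its supertranspose, and $\eta$ being nonsingular as well, the product $\tensor[_\gamma]{\mathcal{F}}{^{\sim}_\alpha}\eta^{\alpha\beta}$ is nonsingular. I would therefore conclude
\begin{equation}
\tensor[_\beta]{P}{_j}\psi^j = 0,
\end{equation}
i.e.\ $\psi^j$ automatically obeys the supplementary condition \eqref{supp_cond}.

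Next I would feed this back into the kernel equation. Because $\tensor[_\beta]{P}{_j}\psi^j=0$, the second term of $\tensor[_i]{F}{_j}\psi^j=0$ drops out, leaving $\tensor[_{i,}]{S}{_{,j}}\psi^j=0$. By the standing hypothesis on the kernel of $\tensor[_{i,}]{S}{_{,j}}$, this forces $\psi^j=\tensor[^j]{Q}{_\alpha}\delta\xi^\alpha$ for some $\delta\xi^\alpha$ of compact support. Imposing on this the supplementary condition just derived gives $0=\tensor[_\beta]{P}{_j}\tensor[^j]{Q}{_\alpha}\delta\xi^\alpha=\tensor[_\beta]{\mathcal{F}}{_\alpha}\delta\xi^\alpha$, and nonsingularity of $\tensor[_\beta]{\mathcal{F}}{_\alpha}$ yields $\delta\xi^\alpha=0$, whence $\psi^j=0$. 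This closes the argument.

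I expect the main obstacle to be purely bookkeeping: carefully tracking the $(-1)$ grading factors and supertranspose conventions (from \eqref{left_right_der} together with the definitions of $\tensor[_\alpha]{Q}{^{\sim}^i}$ and $\tensor[_i]{P}{^{\sim}_\alpha}$) so as to verify cleanly that the left contraction with $\tensor[_\gamma]{Q}{^{\sim}^i}$ genuinely reproduces $\tensor[_\gamma]{\mathcal{F}}{^{\sim}_\alpha}$ and that \eqref{Ward_id_shell} annihilates the $\tensor[_{i,}]{S}{_{,j}}$ contribution with the correct signs. The conceptual content is otherwise light; the only point requiring care beyond signs is reading \emph{nonsingular} as \emph{invertible with a Green's function}, for which the supersymmetry \eqref{supersymm_F} of $\tensor[_i]{F}{_j}$ guarantees that triviality of the kernel suffices.
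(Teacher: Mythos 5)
Your proposal is correct and follows essentially the same route as the paper: annihilate the $\tensor[_{i,}]{S}{_{,j}}$ term via the on-shell Ward identity \eqref{Ward_id_shell}, use nonsingularity of $\eta$ and $\mathcal{F}$ to deduce the supplementary condition $\tensor[_\beta]{P}{_j}\psi^j=0$, reduce the kernel equation to $\tensor[_{i,}]{S}{_{,j}}\psi^j=0$, invoke the standing hypothesis to write $\psi^j=\tensor[^j]{Q}{_\alpha}\delta\xi^\alpha$, and finish with $\tensor[_\beta]{\mathcal{F}}{_\alpha}\delta\xi^\alpha=0$. The only cosmetic difference is that the paper first supertransposes the kernel equation and applies $Q_\alpha$ from the right, whereas you contract with $\tensor[_\gamma]{Q}{^{\sim}^i}$ from the left; these are supertransposes of one another and carry identical content.
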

\begin{proof}
	Suppose that there is a set of functions $X^{j}$ of compact support such that
	\begin{equation}
	\tensor[_i]{F}{_j}X^{j}=0. \label{first_eq}
	\end{equation}
	By suppressing all indices and recalling the definition \eqref{F_op}, one obtains:
	\begin{equation}
	(\tensor[_{1}]{S}{_{1}}+\tensor{P}{^{\sim}}\eta\tensor[]{P}{})X=0.
	\end{equation}
	By taking the supertranspose and recalling the supersymmetry properties stated above, one can write:
	\begin{equation}
	X^{\sim}(\tensor[_{1}]{S}{_{1}}+\tensor{P}{^{\sim}}\eta\tensor[]{P}{})=0.
	\end{equation}
	By applying $Q_{\alpha}$ from the right and using \eqref{Ward_id_shell}, which holds on shell, one arrives at:
	\begin{eqnarray}
	X^{\sim}(\tensor[_{1}]{S}{_{1}}+\tensor{P}{^{\sim}}\eta\tensor[]{P}{})Q_{\alpha}&=&0, \\
	X^{\sim}\tensor{P}{^{\sim}}\eta\tensor[]{P}{}Q_{\alpha}&=&0.
	\end{eqnarray}
	By restoring indices and recalling the definition \eqref{Fcal}, one can write:
	\begin{equation}
	X^{\sim j}\tensor[_j]{P}{^{\sim}_\alpha}\eta^{\alpha\beta}\tensor[_\beta]{\mathcal{F}}{_\gamma}=0.
	\end{equation}
	Being $\eta^{ \alpha\beta}$, $\tensor[_\beta]{\mathcal{F}}{_\gamma}$ nonsingular, the previous equation implies
	\begin{equation}
	X^{\sim j}\tensor[_j]{P}{^{\sim}_\alpha}=0,
	\end{equation}
	or equivalently
	\begin{eqnarray}
	\tensor[_\alpha]{P}{_i}X^i&=&0, \\
	PX&=&0. \label{sec_eq}
	\end{eqnarray}
	Hence, from the first equation \eqref{first_eq}, it follows 
	\begin{eqnarray}
	FX&=&(\tensor[_{1}]{S}{_{1}}+\tensor{P}{^{\sim}}\eta\tensor[]{P}{})X  \nonumber\\
	&=&\tensor[_{1}]{S}{_{1}}X  \nonumber \\
	&=&0.
	\end{eqnarray}
	But the kernel of $\tensor[_1]{S}{_1}$ consists of the fields of the form $\tensor[^i]{Q}{_{\alpha}}\delta\xi^{\alpha}$, with $\delta\xi^{\alpha}$ of compact support in space-time, then $X$ must be of the form $Q\xi$; therefore one can write again eq. \eqref{sec_eq} and recall the definition \eqref{Fcal}:
	\begin{eqnarray}
	PX&=&0,  \nonumber\\
	PQ\xi&=&0,  \nonumber\\
	\mathcal{F}\xi&=&0.
	\end{eqnarray}
	Since $\mathcal{F}$ is non singular, this equation implies $\xi=0$ and, as a consequence, $X=0$.
	Therefore the kernel of the operator $\tensor[_i]{F}{_j}$ consists of the null field $X^{i}=0$ only: then $\tensor[_i]{F}{_j}$ is nonsingular.
	
	This ends the proof.
\end{proof}
\subsection{Retarded and advanced Green's function of $F$}
When the supplementary condition \eqref{supp_cond} is satisfied, eq. \eqref{inhom_sd} may be replaced by 
\begin{equation}
\tensor[_{i}]{F}{_{j}}\;\delta\phi^j = - \tensor[_{i,}]{A \epsilon}{}. \label{inhom_sd_sc}
\end{equation}
In fact, if both \eqref{supp_cond} and \eqref{inhom_sd} are satisfied, it is obvious that \eqref{inhom_sd_sc} is satisfied too; on the other hand, if \eqref{inhom_sd_sc} is satisfied, by applying $\tensor[_\alpha]{Q}{^\sim^i}$ from the left one obtains:
\begin{eqnarray}
\tensor[_\alpha]{Q}{^\sim^i}\;\tensor[_{i}]{F}{_{j}}\;\delta\phi^j&=&\tensor[_\alpha]{Q}{^\sim^i}(- \tensor[_{i,}]{A \epsilon}{})  \nonumber\\
\tensor[_\alpha]{Q}{^\sim^i}(\tensor[_{i,}]{S}{_{,j}} + \tensor[_i]{P}{^{\sim}_\alpha}\;\eta^{\alpha \beta}\;\tensor[_\beta]{P}{_j})\delta\phi^j&=&\tensor[_\alpha]{Q}{^\sim^i}(- \tensor[_{i,}]{A \epsilon}{})  \nonumber\\
\tensor[_\alpha]{Q}{^\sim^i}\;\tensor[_i]{P}{^{\sim}_\alpha}\;\eta^{\alpha \beta}\;\tensor[_\beta]{P}{_j}\;\delta\phi^j&=&0  \nonumber\\
\tensor[_\alpha]{\mathcal{F}}{^\sim_\beta}\;\eta^{\alpha \beta}\;\tensor[_\beta]{P}{_j}\;\delta\phi^j&=&0,
\end{eqnarray}
where \eqref{Ward_id_shell} have been used, plus the fact that $A$ is a conditional invariant; Being $\eta^{\alpha\beta}$, $\tensor[_\beta]{\mathcal{F}}{_\gamma}$ nonsingular, the previous equation implies 
\begin{equation}
\tensor[_\beta]{P}{_j}\;\delta\phi^j=0,
\end{equation}
i.e., the supplementary condition is satisfied. Hence, the following holds:
\begin{eqnarray}
\tensor[_{i}]{F}{_{j}}\;\delta\phi^j&=&- \tensor[_{i,}]{A \epsilon}{}  \nonumber\\
\tensor[_{i,}]{S}{_{,j}}\;\delta\phi^j + \tensor[_i]{P}{^{\sim}_\alpha}\;\eta^{\alpha \beta}\;\tensor[_\beta]{P}{_j}\;\delta\phi^j&=&- \tensor[_{i,}]{A \epsilon}{}  \nonumber\\
\tensor[_{i,}]{S}{_{,j}}\;\delta\phi^j&=&- \tensor[_{i,}]{A \epsilon}{}
\end{eqnarray}
i.e., the inhomogeneous equation of small disturbances is satisfied too.

Since $F$ is a nonsingular operator this equation has unique solutions for given boundary conditions. These solutions can be expressed in terms of Green's 
functions. 

We shall consider in this section only retarded and advanced boundary conditions. Denote by $\delta^{-}\phi^i$ and $\delta^{+}\phi^i$ respectively the corresponding solutions. Then 
\begin{equation}
\delta^{\pm}\phi^i=G^{\pm ij}\tensor[_{j,}]{A \epsilon}{} \label{jac_Gr_sol}
\end{equation}
where $G^{- ij}$ and $G^{+ ij}$ are the retarded and advanced Green's functions of $\tensor[_{i}]{F}{_{j}}$, respectively :
\begin{eqnarray}
\tensor[_{i}]{F}{_{k}}G^{\pm kj}&=& -\tensor[_i]{\delta}{^j}, \label{def_Gr}\\ 
G^{- ij}=0 \; \; \; \text{if} \; \; \; i<j, \; \; \; \;&& \; \; \; \; G^{+ ij}=0 \; \; \;\text{if} \; \; \; i>j, \label{ret_adv_Gr}
\end{eqnarray}
where \lq\lq $i<j$'' means \lq\lq the time associated with the index $i$ lies in the past of the time associated with the index $j$'' and \lq\lq$i>j$'' means \lq\lq the time associated with the index $i$ lies in the future of the time associated with the index $j$''. Consequently the kinematical conditions \eqref{ret_adv_Gr} imply that $ G^{- ij} (G^{+ ij} )$ is nonvanishing only when the space-time point associated with $i$ lies on or inside the future (past) light cone emanating from the space-time point associated with $j$. 

A minus sign appears on the right of eq. \eqref{def_Gr} for historical reasons, and the symbol $\tensor[_i]{\delta}{^j}$ represents a combined Kronecker delta $\delta$-distribution. In the supercondensed notation \eqref{def_Gr} is written 
\begin{equation}
FG^{\pm}=-1. \label{def_Gr_superc}
\end{equation}
It should be remarked that the summation-integration involved on the right side of eq. \eqref{jac_Gr_sol} will generally not converge unless the functional form of $A$ is such that the functions $\tensor[_{j,}]{A}{}$ do not increase in magnitude too rapidly toward the past or future. A sufficient condition for convergence, of course, is that supp $\tensor[_{j,}]{A}{}$ be limited in time, where \lq\lq supp $\tensor[_{j,}]{A}{}$'' denotes the union of the supports of all the $\tensor[_{j,}]{A}{}$. If \eqref{jac_Gr_sol} does not converge, then the solutions of $\tensor[_{i,}]{S}{_{,j}}\delta\phi^j + \tensor[_{i,}]{A \epsilon}{}=0$ do not lie close (in $\Phi$) to those of $\tensor[_{i,}]{S}{}= 0$ no matter how small $\epsilon$ may be chosen. 
\subsection{Equality of left and right Green's functions}
In eqs. \eqref{def_Gr}, \eqref{def_Gr_superc} the $G^{\pm}$ appear as \textit{right} Green's functions. They are also \textit{left} Green's functions. To prove this let us temporarily distinguish left Green's functions from right Green's functions by employing subscripts $L$ and $R$. Let $X^i$ be arbitrary functions of compact support in space-time and let 
\begin{equation}
Y^i \equiv (\tensor*[]{G}{_R^-^i^j}-\tensor*[]{G}{_L^-^i^j})\tensor[_j]{F}{_k}X^k.
\end{equation}
Since the $X^k$ have compact support it does not matter whether the $j$ summation-integration or the $k$ summation-integration is performed first in this expression. That is, $\tensor[_j]{F}{_k}$ may be regarded as acting either to the right or to the left. By performing the $j$ summation-integration first and using $\tensor*[]{G}{_L^-^i^j}\tensor[_j]{F}{_k}= -\tensor[^i]{\delta}{_k}$, one obtains
\begin{eqnarray}
Y^i&=&\tensor*[]{G}{_R^-^i^j} \; \tensor[_j]{F}{_k} \; X^k+\tensor[^i]{\delta}{_k} \; X^k \nonumber\\
&=&\tensor*[]{G}{_R^-^i^j} \; \tensor[_j]{F}{_k} \; X^k+X^i.
\end{eqnarray}
By applying $\tensor[_m]{F}{_i}$ from the the left, performing the $i$ summation-integration first and using $\tensor[_m]{F}{_i}\tensor*[]{G}{_R^-^i^j}= -\tensor[_m]{\delta}{^j}$, one obtains:
\begin{eqnarray}
\tensor[_m]{F}{_i}Y^i&=&\tensor[_m]{F}{_i}\tensor*[]{G}{_R^-^i^j}\tensor[_j]{F}{_k}X^k+\tensor[_m]{F}{_i}X^i  \nonumber\\
\tensor[_m]{F}{_i}Y^i&=&-\tensor[_m]{\delta}{^j}\tensor[_j]{F}{_k}X^k+\tensor[_m]{F}{_i}X^i  \nonumber \\
\tensor[_m]{F}{_i}Y^i&=&-\tensor[_m]{F}{_k}X^k+\tensor[_m]{F}{_i}X^i  \nonumber \\
\tensor[_m]{F}{_i}Y^i&=&0.
\end{eqnarray}
But $Y^i$ vanishes if $i<{\rm supp}\; X \equiv \cup_{j} {\rm supp} \; X^j$. This means that the boundary data for the above equation vanish to the past of ${\rm supp}\; X$ , and hence $Y^i$ must vanish \textit{everywhere}. Since the $X^i$ are arbitrary it follows that 
\begin{eqnarray}
0&=&  (\tensor*[]{G}{_R^-^i^j}-\tensor*[]{G}{_L^-^i^j})\tensor[_j]{F}{_k} \nonumber \\
&=& \tensor*[]{G}{_R^-^i^j}\;\tensor[_j]{F}{_k}+\tensor[^i]{\delta}{_k},  \nonumber \\
\tensor*[]{G}{_R^-^i^j}\;\tensor[_j]{F}{_k}&=&-\tensor[^i]{\delta}{_k}.
\end{eqnarray}
But this is just the condition that $\tensor*[]{G}{_R^-^i^j}$ be a left Green's function. Therefore $\tensor*[]{G}{_R^-^i^j}=\tensor*[]{G}{_L^-^i^j}$. In a similar manner one may show that $\tensor*[]{G}{_R^+^i^j}=\tensor*[]{G}{_L^+^i^j}$. Thus eqs. \eqref{def_Gr}, \eqref{def_Gr_superc} imply
\begin{equation}
G^{\pm ik} \; \tensor[_{k}]{F}{_{j}}= -\tensor[^i]{\delta}{_j}
\end{equation}
or, in supercondensed notation
\begin{equation}
G^{\pm}\overleftarrow{F}= -1.
\end{equation}
It is important to stress that this proof holds regardless of the symmetry of $F$.
\subsection{Reciprocity relations}
The actual supersymmetry of F gives rise to simple relations between the retarded and advanced Green's functions. Consider the expression 
\begin{equation}
(-1)^{ki}\tensor*[]{G}{^-^k^i}\tensor[_{k}]{F}{_{l}}\tensor*[]{G}{^+^l^j}.
\end{equation}
Because of the kinematical conditions \eqref{ret_adv_Gr}, the intersection of the supports of $\tensor*[]{G}{^-^k^i}$ and $\tensor*[]{G}{^+^l^j}$, with $i$ and $j$ held fixed, is compact, since it is the intersection of a forward light cone with a backward light cone. Therefore it makes no difference whether the $k$ summation-integration or the $l$ summation-integration is performed first. Using this fact, together with the supersymmetry law \eqref{supersymm_F}, one obtains:
\begin{eqnarray}
0&=&(-1)^{ki}\tensor*[]{G}{^-^k^i}(\tensor[_{k}]{F}{_{l}}-(-1)^{k+l+kl}\tensor[_{l}]{F}{_{k}})\tensor*[]{G}{^+^l^j} \nonumber\\
&=&-(-1)^{ki}\tensor*[]{G}{^-^k^i}\tensor[_{k}]{\delta}{^{j}}-(-1)^{ki+k+l+kl+il+ik+kl+k}\tensor[_{l}]{F}{_{k}}\tensor*[]{G}{^-^k^i}\tensor*[]{G}{^+^l^j}\nonumber \\
&=&-(-1)^{ji}\tensor*[]{G}{^-^j^i}+(-1)^{l+il}\tensor[_l]{\delta}{^i}\tensor*[]{G}{^+^l^j}\nonumber \\
&=&-(-1)^{ji}\tensor*[]{G}{^-^j^i}+\tensor*[]{G}{^+^i^j}.
\end{eqnarray}
Therefore
\begin{equation}
(-1)^{ji}\tensor*[]{G}{^-^j^i}=\tensor*[]{G}{^+^i^j},
\end{equation}
or, equivalently
\begin{equation}
\tensor*[]{G}{^\pm^i^j}=(-1)^{ij}\tensor*[]{G}{^\mp^j^i}. \label{recipr_rel}
\end{equation}
Equations \eqref{recipr_rel} are called \textit{reciprocity relations for the Green's functions}. In the supercondensed notations, they take the form:
\begin{equation}
G^\pm = G^{ \mp\sim}.
\end{equation}
\subsection{R\-e\-l\-a\-t\-i\-o\-n b\-e\-t\-w\-e\-e\-n G\-r\-e\-e\-n's fu\-nc\-ti\-on\-s of $\mathcal{F}$ a\-nd $F$}
In this section an important relation between the Green's Functions of $\mathcal{F}$ and $F$ will be derived; attention will be confined, for now, to the retarded and advanced Green's functions, those of $\mathcal{F}$ being denoted by $\mathcal{G}^-$ and $\mathcal{G}^+$, respectively: 
\begin{equation}
\tensor[_{_\alpha}]{\mathcal{F}}{_{\gamma}}\mathcal{G}^{\pm \gamma\beta}=-\tensor[_\alpha]{\delta}{^\beta}.
\end{equation}
Also in this case, $\mathcal{G}^\pm$ are both right and left Green's functions, and obey similar reciprocity relations to the ones shown for $G$. 

By writing down the definition of $F$ \eqref{F_op} and using \eqref{Ward_id_shell}, that is valid on shell, one obtains:
\begin{eqnarray}
\tensor[_j]{F}{_k}\tensor[^k]{Q}{_\beta}&=& \tensor[_j]{S}{_k}\tensor[^k]{Q}{_\beta}+\tensor[_j]{P}{^\sim_\alpha}\eta^{\alpha\gamma}\tensor[_\gamma]{P}{_k}\tensor[^k]{Q}{_\beta} \nonumber\\
&=& 0 + \tensor[_j]{P}{^\sim_\alpha}\eta^{\alpha\gamma}\tensor[_\gamma]{\mathcal{F}}{_\beta}.
\end{eqnarray}
Multiplying this equation on the left by $G^\pm$ and on the right by $\mathcal{G}^\pm$, and noting that the intersection of the supports of these extra factors (with the outer suppressed indices held fixed) is compact so that $F$ and $\mathcal{F}$ may act in either direction, one gets 
\begin{eqnarray}
G^{\pm ij}\tensor[_j]{F}{_k}\tensor[^k]{Q}{_\beta}\mathcal{G}^{\pm\beta\theta}&=&G^{\pm ij}\tensor[_j]{P}{^\sim_\alpha}\eta^{\alpha\gamma}\tensor[_\gamma]{\mathcal{F}}{_\beta}\mathcal{G}^{\pm\beta\theta}, \nonumber\\
\tensor[^i]{\delta}{_k}\tensor[^k]{Q}{_\beta}\mathcal{G}^{\pm\beta\theta}&=&G^{\pm ij}\tensor[_j]{P}{^\sim_\alpha}\eta^{\alpha\gamma}\tensor[_\gamma]{\delta}{^\theta},\nonumber \\
\tensor[^i]{Q}{_\beta}\mathcal{G}^{\pm\beta\theta}&=&G^{\pm ij}\tensor[_j]{P}{^\sim_\alpha}\eta^{\alpha\theta},
\end{eqnarray} 
or, equivalently:
\begin{equation}
Q\mathcal{G}^\pm=G^\pm P^\sim \eta. \label{F_Fcal_Green}
\end{equation}
Now, if $\eta$ is chosen to be \textit{ultralocal}, i.e., in $\eta$ no undifferentiated $\delta$ distributions appear, then its negative inverse $\lambda$ is unique and supersymmetric; on the other hand, if $\eta$ is not ultralocal, then its negative inverse is not unique: they are Green's functions, and they will be assumed to obey the same kinematical relations as $G^\pm$, $\mathcal{G}^\pm$; its elements will be indicated as
\begin{equation}
\tensor[_\alpha]{\lambda}{_\beta}=(-1)^{\alpha+\beta+\alpha\beta}\tensor[_\beta]{\lambda}{_\alpha}. 
\end{equation}
Hence eq. \eqref{F_Fcal_Green} may be written
\begin{equation}
-Q\mathcal{G}^\pm \lambda = G^\pm P^\sim, \label{F_Fcal_Green2}
\end{equation}
or, taking the supertranspose and using \eqref{recipr_rel} and the symmetry properties:
\begin{equation}
-\lambda \mathcal{G}^{\mp\sim} Q^\sim = P G^\pm. \label{F_Fcal_Green3}
\end{equation}
Given this equation, one can prove in another way that \eqref{jac_Gr_sol} is the solution for the inhomogeneous equation of small disturbances which obeys the supplementary conditions:
\begin{eqnarray}
\tensor[_\alpha]{P}{_i}\delta^\pm \phi^{i} &=& \tensor[_\alpha]{P}{_i}G^{\pm ij}\tensor[_{j,}]{A \epsilon}{} \nonumber\\
&=&-\tensor[_\alpha]{\lambda}{_\beta}\mathcal{G}^{\pm \beta\gamma} \tensor[_\gamma]{Q}{^\sim ^j}\tensor[_{j,}]{A \epsilon}{} \nonumber\\
&=&0, \label{suppl_cond_alt}
\end{eqnarray}
where the last line follows from the fact that $A$ is a physical observable.
\subsection{Landau Green's functions}
Given $\phi \in \Phi_{0}$, the entire tangent space at $\phi$, $T_{\phi}\Phi$, is spanned by vectors of the form $G^{\pm ij}\tensor[_{j,}]{D}{}$, with $D$ of compact support. Equation \eqref{suppl_cond_alt} shows that, in order to get the subspace of $T_{\phi}\Phi$ obeying the supplementary conditions, only $D$ which are physical observables have to be considered.

Another possible choice to obtain the same result without imposing conditions on $D$ is to \lq\lq modify'' the Green's function: the task is complete if one finds an object $B^{\pm ij}$ such that
\begin{equation}
\begin{cases}
&P(G^{\pm}+B^{\pm})=0. \\
&\tensor[_1]{S}{_1}B^{\pm}=0
\end{cases}
\end{equation}
But, by using \eqref{F_Fcal_Green3}, the first equation reads
\begin{equation}
PB^{\pm}=\lambda \mathcal{G}^{\mp\sim} Q^\sim.
\end{equation}
Using $PQ\mathcal{G}=\mathcal{FG}=-1$ and \eqref{Ward_id_shell}, it is easy to see that 
\begin{equation}
B^{\pm} = -Q \mathcal{G}^{\pm}\lambda\mathcal{G}^{\mp\sim}Q^{\sim}
\end{equation}
is a solution which satisfies the second equation of the system, too.
Therefore one is led to define \textit{Landau Green's functions} $G_{\infty}^{\pm}$
\begin{eqnarray}
G_{\infty}^{\pm} &\equiv & G^{\pm}+B^{\pm} \nonumber\\
&=&G^{\pm}- Q \mathcal{G}^{\pm}\lambda\mathcal{G}^{\mp\sim}Q^{\sim},
\end{eqnarray}
or, with restored indices:
\begin{eqnarray}
G_{\infty}^{\pm ij} &\equiv & G^{\pm ij}+B^{\pm ij} \nonumber\\
&=& G^{\pm ij} - \tensor[^i]{Q}{_\alpha}\mathcal{G}^{\pm \alpha\beta}\tensor[_\beta]{\lambda}{_\gamma}\mathcal{G}^{\mp\sim\gamma\delta}\tensor[_\delta]{Q}{^\sim ^j}.
\end{eqnarray}
The Landau Green's functions are defined only on shell and, when applied to a physical observable, $G^{\pm}$ and $G_{\infty}^{\pm}$ yield trivially the same results.

It is important to observe that $G_{\infty}^{\pm}$ is no longer a negative inverse for $F$; in fact
\begin{eqnarray}
G_{\infty}^{\pm}F &=& (G^{\pm}- Q \mathcal{G}^{\pm}\lambda\mathcal{G}^{\mp\sim}Q^{\sim})F \nonumber\\
&=& -1 - Q \mathcal{G}^{\pm}\lambda\mathcal{G}^{\mp\sim}Q^{\sim}F \nonumber\\
&=& -1 - Q \mathcal{G}^{\pm}\lambda\mathcal{G}^{\mp\sim}Q^{\sim}(\tensor[_1]{S}{_1}+P^\sim \eta P)\nonumber \\
&=& -1 - Q \mathcal{G}^{\pm}\lambda\mathcal{G}^{\mp\sim}Q^{\sim}P^\sim \eta P \nonumber\\
&=& -1 - Q \mathcal{G}^{\pm}\lambda\mathcal{G}^{\mp\sim} \mathcal{F}^\sim \eta P\nonumber \\
&=& -1 + Q \mathcal{G}^{\pm}\lambda \eta P \nonumber\\
&=& -1 - Q \mathcal{G}^{\pm} P.
\end{eqnarray}
Call this operator $\Pi^\pm \equiv -G_{\infty}^{\pm}F$; it can be easily seen that it is a projection operator whose kernel is the subspace of $T_{\phi}\Phi$ that is tangent to the invariance flows $Q_{\alpha}$:
\begin{eqnarray}
\Pi^{\pm 2} &=&\Pi^\pm \Pi^\pm \nonumber\\
&=& (1 + Q \mathcal{G}^{\pm} P) (1 + Q \mathcal{G}^{\pm} P)\nonumber \\
&=& 1 + 2 Q \mathcal{G}^{\pm} P + Q \mathcal{G}^{\pm} PQ \mathcal{G}^{\pm} P\nonumber \\
&=& 1 + 2 Q \mathcal{G}^{\pm} P + Q \mathcal{G}^{\pm} \mathcal{F} \mathcal{G}^{\pm} P \nonumber\\
&=& 1 + 2 Q \mathcal{G}^{\pm} P - Q \mathcal{G}^{\pm} P \nonumber\\
&=& 1 + Q \mathcal{G}^{\pm} P \nonumber\\ 
&=& \Pi^{\pm}, \\
& & \nonumber \\
\Pi^\pm Q &=& (1 + Q \mathcal{G}^{\pm} P)Q \nonumber\\
&=&Q+ Q \mathcal{G}^{\pm} PQ \nonumber\\
&=&Q+ Q \mathcal{G}^{\pm} \mathcal{F} \nonumber\\
&=&Q- Q \nonumber \\
&=&0,
\end{eqnarray}
and, obviously
\begin{eqnarray}
P\Pi^{\pm}&=& P(1 + Q \mathcal{G}^{\pm} P) \nonumber\\
&=&P + PQ \mathcal{G}^{\pm} P \nonumber\\
&=&P + \mathcal{F} \mathcal{G}^{\pm} P \nonumber\\
&=&P-P \nonumber\\
&=&0.
\end{eqnarray}
Finally, noticing that $\Pi^\pm \equiv -G_{\infty}^{\pm}F= -G_{\infty}^{\pm}\tensor[_1]{S}{_1} =-G^{\pm}\tensor[_1]{S}{_1}$, and recalling that the restriction of a projection operator on its image is the identity operator, it can be said that the Landau Green's functions are the negative inverses of the restriction on $Ran(\Pi^\pm)$ of the singular operator $\tensor[_1]{S}{_1}$.
\subsection{Disturbances in physical observables}
Let $B$ be a physical observable; call $\delta^{\pm}B$ the changes in value of $B$ under the disturbance \eqref{jac_Gr_sol} caused by the change in the action functional \eqref{action_var}. Then 
\begin{eqnarray}
\delta^{\pm}B &=& B_{,i}\delta^{\pm}\phi^i  \label{dist_ph_obs} \nonumber\\ 
&=&B_{,i}(G^{\pm ij}\tensor[_{j,}]{A}{}) \epsilon{} \nonumber \\
&=&(-1)^{AB}(A_{,j}G^{\mp ji})\tensor[_{i,}]{B}{}\epsilon,
\end{eqnarray}
in which eq. \eqref{left_right_der} and the reciprocity relations \eqref{recipr_rel} have been used in obtaining the final expression. Parentheses have been inserted because it is not guaranteed that if the summation-integration over $i$ is performed before the summation-integration over $j$ the same result will be obtained. We shall assume that the functions $B_{,i}$ do not increase in magnitude too rapidly for convergence either in the past or in the future. In fact we shall assume that these functions are well enough behaved that the parentheses may be removed. The following are some sufficient (although not necessary) conditions for the parentheses to be absent: 
\begin{itemize}
	\item[1.] With the retarded solution $\delta^{-}\phi$, if $(\cup_{j} {\rm \; supp}\; \tensor[_{j,}]{A}{}) \cap (\cup_{i} {\rm \; supp}\; B_{,i})$ is compact and there exist spacelike hypersurfaces $\Sigma_{+}$,$\Sigma_{-}$ such that $$(\cup_{i} {\rm \; supp}\; B_{,i})<\Sigma_{+} {\rm \; and \; } (\cup_{j} {\rm \; supp}\; \tensor[_{j,}]{A}{})>\Sigma_{-}.$$ In this case $\delta^{-}B$ vanishes unless $\Sigma_{+}>\Sigma_{-}$.
	\item[2.] With the advanced solution $\delta^{+}\phi$, if $(\cup_{j} {\rm \; supp} \;\tensor[_{j,}]{A}{}) \cap (\cup_{i} {\rm \; supp}\; B_{,i})$ is compact and there exist spacelike hypersurfaces $\Sigma_{+}$,$\Sigma_{-}$ such that $$(\cup_{j} {\rm \; supp} \;\tensor[_{j,}]{A}{})<\Sigma_{+} {\rm \;and \;} (\cup_{i} {\rm \; supp}\; B_{,i})>\Sigma_{-}.$$ In this case $\delta^{+}B$ vanishes unless $\Sigma_{+}>\Sigma_{-}$.
	\item[3.] With either solution, if $(\cup_{j} {\rm \; supp}\; \tensor[_{j,}]{A}{})$ and $(\cup_{i} {\rm \; supp}\; B_{,i})$ are both compact.
\end{itemize}
\begin{remark}
	When, as now, we are working with $\phi$ on shell, a question arises regarding the meaning of the expressions ${\rm supp} \;\tensor[_{j,}]{A}{}$ and ${\rm supp}\; B_{,i}$. When $\phi$ is on shell the functional form of a physical observable is defined only \textit{modulo} the dynamical equations, and hence the expressions ${\rm supp}\;\tensor[_{j,}]{A}{}$ and ${\rm supp}\; B_{,i}$ would seem to be ambiguous. The following clarification is necessary: every physical observable has an expression in terms of the fields $\phi^i$ the functional form of which is independent of that of $S$. \textit{This} is the form that is to be understood in the expressions ${\rm supp}\;\tensor[_{j,}]{A}{}$ and ${\rm supp} \;B_{,i}$. This form remains invariant under the change \eqref{action_var} in the action. Only its value changes, because the values of the dynamical variables themselves change. 
\end{remark}
\subsection{The reciprocity relation for physical observables}
It will be useful to introduce the notation 
\begin{equation}
D_{A}^{\pm}B \equiv \tensor[]{A}{_{,i}}G^{\mp ij}\tensor[_{j,}]{B}{} . \label{DAB_def}
\end{equation}
Equation \eqref{dist_ph_obs} may then be written:
\begin{eqnarray}
\delta^{\pm}B &=& D_{A\epsilon}^{\pm}B \nonumber\\
&=&\tensor[]{(A\epsilon)}{_{,i}}G^{\mp ij}\tensor[_{j,}]{B}{},
\end{eqnarray}
as may be seen by noting that $A$ and $\epsilon$ have the same type. 

Colloquially, $D_{A}^{-}B$ may be called the \lq\lq retarded effect of $A$ on $B$'' and $D_{A}^{+}B$ the \lq\lq advanced effect of $A$ on $B$''. It is a consequence of the reciprocity relations \eqref{recipr_rel} that 
\begin{eqnarray}
D_{A}^{\pm}B &\equiv & \tensor[]{A}{_{,i}}G^{\mp ij}\tensor[_{j,}]{B}{}\nonumber \\
&=&(-1)^{AB} B_{,i}G^{\pm ij}\tensor[_{j,}]{A}{} \nonumber\\
&=&(-1)^{AB}D_{B}^{\mp}A. \label{rec_rel_ph_obs}
\end{eqnarray}
In words: \textit{The retarded effect of} $A$ \textit{on} $B$ \textit{equals} $(-1)^{AB}$ \textit{times the advanced effect of} $B$ \textit{on} $A$ \textit{(and vice versa)}. This is known as the reciprocity relation for physical observables. 
\subsection{Off shell relations}
So far we have used the distributions $\tensor[_\alpha]{P}{_i}$, $\tensor[]{\eta}{^\alpha^\beta}$ only on shell. However, they, like the $\tensor[^i]{Q}{_\alpha}$, $\tensor[_i]{S}{_j}$, etc. have specific functional forms (as functionals of $\phi$) and are defined also off shell. Thus the operators $\tensor[_i]{F}{_j}$ and $\tensor[_\alpha]{\mathcal{F}}{_\beta}$ and Green's functions $G^{\pm ij}$ and $\mathcal{G}^{\pm  \alpha\beta}$ are defined off shell as well. Off shell, eq. \eqref{Ward_id_shell} no longer necessarily holds and the relations \eqref{F_Fcal_Green} \eqref{F_Fcal_Green2} fail to be satisfied generically. However, the operators $\tensor[_i]{F}{_j}$ and $\tensor[_\alpha]{\mathcal{F}}{_\beta}$ continue to be nonsingular and the Green's functions $G^{\pm ij}$ and $\mathcal{G}^{\pm  \alpha\beta}$ continue to exist, at least in an open neighborhood of the dynamical shell.

We need $G^{\pm ij}$ and $\mathcal{G}^{\pm \alpha\beta}$ off shell because we need to be able to calculate their functional derivatives, in order to discuss the invariance properties of $D_{A}^{\pm}B$. We begin by considering an arbitrary infinitesimal variation $\delta F$ in the operator $F$. This variation may arise either by shifting the point $\phi$ in $\Phi$, or by varying the functional forms of $P$, $\eta$, or even the action $S$ (the vector fields $\tensor{Q}{_\alpha}$ will be left untouched). It leads to corresponding variations $\delta G^{\pm}$ in the advanced and retarded Green's functions. 

The $\delta G^{\pm}$ satisfy a differential equation that is obtained by varying eq. \eqref{def_Gr_superc}:
\begin{eqnarray}
\delta(FG^{\pm})&=&\delta(-1),  \nonumber \\
\delta F G^{\pm} + F \delta G^{\pm} &=& 0,  \nonumber\\
F \delta G^{\pm} &=& - \delta F G^{\pm}.
\end{eqnarray}
This equation has the following unique solution: 
\begin{equation}
\delta G^{\pm} = G^{\pm} \delta F  G^{\pm} \label{var_Gr_F}
\end{equation}
which is determined by the kinematical conditions \eqref{ret_adv_Gr} that the Green's functions satisfy. It will be noted that the intersection of the supports of the two factors $G^{\pm}$ (with outer suppressed indices fixed) in the previous equation is compact, so that the operator $\delta F$ in this equation may act in either direction. 

Equation \eqref{var_Gr_F} is just the equation one would get if $F$ were a finite square matrix and $G^{\pm}$ were its negative inverse. There are important differences, however, between the present case and the case of finite matrices. First, $F$ has many \lq\lq inverses'', or Green's functions, not just one. Second, most of its Green's functions do \textit{not} satisfy variational equations having the structure \eqref{var_Gr_F}. For example, the average
\begin{equation}
\bar{G} \equiv \tfrac{1}{2}(G^+ +  G^-)
\end{equation}
is a Green's function of $F$:
\begin{eqnarray}
F\bar{G}&=&\tfrac{1}{2}F(G^+ + G^- ) \nonumber\\
&=&\tfrac{1}{2}FG^+ + \tfrac{1}{2} FG^-\nonumber \\
&=&-\tfrac{1}{2} -\tfrac{1}{2} \nonumber\\
&=&-1,
\end{eqnarray}
but it satisfies 
\begin{equation}
\delta \bar{G}= \tfrac{1}{2} (G^{+} \delta F G^{+} + G^{-} \delta F  G^{-} ),
\end{equation}
which is not equal to $\bar{G} \delta F  \bar{G}$.

If a Green's function does satisfy 
\begin{equation}
\delta G = G \delta F  G
\end{equation}
it will be called a \textit{coherent} Green's function. In many of the equations of this work the operators $F$, $\mathcal{F}$, $\tensor[_1]{S}{_1}$, $\delta F$, etc. will appear sandwiched between factors, the intersections of the supports of which are not compact. These operators will nevertheless be able to act in either direction because the Green's functions in these factors are all in the same \textit{coherence class} and establish an over-all set of boundary conditions that are preserved regardless of the direction of action. 

Suppose $\phi$ is on shell and suppose the variation $\delta F$ arises from variations $\delta P$ and $\delta \eta$ in $P$ and $\eta$: then
\begin{equation}
\delta F = \delta P^\sim \eta P + P^\sim \delta\eta P + P^\sim \eta\delta P.
\end{equation}
Inserting this expression in \eqref{var_Gr_F} and making use of eqs. \eqref{F_Fcal_Green2} and \eqref{F_Fcal_Green3}, as well as
\begin{equation}
\delta \eta = \eta \;\delta \lambda  \; \eta,
\end{equation}
one obtains 
\begin{eqnarray}
\delta G^{\pm}&=&G^{\pm}(\delta P^\sim \eta P + P^\sim \delta\eta P + P^\sim \eta \delta P )G^{\pm}\nonumber \\
&=&G^{\pm}\delta P^\sim \eta P G^{\pm}     +       G^{\pm}P^\sim(  \eta \;\delta \lambda  \; \eta)PG^{\pm}      +     G^{\pm} P^\sim \eta \delta P G^{\pm}\nonumber \\
&=&-G^{\pm}\delta P^\sim \eta\lambda \mathcal{G}^{\mp\sim} Q^\sim     -    Q\mathcal{G}^\pm \lambda(  \eta \;\delta \lambda  \; \eta)(-\lambda \mathcal{G}^{\mp\sim} Q^\sim )     -    Q\mathcal{G}^\pm \lambda\eta\delta P G^{\pm} \nonumber \\
&=& G^{\pm}\delta P^\sim \mathcal{G}^{\mp\sim} Q^\sim   +    Q\mathcal{G}^\pm \delta \lambda \mathcal{G}^{\mp\sim} Q^\sim    +    Q\mathcal{G}^\pm \delta P G^{\pm}. \label{var_Gr_onshell}
\end{eqnarray}
If the variation $\delta F$ arises instead from a variation in $\phi$, then one is led to the formula 
\begin{eqnarray}
\tensor[]{\delta G}{^\pm ^{ij}} &=& \tensor*[]{ G}{^\pm ^{ij}_{,k}} \; \delta\phi^k \nonumber\\
&=&\tensor*[]{ G}{^\pm ^{il}} \; \tensor*[_l]{ \delta F}{_m} \; \tensor*[]{ G}{^\pm ^{mj}} \nonumber\\
&=&\tensor*[]{ G}{^\pm ^{il}} \; \tensor*[_l]{  F}{_m_{,k}} \; \delta\phi^k \;\tensor*[]{ G}{^\pm ^{mj}} \nonumber\\
&=&(-1)^{km + kj}\;\tensor*[]{ G}{^\pm ^{il}} \;\tensor*[_l]{  F}{_m_{,k}} \;\tensor*[]{ G}{^\pm ^{mj}} \; \delta\phi^k,
\end{eqnarray}
and then
\begin{equation}
\tensor*[]{ G}{^\pm ^{ij}_{,k}} = (-1)^{km + kj} \; \tensor*[]{ G}{^\pm ^{il}} \; \tensor*[_l]{ F}{_m_{,k}} \; \tensor*[]{ G}{^\pm ^{mj}}
\end{equation}
Of course this formula is valid off shell. Going on shell one can proceed just as in the derivation of eq. \eqref{var_Gr_onshell} and obtain:
\begin{eqnarray}
\tensor*[]{ G}{^\pm ^{ij} _{,k}} &=& (-1)^{km + kj} \; \tensor*[]{ G}{^\pm ^{il}}  (\tensor[_{l,}]{S}{_{,m}}+\tensor[_l]{P}{^\sim _\alpha} \;\tensor[]{\eta}{^\alpha^\beta}\;\tensor[_\beta]{P}{_m})_{,k}  \;  \tensor*[]{ G}{^\pm ^{mj}} \nonumber\\
&=&(-1)^{km + kj}\; \tensor*[]{ G}{^\pm ^{il}}(\tensor[_{l,}]{S}{_{,mk}}  \nonumber\\ &&+ \tensor[_l]{P}{^\sim _\alpha}\;\tensor[]{\eta}{^\alpha^\beta}\;\tensor[_\beta]{P}{_m_{,k}}  \nonumber\\ && +  (-1)^{km+k\beta}\; \tensor[_l]{P}{^\sim _\alpha}\;\tensor[]{\eta}{^\alpha^\beta_{,k}}\;\tensor[_\beta]{P}{_m}  \nonumber\\ && +  (-1)^{km +k\alpha}\; \tensor[_l]{P}{^\sim _\alpha_{,k}}\;\tensor[]{\eta}{^\alpha^\beta}\;\tensor[_\beta]{P}{_m})\; \tensor*[]{ G}{^\pm ^{mj}} \nonumber\\
&=& (-1)^{km + kj} \; [\tensor*[]{ G}{^\pm ^{il}}\;\tensor[_{l,}]{S}{_{,mk}}\; \tensor*[]{ G}{^\pm ^{mj}}  \nonumber\\
&& +\tensor[^i]{Q}{_\alpha} \;\tensor[]{\mathcal{G}}{^\pm^\alpha^\beta}\;\tensor[_\beta]{P}{_m_{,k}} \;\tensor*[]{ G}{^\pm ^{mj}}  \nonumber\\
&& -(-1)^{km+k\beta}\;\tensor[^i]{Q}{_\alpha}\; \tensor[]{\mathcal{G}}{^\pm^\alpha^\beta}\;\tensor[_\beta]{\eta}{_\gamma}\;\tensor[]{\eta}{^\gamma^\theta_{,k}} \;\tensor[_\theta]{\lambda}{_\iota}\;\tensor[]{\mathcal{G}}{^\mp^\sim^\iota^\kappa}\;\tensor[_\kappa]{Q}{^\sim^j}  \nonumber\\
&& +(-1)^{km +k\alpha}\;\tensor*[]{ G}{^\pm ^{il}}\;\tensor[_l]{P}{^\sim _\alpha_{,k}}\;\tensor[]{\mathcal{G}}{^\mp^\sim^\alpha^\beta}\;\tensor[_\beta]{Q}{^\sim^j}]\nonumber \\
&=& (-1)^{km + kj}  [\tensor*[]{ G}{^\pm ^{il}}\;\tensor[_{l,}]{S}{_{,mk}} \;\tensor*[]{ G}{^\pm ^{mj}}  \nonumber\\
&& +\tensor[^i]{Q}{_\alpha} \;\tensor[]{\mathcal{G}}{^\pm^\alpha^\beta}\;\tensor[_\beta]{P}{_m_{,k}} \;\tensor*[]{ G}{^\pm ^{mj}}  \nonumber\\
&& +(-1)^{km+k\beta}\;\tensor[^i]{Q}{_\alpha}\; \tensor[]{\mathcal{G}}{^\pm^\alpha^\beta}\;\tensor[_\beta]{\lambda}{_\gamma_{,k}}\;\tensor[]{\mathcal{G}}{^\mp^\sim^\gamma^\theta}\;\tensor[_\theta]{Q}{^\sim^j}  \nonumber \\
&& +(-1)^{km +k\alpha}\;\tensor*[]{ G}{^\pm ^{il}}\;\tensor[_l]{P}{^\sim _\alpha_{,k}}\;\tensor[]{\mathcal{G}}{^\mp^\sim^\alpha^\beta}\;\tensor[_\beta]{Q}{^\sim^j}]. \label{Gr_1_on_shell}
\end{eqnarray}
It will be noted that summation-integrations can be performed in any order in the previous equations because all the Green's functions are in the same coherence class. 
\subsection{Invariance properties of $D_{A}^{\pm}B$}
Expression \eqref{DAB_def} for $D_{A}^{\pm}B$ involves the Green's functions $G^{\pm}$, which depend on specific choice for the operators $P$ and $\eta$. Since $D_{A}^{\pm}B$ represent the physical effects of physical changes in the action they must be $P$- and $\eta$- independent. To verify this write the variation of \eqref{DAB_def}, under changes in $P$ and $\eta$, in the supercondensed notation:
\begin{equation}
\delta D_{A}^{\pm}B = \tensor[]{A}{_{,i}}\delta G^{\mp ij}\tensor[_{j,}]{B}{}.
\end{equation}
If $\phi$ is on shell, one can insert \eqref{var_Gr_onshell} into the right-hand side. The result is immediately seen to vanish,
\begin{equation}
\delta D_{A}^{\pm}B = 0,
\end{equation}
because of the on shell invariance conditions 
\begin{equation}
\tensor{A}{_1}Q=0 \; \; \; \; \; \; \; \; \tensor{Q}{^\sim}\tensor[_1]{B}{}=0 \label{A_B_obs_on_shell}
\end{equation}
satisfied by $A$ and $B$ as physical observables.

For each on shell $\phi$ the \textit{values} of the $D_{A}^{\pm}B$ are physical, i.e., $P$- and $\eta$- independent. However, as \textit{functionals of the} $\phi$ (off shell as well as on) the $D_{A}^{\pm}B$ turn out \textit{not} to be physical observables unless both $A$ and $B$ are absolute invariants. To see this introduce parameters $\xi^\alpha$ of compact support in $M$, and make use of the supercondensed notation. Then if one evaluates the following quantity on shell, one obtains:
\begin{eqnarray}
(D_{A}^{\pm}B)_{1}Q\xi &=& (\tensor{A}{_1} \; \tensor[]{G}{^\mp} \; \tensor[_1]{B}{})_{1}Q\xi  \nonumber\\
&=&\tensor{A}{_1} \; \tensor[]{G}{^\mp} \; \tensor[_1]{B}{_1}Q\xi +  \tensor{A}{_1} \; \tensor[]{G}{^\mp _1} \; Q\xi \; \tensor[_1]{B}{} + \xi^\sim Q^\sim \; \tensor[_1]{A}{_1}\; \tensor[]{G}{^\mp} \; \tensor[_1]{B}{} \nonumber\\
&=&\tensor{A}{_1} \; \tensor[]{G}{^\mp} \; \tensor[_1]{B}{_1}Q\xi +  \tensor{A}{_1} \; \tensor[]{G}{^\mp} \; \tensor[_1]{S}{_2} \; Q\xi \; \tensor[]{G}{^\mp} \; \tensor[_1]{B}{} + \xi^\sim Q^\sim \; \tensor[_1]{A}{_1}\; \tensor[]{G}{^\mp} \; \tensor[_1]{B}{}, \nonumber \\ \label{DAB_Qxi}
\end{eqnarray}
where eqs. \eqref{Gr_1_on_shell} and \eqref{A_B_obs_on_shell} have been used.

In order to evaluate $\tensor[_1]{B}{_1}Q\xi$ and $\xi^\sim Q^\sim \; \tensor[_1]{A}{_1}$, multiply 
\begin{equation}
\tensor{A}{_1}\;Q=\tensor[]{S}{_1} \; a \; \; \; \; \; \; \; \; \tensor[]{B}{_1} \; \tensor[]{Q}{}=\tensor[]{S}{_1} \; b
\end{equation}
by $\xi$ and functionally differentiate; the result is (on shell):
\begin{equation}
\tensor[_1]{A}{_1}\;Q\xi=-\tensor[_1]{(\xi^\sim Q^\sim)}{} \; \tensor[_1]{A}{}+\tensor[_1]{S}{_1} \; a\xi \; \; \; \; \; \; \; \; \tensor[_1]{B}{_1}\;Q\xi=-\tensor[_1]{(\xi^\sim Q^\sim)}{} \; \tensor[_1]{B}{}+\tensor[_1]{S}{_1} \; b\xi.  \label{A_B_on_shell2}
\end{equation}
Functionally differentiating twice $\tensor[]{S}{_1} \; Q\xi = 0$ and then going on shell, one obtains the third Ward identity:

First functional derivative:
\begin{align}
\tensor[_1]{S}{_1} \; Q\xi  = -\tensor[_1]{(\xi^\sim Q^\sim)}{} \; \tensor[_1]{S}{}
\end{align}

Second functional derivative:
\begin{align}
\tensor[_1]{S}{_2} \; Q\xi  +  \tensor[_1]{S}{_1} \; \tensor[]{(Q\xi)}{_1} =  -\tensor[_1]{(\xi^\sim Q^\sim)}{} \; \tensor[_1]{S}{_1} -\tensor[_1]{(\xi^\sim Q^\sim)}{_1} \; \tensor[_1]{S}{}
\end{align}

Going on shell:
\begin{align}
\tensor[_1]{S}{_2} \; Q\xi = -  \tensor[_1]{S}{_1} \; \tensor[]{(Q\xi)}{_1} -\tensor[_1]{(\xi^\sim Q^\sim)}{} \; \tensor[_1]{S}{_1}. \label{third_Ward_on_shell}
\end{align}
Hence, inserting eqs. \eqref{A_B_on_shell2} and \eqref{third_Ward_on_shell} in \eqref{DAB_Qxi}, one obtains:
\begin{eqnarray}
(D_{A}^{\pm}B)_{1}Q\xi &=& \tensor{A}{_1} \; \tensor[]{G}{^\mp} \; (-\tensor[_1]{(\xi^\sim Q^\sim)}{} \; \tensor[_1]{B}{}+\tensor[_1]{S}{_1} \; b\xi)  \nonumber\\ 
&& + \tensor{A}{_1} \; \tensor[]{G}{^\mp} \; (-  \tensor[_1]{S}{_1} \; \tensor[]{(Q\xi)}{_1} -\tensor[_1]{(\xi^\sim Q^\sim)}{} \; \tensor[_1]{S}{_1}) \tensor[]{G}{^\mp}  \; \tensor[_1]{B}{}  \nonumber \\
&& + (- \; \tensor{A}{_1} \; \tensor[]{(Q\xi)}{_1} + \tensor[]{( \xi^\sim a^\sim)}{} \; \tensor[_1]{S}{_1} )\; \tensor[]{G}{^\mp} \; \tensor[_1]{B}{} \nonumber \\
&=&-\tensor{A}{_1} \; \tensor[]{G}{^\mp} \; \tensor[_1]{(\xi^\sim Q^\sim)}{} \; \tensor[_1]{B}{} + \tensor{A}{_1} \; (- \tensor[]{\Pi}{^\mp}) \; b\xi  \nonumber\\ 
&& - \tensor{A}{_1}  \; (-\tensor[]{\Pi}{^\mp}) \; \tensor[]{(Q\xi)}{_1} \; \; \tensor[]{G}{^\mp} \; \tensor[_1]{B}{} -  \tensor{A}{_1} \; \tensor[]{G}{^\mp} \; \tensor[_1]{(\xi^\sim Q^\sim)}{} \;(-\tensor[]{\Pi}{^\pm ^\sim} ) \; \tensor[_1]{B}{}  \nonumber\\
&& - \; \tensor{A}{_1} \; \tensor[]{(Q\xi)}{_1} \; \tensor[]{G}{^\mp} \; \tensor[_1]{B}{} + \xi^\sim a^\sim \; (- \tensor[]{\Pi}{^\pm ^\sim}) \; \tensor[_1]{B}{} \nonumber\\
&=& -\tensor{A}{_1} \; b\xi - \xi^\sim a^\sim  \; \tensor[_1]{B}{},
\end{eqnarray}
in which the properties of the projection operator $\Pi^{\pm}$ and eq. \eqref{A_B_obs_on_shell} have been exploited in obtaining the final expression.

This remarkably simple expression has two important properties. First, it vanishes if $A$ and $B$ are absolute invariants, i.e., if $a$ and $b$ vanish.  Second, it is independent of the $\pm$ signs, being the same for both retarded and advanced disturbances. 

Another result that is independent of the $\pm$ signs is the following, which shows explicitly that $D_{A}^{\pm}B$ is not invariant under changes in $A$ and $B$ of the form \eqref{obs_transf}; on shell:
\begin{eqnarray}
D_{\bar{A}}^{\pm}\bar{B}-D_{A}^{\pm}B &=& \tensor{(A}{}+\tensor{a}{^l}\tensor[_{l,}]{S}{}\tensor[]{)}{_{,i}} \; G^{\mp ij} \; \tensor[_{j,}]{(B}{}+\tensor[]{S}{_{,k}}\tensor[^k]{b)}{} - \tensor[]{A}{_{,i}}G^{\mp ij}\tensor[_{j,}]{B}{} \nonumber \\
&=& (A+ a^\sim \; \tensor[_1]{S}{}\tensor{)}{_1} \; \tensor[]{G}{^\mp} \; \tensor[_1]{(}{} B + \tensor{S}{_1} \; b) - \tensor[]{A}{_1} \; \tensor[]{G}{^\mp} \; \tensor[_1]{B}{} \nonumber\\
&=& \tensor{A}{_1} \; \tensor[]{G}{^\mp} \; \tensor[_1]{(}{} \tensor{S}{_1} \; b)   \nonumber\\ 
&& + (a^\sim \; \tensor[_1]{S}{}\tensor{)}{_1} \; \tensor[]{G}{^\mp} \; \tensor[_1]{(}{} B )   \nonumber\\
&& + (a^\sim \; \tensor[_1]{S}{}\tensor{)}{_1} \; \tensor[]{G}{^\mp} \; \tensor[_1]{(}{} \tensor{S}{_1} \; b) \nonumber\\
&=& \tensor{A}{_1} \; \tensor[]{G}{^\mp} \; \tensor[_1]{S}{_1} \; b + \tensor{A}{_1} \; \tensor[]{G}{^\mp} \; \tensor{S}{_1} \; \tensor[_1]{b}{} \nonumber \\
&& + a \; \tensor[_1]{S}{_1} \; \tensor{G}{^\mp} \; \tensor[_1]{B}{}  + \tensor{a}{^\sim _1} \; \tensor[_1]{S}{} \; \tensor{G}{^\mp} \; \tensor[_1]{B}{} \nonumber\\
&& + a^\sim \; \tensor[_1]{S}{_1} \; \tensor[]{G}{^\mp} \;  \tensor[_1]{S}{_1} \; b  \nonumber\\ 
&& +  a^\sim \; \tensor[_1]{S}{_1} \; \tensor[]{G}{^\mp} \; \tensor{S}{_1} \; \tensor[_1]{b}{}  \nonumber\\
&& + \tensor{a}{^\sim _1} \tensor[_1]{S}{} \; \tensor[]{G}{^\mp} \;  \tensor[_1]{S}{_1} \; b  \nonumber\\
&& + \tensor{a}{^\sim _1} \tensor[_1]{S}{} \; \tensor[]{G}{^\mp} \;  \tensor{S}{_1} \; \tensor[_1]{b}{} \nonumber \\
&=& \tensor{A}{_1} \; (-\tensor{\Pi}{^\mp}) \; b  \nonumber\\
&& + a^\sim \; (-\tensor{\Pi}{^\pm ^\sim}) \; \tensor[_1]{B}{}  \nonumber\\
&& + a^\sim \; \tensor[_1]{S}{_1} \; (-\tensor{\Pi}{^\mp}) \; b \nonumber \\
&=& -\tensor{A}{_1} \; b - \tensor{a}{^\sim}  \; \tensor[_1]{B}{} - a^\sim \; \tensor[_1]{S}{_1} \; b, \label{DAB_bar}
\end{eqnarray}
where 
\begin{equation}
\bar{A}=A+ \tensor{a}{^i}\tensor[_{i,}]{S}{}\; \; \; \; \; \; \; \; \bar{B}= B+\tensor{S}{_{,k}}\tensor[^k]{b}{}. \label{bar_AB}
\end{equation}
Here $\tensor{a}{^i}$ and $\tensor[^k]{b}{}$ are assumed to have the properties (e.g., rapid fall-off in the past and future) that are necessary for the associative law of multiplication to hold. 
\subsection{Peierls bracket, supercommutator function}
Let $A$ and $B$ be two physical observables. Their \textit{Peierls Bracket} is defined to be
\begin{equation}
(A,B) \equiv D_{A}^{-}B - (-1)^{AB}D_{B}^{-}A.
\end{equation}
Using the definition \eqref{DAB_def} and the reciprocity relation for physical observables \eqref{rec_rel_ph_obs} one may re-express this bracket in the form
\begin{eqnarray}
(A,B) &=& D_{A}^{-}B - D_{A}^{+}B \nonumber \\
&=& \tensor{A}{_{,i}} \; \tensor{G}{^+ ^{ij}} \; \tensor[_{j,}]{B}{} - \tensor{A}{_{,i}} \; \tensor{G}{^- ^{ij}} \; \tensor[_{j,}]{B}{} \nonumber \\
&=& \tensor{A}{_{,i}} \; \tensor{\tilde{G}}{^{ij}} \; \tensor[_{j,}]{B}{},  \label{P_bracket}
\end{eqnarray}
where 
\begin{equation}
\tensor{\tilde{G}}{^{ij}} \equiv \tensor{G}{^+ ^{ij}} - \tensor{G}{^- ^{ij}}.
\end{equation}
In anticipation of its role in quantum theory $\tilde{G}$ wll be called the \textit{supercommutator function}; it has the symmetry properties
\begin{eqnarray}
\tensor{\tilde{G}}{^{ji}} &=&  \tensor{G}{^+ ^{ji}} - \tensor{G}{^- ^{ji}} \nonumber\\
&=&(-1)^{ij} (\tensor{G}{^- ^{ij}} -\tensor{G}{^+ ^{ji}}) \nonumber \\
&=&-(-1)^{ij} \tensor{\tilde{G}}{^{ij}}, \label{symm_supercomm}
\end{eqnarray}
or, in supercondensed notation
\begin{equation}
\tilde{G}^{\sim} = - \tilde{G}.
\end{equation}

Unlike $D_{A}^{\pm}B$ \textit{the Peierls bracket}, as a functional of $\phi$, \textit{is always a physical observable} no matter whether $A$ and $B$ are absolute invariants or conditional invariants. This follows from \eqref{DAB_Qxi}, which yields
\begin{eqnarray}
\tensor[]{(A,B)}{_1} Q\xi &=& \tensor{(D_{A}^{-}B)}{_1} Q\xi - \tensor{(D_{A}^{+}B)}{_1} Q\xi \nonumber \\
&=&0.
\end{eqnarray}
Moreover, we also have, using \eqref{DAB_bar}
\begin{eqnarray}
(\bar{A},\bar{B}) &=& D_{\bar{A}}^{-}\bar{B} - D_{\bar{A}}^{+}\bar{B} \nonumber \\
&=& (D_{A}^{-}B -\tensor{A}{_1} \; b - \tensor{a}{^\sim}  \; \tensor[_1]{B}{} - a^\sim \; \tensor[_1]{S}{_1} \; b)  \nonumber\\
&& - (D_{A}^{+}B-\tensor{A}{_1} \; b - \tensor{a}{^\sim}  \; \tensor[_1]{B}{} - a^\sim \; \tensor[_1]{S}{_1} \; b)  \nonumber\\
&=& D_{A}^{-}B - D_{A}^{+}B \nonumber\\
&=& (A,B), \label{bracket_bar_AB}
\end{eqnarray}
where $\bar{A}$, $\bar{B}$ are given by \eqref{bar_AB}. Equation \eqref{bracket_bar_AB} shows that \textit{it is immaterial whether the dynamical equations are used before or after computing the Peierls bracket. That is, use of Peierls bracket commutes with use of any on shell conditions or restrictions}.

If the dynamical system possesses no invariant flows, then the $\phi$ are themselves physical observables, and eq. \eqref{P_bracket} implies 
\begin{equation}
(\phi^i,\phi^j) = \tensor{\tilde{G}}{^{ij}} \label{P_bracket_phi)}
\end{equation}
When invariant flows are present the Peierls bracket of the $\phi$ is not defined. However, in computing the brackets of observables one may proceed \textit{as if} it were given by eq. \eqref{P_bracket_phi)}.
\subsection{The bracket identities}
Let ${A^{\alpha}}$ and ${B^{\alpha}}$ be any two families of physical observables, and let $U(A)$ and $V(B)$ be two any functions of these families. Equation \eqref{P_bracket} has the immediate corollary
\begin{equation}
(U(A),V(B))= U(A) \frac{\overleftarrow{\partial}}{\partial A^\alpha}(A^{\alpha},B^{\beta}) \frac{\overrightarrow{\partial}}{\partial B^\beta} V(B). \label{P_bracket_functions}
\end{equation}
The following properties are called \textit{simple} identities satisfied by the Peierls Bracket; their proof is straightforward:
\begin{eqnarray}
(A,B+C) &=& (A,B) + (A,C),  \\
(A,BC)  &=& (A,B)C + (-1)^{AB}B(A,C), \\
(A,B)   &=& -(-1)^{AB}(B,A).
\end{eqnarray}

\section{Functional Integration: QFT with no Invariance Flows}
\subsection{Problems with heuristic quantization rules}
Given a classical field theory, how does one pass from the classical theory to the quantum theory? Traditionally, one attempts to answer the first question by starting from a \textit{classical} (or \textit{super}classical) dynamical system and obtaining from it a corresponding quantum system. Each real dynamical variable $\phi$ is replaced by a self-adjoint linear operator $\hat{\phi}$ of the same type ($c$-type or $a$-type), and these operators are assumed to satisfy differential equations similar, if not identical, in form to the dynamical equations of the classical theory. The only differences are: (i) a particular choice of factor ordering may have to be made, (ii) renormalization constants may have to be inserted, and (iii) in order to maintain consistency one may have to add extra terms that do not appear in the (super)classical theory. 

In the same way, a functional $R[\phi]$ on $\Phi$ is replaced by an operator $\hat{R}\equiv R[\hat{\phi}]$. The super Hilbert or Fock space on which the operators act is not given a priori but is constructed in such a way as to yield a representation of the operator superalgebra satisfied by the $\hat{\phi}$. This superalgebra is always determined in some way by the \textit{heuristic quantization rule}
\begin{equation}
(\phi^j,\phi^k) \mapsto (\hat{\phi}^j,\hat{\phi}^k)\equiv-i[\hat{\phi}^j,\hat{\phi}^k]  \; \; \; \; (\hbar=1), \label{Peierls_comutator}
\end{equation} 
which tries to identify, up to a factor $i$, each Peierls bracket with a supercommutator. When the (super)classical action $S$ possesses invariant flows, the variables $\hat{\phi}^i$, and hence the supercommutator \eqref{Peierls_comutator}, are defined only \textit{modulo} invariance transformations. If there are no invariant flows one may in principle write
\begin{equation}
[\hat{\phi}^j,\hat{\phi}^k]=i \hat{\tilde{G}}^{jk} ,
\end{equation}
but there is a difficulty. When the dynamical equations are nonlinear the quantum supercommutator function $\hat{\tilde{G}}^{ij}$ is not just the identity operator times the classical $\tilde{G}^{ij}$ but depends on the $\hat{\phi}^i$. It is usually difficult if not impossible to give a simple factor-ordering prescription for passing from $\tilde{G}^{ij}$ to $\hat{\tilde{G}}^{ij}$. The difficulty is even greater with the more general quantization rule 
\begin{equation}
[\hat{A},\hat{B}]=i (\hat{A},\hat{B}) \overset{?}{=}  i\tensor{\hat{A}}{_{,j}} \; \tensor{\hat{\tilde{G}}}{^{jk}} \; \tensor[_{k,}]{\hat{B}}{},
\end{equation}
which is applicable in principle to all physical observables (i.e., flow invariants). Even though simple factor-ordering prescriptions may exist for defining $\hat{A}$ and $\hat{B}$ there will often be no simple prescription for passing from the classical $\tensor{A}{_{,i}} \; \tensor{\tilde{G}}{^{ij}} \; \tensor[_{j,}]{B}{}$ to its quantum analog.

A possible way out of these difficulties will be discussed in the following. 

\subsection{Transition amplitudes}
In classical physics the dynamical equations are of central importance because their solutions correspond directly to reality. In quantum physics the situation is different. Solutions of the dynamical equations represent the system only in a generic sense. Correspondence to reality can be set up only when the state vector has been specified.

Instead of making direct use of the operator dynamical equations one can express the dynamical content of the quantum theory in another form, which brings the state vector into the picture and which is often more useful in applications. Let $\hat{A}$, $\hat{B}$ be any two physical observables of a given system which satisfy
\begin{equation}
{\rm supp}\; \hat{A}_{,i} > {\rm supp}\; \hat{B}_{,i}.
\end{equation}
That is, $\hat{A}$ is constructed out of $\hat{\phi}$ taken from a region of space-time that lies to the future of the region from which the $\hat{\phi}$ making up $\hat{B}$ are taken. Let $|a\rangle$ and $|b\rangle$ be normalized physical eigenvectors of $\hat{A}$ and $\hat{B}$ respectively, corresponding to physical eigenvalues $a$ and $b$. The inner product $\langle a|b \rangle$ is often called a \textit{transition amplitude}. If the state vector of the system is $|b\rangle$, then $\langle a|b \rangle$ is the \textit{probability amplitude} for the system to be found in the state represented by $|a\rangle$, i.e., for the value $a$ to be obtained when $\hat{A}$ is measured. The probability itself is $|\langle a|b \rangle|^2$. 

\subsection{The Schwinger variational principle}
Suppose the action functional of the classical theory suffers an infinitesimal change $\delta S$; the quantum theory will change accordingly, and we shall postulate that the associated change $\delta \hat{S}$ in the quantum action $\hat{S}$ is self-adjoint. This produces a change in the quantum dynamical equations and hence a change in their solutions $\hat{\phi}^i$. Suppose the forms of $\hat{A}$ and $\hat{B}$ as functional of the $\hat{\phi}^i$ remain unchanged. As operators, $\hat{A}$ and $\hat{B}$ will nevertheless be changed because the  $\hat{\phi}^i$ have changed. Denote these changes by $\delta\hat{A}$ and $\delta\hat{B}$  respectively. The eigenvectors $|a\rangle$ and $|b\rangle$ too will suffer changes $\delta|a\rangle$ and $\delta|b\rangle$. The precise nature of these changes will depend on boundary conditions. 

Suppose $\delta \hat{S}$ satisfies the condition 
\begin{equation}
{\rm supp}\; \hat{A}_{,i} >\delta \hat{S}_{,i}> {\rm supp} \; \hat{B}_{,i}. \label{kin_rel_Schw}
\end{equation}
That is, suppose $\delta \hat{S}$ is constructed out of $\hat{\phi}$'s taken from a region of space-time that lies to the past of the region associated with $\hat{A}$ and to the future of the region associated with $\hat{B}$. Suppose furthermore that retarded boundary conditions are adopted. Then at times to the past of the region associated with $\delta \hat{S}$ the dynamical variables $\hat{\phi}^i$ will remain unchanged. This means that
\begin{equation}
\delta \hat{B}=0. \label{perturb_B_Schw}
\end{equation}
The observable $\hat{A}$, on the other hand, suffers a change which, with DeWitt's notation, can be written as
\begin{equation}
\delta \hat{A}= D_{\delta \hat{S}}^{-}\hat{A}.
\end{equation}
In view of the kinematical relation \eqref{kin_rel_Schw} one has also:
\begin{equation}
D_{\hat{A}}^{-}\delta \hat{S}=0,
\end{equation}
and hence
\begin{equation}
\delta \hat{A}=D_{\delta \hat{S}}^{-}\hat{A}-D_{\hat{A}}^{-}\delta \hat{S} \equiv (\delta \hat{S},\hat{A}).
\end{equation}
Therefore, imposing the heuristic quantization rule, one finds
\begin{equation}
\delta \hat{A}=-i[\delta \hat{S},\hat{A}]. \label{perturb_obs_Schw}
\end{equation}
Since the relation between Peierls brackets and supercommutators is only heuristic, the derivation of eq. \eqref{perturb_obs_Schw} is hardly rigorous. Indeed, if an arbitrary operator ordering is chosen for the dynamical equations, eq. \eqref{perturb_obs_Schw} need \textit{not} hold. However, there is an inevitability and elegance about this equation which suggests that one turn the problem around and demand that the dynamics be such that it \textit{does} hold, whatever operator ordering may be chosen for $\delta \hat{S}$ as a functional of the $\hat{\phi}$'s. Note that if it holds for $\hat{A} = \hat{\phi}^j$, with $j>\delta \hat{S}_{,i}$, \eqref{perturb_obs_Schw} reads
\begin{eqnarray}
\hat{\phi}^j+\delta \hat{\phi}^j&=&\hat{\phi}^j-i[\delta \hat{S},\hat{\phi}^j] \nonumber \\
&\equiv &-i\delta \hat{S}\hat{\phi}^j + i\hat{\phi}^j\delta \hat{S} \nonumber \\
&=& \hat{u}^{-1}\hat{\phi}^j\hat{u}, \label{unit_trans}
\end{eqnarray}
with
\begin{equation}
\hat{u} \equiv 1 + i\delta \hat{S},
\end{equation}
hence \eqref{unit_trans} is a unitary transformation, and \eqref{perturb_obs_Schw} holds for all $\hat{A}$ satisfying the kinematical inequality \eqref{kin_rel_Schw}:
\begin{eqnarray}
\hat{A}+\delta \hat{A}&=&A[\hat{\phi}+\delta\hat{\phi}] \nonumber \\
&=& A[\hat{u}^{-1}\hat{\phi}\hat{u}] \nonumber \\
&=& \hat{u}^{-1}A[\hat{\phi}]\hat{u} \nonumber \\
&=& \hat{A} -i[\delta \hat{S},\hat{A}]. 
\end{eqnarray}
In this work the previous equations will be postulated as rigorous statements of quantum dynamics; moreover, we shall try to costrain the structure $\delta \hat{S}$ so that corresponding to each classical theory (i.e., to each action functional $S$) there is a virtually unique quantum theory or at most a unique \textit{family} of quantum theories. 

\textit{Modulo} an ignorable phase change $i\delta\theta|a \rangle$, with $\delta\theta$ real $c$-number, the change \eqref{perturb_obs_Schw} induces a change in the eigenvector $|a \rangle$ given by 
\begin{equation}
|a \rangle + \delta |a \rangle = \hat{u}^{-1}|a \rangle, \label{perturb_autok_Schw}
\end{equation}
as is straightforward to verify:
\begin{eqnarray}
(\hat{A}+\delta \hat{A})(\hat{u}^{-1}|a \rangle) &=& (\hat{u}^{-1}A[\hat{\phi}]\hat{u})(\hat{u}^{-1}|a \rangle) \nonumber \\
&=& \hat{u}^{-1}A[\hat{\phi}]|a \rangle \nonumber \\
&=& a(\hat{u}^{-1}|a \rangle). 
\end{eqnarray}
Equation \eqref{perturb_autok_Schw} is equivalent to
\begin{equation}
\delta |a \rangle = -i\delta \hat{S}|a \rangle.
\end{equation}
Equation \eqref{perturb_B_Schw}, on the other hand, implies:
\begin{equation}
\delta |b \rangle = 0
\end{equation}
\textit{modulo} a similar ignorable phase change. Hence
\begin{eqnarray}
\delta \langle a|b \rangle &=& (\delta \langle a|)|b \rangle +  \langle a|(\delta |b \rangle) \nonumber\\
&=& i \langle a|\delta \hat{S}|b \rangle . \label{Schw_Princ}
\end{eqnarray}
Equation \eqref{Schw_Princ} is known as the \textit{Schwinger Variational Principle}. Although the Schwinger variational principle was \lq\lq derived'' through imposition of retarded boundary conditions, it is in fact independent of boundary conditions. For example, if advanced boundary conditions are imposed and use is made of the reciprocity relation \eqref{rec_rel_ph_obs}, then eqs. \eqref{perturb_obs_Schw} and \eqref{perturb_B_Schw} get replaced by 
\begin{eqnarray}
& &\delta \hat{A}=0, \\
& &\delta \hat{B}= D_{\delta \hat{S}}^{+}\hat{B} = D_{\hat{B}}^{-}\delta \hat{S} - D_{\delta \hat{S}}^{-}\hat{B} = (\hat{B},\delta \hat{S}) = -i[\hat{B},\delta \hat{S}],
\end{eqnarray}
which imply
\begin{equation}
\delta |a \rangle = 0, \; \; \; \; \; \; \delta |b \rangle = i\delta \hat{S}|b \rangle
\end{equation}
again leading to \eqref{Schw_Princ}.

Whether one imposes retarded or advanced boundary conditions, or something in between, the following statements are always true: 
\begin{itemize}
	\item[1.] The unperturbed dynamical equations continue to hold in the regions to the past and to the future of ${\rm supp }\; \delta \hat{S}_{,i}$.
	\item[2.] The $\hat{\phi} + \delta\hat{\phi}$ in these regions are related to the unperturbed $\hat{\phi}$ by unitary transformations. 
\end{itemize}
\begin{remark} 
	Being the variation \eqref{Schw_Princ} a unitary transformation the Schwinger principle is guaranteed to preserve both the probability interpretation of the quantum theory and the unit normalization of total probability. 
\end{remark}
\begin{remark} 
	The particular choice of physical observables $\hat{A}$ and $\hat{B}$ in the statement of the Schwinger principle is irrelevant. Only the condition \eqref{kin_rel_Schw} is important. Since the eigenvalues of more than one observable usually have to be specified in order to determine a quantum state uniquely, it will be convenient from now on to replace \eqref{Schw_Princ} by the more general statement 
	\begin{equation}
	\delta \langle {\rm out }|{\rm in } \rangle = i \langle  {\rm out }|\delta \hat{S}|{\rm in } \rangle,
	\end{equation}
	where $|{\rm in } \rangle$ and $|{\rm out } \rangle$ are state supervectors determined by some unspecified conditions on the dynamics in regions respectively to the past and to the future of the region in which one may wish to vary the action. 
\end{remark}
\subsection{External sources and chronological products}
When the action possesses no invariant flows, a particularly convenient way to vary $\hat{S}$ is to append to it a term of the form $J_{i}\hat{\phi}^i$, where the $J_{i}$ are pure supernumber-valued functions over space-time, $c$-type and real when $\hat{\phi}^i$ is $c$-type, $a$-type and imaginary when $\hat{\phi}^i$ is a-type. The $J_{i}$ are called \textit{external sources}.

Let the external sources suffer variations $\delta J_{i}$, whose supports are confined to the space-time region lying, in time, between the regions associated with the state supervectors $|{\rm in } \rangle$ and $|{\rm out } \rangle$. Then the transition amplitude  $\langle {\rm out }|{\rm in } \rangle$ suffers the change 
\begin{equation}
\delta \langle {\rm out }|{\rm in } \rangle = i \langle  {\rm out }|\delta J_{j}\hat{\phi}^j|{\rm in } \rangle,
\end{equation}
which implies
\begin{equation}
\frac{\overrightarrow{\delta}}{i\delta J_{j}} \langle {\rm out }|{\rm in } \rangle = (-1)^{jF} \langle  {\rm out }|\hat{\phi}^j|{\rm in } \rangle,
\end{equation}
where $F$ is the fermionic number of $|{\rm out } \rangle$, i.e., $F$ is $0$ or $1$ according as $|{\rm out } \rangle$ is $c$-type or $a$-type. Let ${|\phi\rangle}$ be a complete set of normalized physical eigenvectors of $\hat{\phi}^j$, corresponding to the eigenvalues $\phi^j$. Such eigenvectors exist since, when no invariant flows are present, the $\hat{\phi}^j$ are physical observables. The previous equation may then be rewritten in the form 
\begin{equation}
\frac{\overrightarrow{\delta}}{i\delta J_{j}} \langle {\rm out }|{\rm in } \rangle = (-1)^{jF} \sum \langle  {\rm out }|\phi \rangle \phi^j \langle \phi|{\rm in } \rangle ,\label{deltaJ1}
\end{equation}
where the summation is over all the $|\phi \rangle$. 
Now let $\delta J_{i}$ be a second variation in the sources, and suppose ${\rm supp }\; \delta J_{i}>j$. Then the factor $\langle \phi|{\rm in } \rangle$ in 	\eqref{deltaJ1} remains unchanged, and
\begin{eqnarray}
\delta \frac{\overrightarrow{\delta}}{i\delta J_{j}} \langle {\rm out }|{\rm in } \rangle &=& (-1)^{jF} \sum (\delta \langle  {\rm out }|\phi \rangle) \phi^j \langle \phi|{\rm in } \rangle \nonumber \\
&=& (-1)^{jF} \sum (\langle  {\rm out }|\delta J_{k}\hat{\phi}^k|\phi \rangle) \phi^j \langle \phi|{\rm in } \rangle \nonumber \\
&=& (-1)^{jF} \langle  {\rm out }|\delta J_{k}\hat{\phi}^k \hat{\phi}^j {\rm in } \rangle.
\end{eqnarray}
Therefore
\begin{equation}
\frac{\overrightarrow{\delta}}{i\delta J_{i}} \frac{\overrightarrow{\delta}}{i\delta J_{j}} \langle {\rm out }|{\rm in } \rangle = (-1)^{(i+j)F} \langle  {\rm out }|\hat{\phi}^i \hat{\phi}^j|{\rm in } \rangle.
\end{equation}
If, on the other hand, ${\rm supp }\; \delta J_{i}<j$, then 
\begin{eqnarray}
\delta \frac{\overrightarrow{\delta}}{i\delta J_{j}} \langle {\rm out }|{\rm in } \rangle &=& (-1)^{jF} \sum  \langle  {\rm out }|\phi \rangle \phi^j (\delta\langle \phi|{\rm in } \rangle) \nonumber \\
&=& i(-1)^{jF} \sum  \langle  {\rm out }|\phi \rangle \phi^j (\langle \phi|\delta J_{k}\hat{\phi}^k|{\rm in } \rangle) \nonumber \\
&=& i(-1)^{jF} \langle  {\rm out }|\hat{\phi}^j \delta J_{k}\hat{\phi}^k|{\rm in } \rangle) \nonumber \\
&=& i(-1)^{jF} \langle  {\rm out }|\hat{\phi}^j \delta J_{k}\hat{\phi}^k|{\rm in } \rangle),
\end{eqnarray}
and 
\begin{equation}
\frac{\overrightarrow{\delta}}{i\delta J_{i}} \frac{\overrightarrow{\delta}}{i\delta J_{j}} \langle {\rm out }|{\rm in } \rangle = (-1)^{(i+j)F+ij}\langle  {\rm out }|\hat{\phi}^j \hat{\phi}^i|{\rm in } \rangle.
\end{equation}
Continuing in this manner one obtains, quite generally, 
\begin{equation}
\frac{\overrightarrow{\delta}}{i\delta J_{i_1}}... \frac{\overrightarrow{\delta}}{i\delta J_{i_n}} \langle {\rm out }|{\rm in } \rangle = (-1)^{(i_1+...i_n)F} \langle {\rm out }|\mathcal{T}(\hat{\phi}^{i_1}...\hat{\phi}^{i_n})|{\rm in } \rangle \label{funct_der_n}
\end{equation}
where $\mathcal{T}$ is the \textit{chronological ordering operator}, which rearranges the factors $\hat{\phi}_{i_1}...\hat{\phi}_{i_n}$ so that the times associated with the indices appear in chronological sequence, increasing from right to left, and which inserts an additional factor $-1$ for each interchange of a pair of $a$-type indices that occurs in the carrying out of this rearrangement. 

In the above derivation of \eqref{funct_der_n} the times associated with the indices were assumed to be in a well defined chronological order. However, we shall ultimately need to give meaning to the \textit{chronological product} $\mathcal{T}(\hat{\phi}_{i_1}...\hat{\phi}_{i_n})$ for arbitrary relative orientations of the space-time points associated with the indices. When the points associated with an index pair $i, j$ are separated by a spacelike interval there is no ambiguity in the chronological product because the supercommutator function $\hat{\tilde{G}}^{ij}$ then vanishes. Problems arise in the limit when two points coincide: it will be seen later that these problems are all resolved by requiring the $\mathcal{T}$-operation to commute with both differentiation and integration with respect to space-time coordinates: this requirement is equivalent to first imposing the linearity condition 
\begin{equation}
\mathcal{T}((\alpha\hat{\phi}^{i}+\beta\hat{\phi}^{j})\hat{\phi}^{k}\hat{\phi}^{l}...) = \alpha\mathcal{T}(\alpha\hat{\phi}^{i}\hat{\phi}^{k}\hat{\phi}^{l}...)+\beta\mathcal{T}(\hat{\phi}^{j}\hat{\phi}^{k}\hat{\phi}^{l}...) \label{T_requirements}
\end{equation}
for all $\alpha, \beta \in \Lambda_{\infty}$ and then requiring the $\mathcal{T}$-operation to commute with certain operations involving passage to a limit. 
\begin{remark}
	The above requirements have the consequence that an expression like $\langle {\rm out }|\mathcal{T}(\hat{S}_{,i}\hat{\phi}^{i}\hat{\phi}^{j})|{\rm in } \rangle$ does not generally vanish despite the fact that $\hat{S}_{,i}$ is zero when the operators that compose it are ordered appropriately for the operator dynamical equations.
\end{remark}

Now let $A[\phi]$ be any functional of the classical $\phi$ for which ${\rm supp }\; A_{,i}$ lies between the \lq\lq in'' and \lq\lq out'' regions, and which possesses a functional Taylor expansion about $\phi=0$ with a nonzero radius of convergence: 
\begin{equation}
A[\phi] = A[0]+A_{1}[0]\phi+\frac{1}{2}A_{2}[0]\phi\phi+... \; .
\end{equation}
It then follows from eq. \eqref{funct_der_n} and the requirements illustrated by eq. \eqref{T_requirements} that 
\begin{eqnarray}
\langle {\rm out }|\mathcal{T}(A[\hat{\phi}])|{\rm in } \rangle = (-1)^{AF} A\left[\frac{\overrightarrow{\delta}}{i\delta J}\right]\langle {\rm out }|{\rm in } \rangle &,& \label{def_TA1}\\
A\left[\frac{\overrightarrow{\delta}}{i\delta J}\right]  \equiv  A[0]+A_{1}[0]\frac{\overrightarrow{\delta}}{i\delta J}+\frac{1}{2}A_{2}[0]\frac{\overrightarrow{\delta}}{i\delta J}\frac{\overrightarrow{\delta}}{i\delta J}+ ... \; &.& \label{def_TA2}
\end{eqnarray}
The previous equations provide a way of associating an operator, i.e., $\mathcal{T}(A[\hat{\phi}])$, with each classical functional $A[\phi]$ having appropriate properties. If $A[\phi]$ has only a finite radius of convergence, then, strictly speaking, $\mathcal{T}(A[\hat{\phi}])$ is not defined by eqs. \eqref{def_TA1} and \eqref{def_TA2}, but it can often be given a meaning, for given \lq\lq in'' and \lq\lq out'' states, by analytic continuation. One can also frequently give $\mathcal{T}(A[\hat{\phi}])$ a meaning, even when $A[\phi]$ is singular at $\phi =0$, by expanding about a different point and continuing analytically. Equation \eqref{def_TA1} finds a wide variety of applications in quantum field theory. 

\subsection{The operator dynamical equations and the measure functional}
The association established by eq. \eqref{def_TA1}, between a classical functional and a quantum operator, suggests that when $A[\phi]$ is a classical observable the operator $\mathcal{T}(A[\hat{\phi}])$ might be taken as its quantum counterpart. This suggestion is often valid when $\mathcal{T}(A[\hat{\phi}])$ is self-adjoint. However, \textit{the chronological product is not self-adjoint in general, even when $A[\phi]$ is real because the operation of taking the adjoint reverses the order of all factors, placing them in antichronological order}.

In order for $\mathcal{T}(A[\hat{\phi}])$ to be self-adjoint $A[\phi]$ must usually be \textit{local}, i.e., built out of $\phi$'s and their derivatives all taken at the same space-time point. But this is not sufficient to guarantee the self-adjointness of $\mathcal{T}(A[\hat{\phi}])$. For example, $\mathcal{T}(S_{,i}[\hat{\phi}])$ is not generally self-adjoint (or anti-self-adjoint if the index $i$ is $a$-type rather than $c$-type). Hence the operator dynamical equations of the quantum theory should not be taken in the form $\mathcal{T}(S_{,i}[\hat{\phi}])=0$ or, when external sources are present, in the form $\mathcal{T}(S_{,i}[\hat{\phi}])=-J_{i}$. What we shall assume instead is the validity of the following postulate:

	{\it there exists a functional $\mu[\phi]$, determined by the classical action $S[\phi]$, such that the operator dynamical equations take the form} 
	\begin{equation}
	\mathcal{T}\left( \lbrace S[\hat{\phi}]-i\log \mu[\hat{\phi}]\rbrace \frac{\overleftarrow{\delta}}{\delta \phi^k}\right)=-J_{k} . \label{post_eq}
	\end{equation}
The functional $\mu[\phi]$ is known as the \textit{measure functional}. At the simplest level it may be thought of as correcting for the lack of self-adjointness or anti-self-adjointness of $\mathcal{T}(S_{,i}[\hat{\phi}])$. But it plays a far deeper role than this: once it is chosen the quantum theory is completely determined up to a finite-parameter family. Establishment of a correspondence between each classical theory and a unique quantum theory (or family of quantum theories) is therefore achieved by making $\mu[\phi]$ depend in a definite way on $S[\phi]$. How this dependence is itself to be chosen is a major question, which has to be approached in steps, but there is an easy argument that leads quickly to at least an approximate answer. 

In the quantum theory, as in the classical theory, it is often convenient to separate the dynamical variables $\hat{\phi}^i$ into a background $\phi^i_0$ and a remainder $\hat{\phi}^i_1$. If the background $\phi^i_0$ is classical, i.e., it is a pure supernumber-valued function times the identity operator, then the $\hat{\phi}^i_1$ satisfy the same supercommutation relations as the $\hat{\phi}^i$:
\begin{equation}
[\hat{\phi}^j_1,\hat{\phi}^k_1] = i \hat{\tilde{G}}^{jk}.
\end{equation} 
In terms of the $\hat{\phi}^i_1$ the operator dynamical equations take the form 
\begin{eqnarray}
& &S_{,i}[\phi_0] + S_{,ik}[\phi_0]\hat{\phi}^k_1 + \frac{1}{2}S_{,ikl}[\phi_0]\hat{\phi}^l _1\hat{\phi}^k _1 + O(\hbar ^2)+O(\hat{\phi}_1 ^3)=-J_{i} \label{taylor_din_eq}
\end{eqnarray}
The terms of order $0$ and $1$ in $\hat{\phi}_1$ on the left-hand side are unambiguous. Terms of higher order are not unambiguous since we do not yet know the complete factor-oredering rules. The coefficients of the higher order terms will generally \textit{not} be just the classical coefficients $S_{,ijk...}$. However, they will differ from the classical coefficients by terms of order $\hbar ^2$, which arise when the $\hat{\phi}_1$ 's are ordered as in \eqref{taylor_din_eq} and which will be dropped in the present approximate analysis. 

Ta\-ki\-ng t\-he su\-perco\-mm\-uta\-tor of \eqref{taylor_din_eq} wi\-th $\hat{\phi}^j_1$, re\-me\-mber\-ing that the s\-ou\-rce\-s $J_{i}$ are sup\-ern\-umber-valued, and moving the index $i$ to the left, one finds 
\begin{eqnarray}
& & S_{,ik}[\phi_0][\hat{\phi}^k_1,\hat{\phi}^j_1] + \frac{1}{2}S_{,ikl}[\phi_0][\hat{\phi}^l _1\hat{\phi}^k _1,\hat{\phi}^j_1] + ... = 0, \nonumber \\
& &i S_{,ik}[\phi_0] \hat{\tilde{G}}^{kj} - (-1)^{j(k+l)}\frac{1}{2}S_{,ikl}[\phi_0][\hat{\phi}^j_1,\hat{\phi}^l _1\hat{\phi}^k _1] + .... = 0, \nonumber \\
& &i S_{,ik}[\phi_0] \hat{\tilde{G}}^{kj} - (-1)^{j(k+l)}\frac{1}{2}S_{,ikl}[\phi_0] ( [\hat{\phi}^j_1,\hat{\phi}^l _1]\hat{\phi}^k _1 \nonumber \\ & & \; \; \; + (-1)^{jl}\hat{\phi}^l _1[\hat{\phi}^j_1,\hat{\phi}^k _1] )+... = 0, \nonumber \\
& &i S_{,ik}[\phi_0] \hat{\tilde{G}}^{kj} - (-1)^{j(k+l)}\frac{1}{2}S_{,ikl}[\phi_0] \left( i \hat{\tilde{G}}^{jl}\hat{\phi}^k _1 + i(-1)^{jl}\hat{\phi}^l _1  \hat{\tilde{G}}^{jk} \right)+... = 0, \nonumber \\ 
& &S_{,ik}[\phi_0] \hat{\tilde{G}}^{kj} - (-1)^{j(k+l)}\frac{1}{2}S_{,ikl} [\phi_0]\left(\hat{\tilde{G}}^{jl}\hat{\phi}^k _1 + (-1)^{jl}\hat{\phi}^l _1  \hat{\tilde{G}}^{jk} \right)+... = 0 ,\nonumber \\
& &\tensor[_{i,}]{S}{_{,k}}[\phi_0]\hat{\tilde{G}}^{kj}- (-1)^{j(k+l)}\frac{1}{2}\tensor[_{i,}]{S}{_{,kl}}[\phi_0]\hat{\tilde{G}}^{jl}\hat{\phi}^k _1 \nonumber\\& &\; \; \;- (-1)^{j(k+l)+jl}\frac{1}{2}\tensor[_{i,}]{S}{_{,kl}}[\phi_0]\hat{\phi}^l _1  \hat{\tilde{G}}^{jk} +... = 0, \nonumber \\
& &\tensor[_{i,}]{S}{_{,k}}[\phi_0]\hat{\tilde{G}}^{kj}+ (-1)^{jk}\frac{1}{2}\tensor[_{i,}]{S}{_{,kl}}[\phi_0]\hat{\tilde{G}}^{lj}\hat{\phi}^k _1 + \frac{1}{2}\tensor[_{i,}]{S}{_{,kl}}[\phi_0]\hat{\phi}^l _1  \hat{\tilde{G}}^{kj} +... = 0. \nonumber \\
\end{eqnarray}
The quantum supercommutator function, like the classical supercommutator function, can be expressed as the difference between an advanced Green's function and a retarded Green's function, both now operator-valued: 
\begin{equation}
\hat{\tilde{G}}^{ij} = \hat{G}^{+ij}-\hat{G}^{-ij},
\end{equation}
where
\begin{eqnarray}
& &\tensor[_{i,}]{S}{_{,k}}[\hat{\phi}]\hat{G}^{\pm kj} = - \tensor[_i]{\delta}{^j}, \nonumber \\
& &\tensor[_{i,}]{S}{_{,k}}[\phi_0]\hat{G}^{\pm kj}+ (-1)^{jk}\frac{1}{2}\tensor[_{i,}]{S}{_{,kl}}[\phi_0]\hat{G}^{\pm lj}\hat{\phi}^k _1 +\nonumber \\ & &\; \; \;+ \frac{1}{2}\tensor[_{i,}]{S}{_{,kl}}[\phi_0]\hat{\phi}^l _1  \hat{G}^{\pm kj} +... = - \tensor[_i]{\delta}{^j} .
\end{eqnarray} 
The previous equation, as is straightforward to verify, can be solved by iteration, yielding:
\begin{eqnarray}
\hat{G}^{\pm ij} &=& G^{\pm ij}[\phi_0]+\frac{1}{2}(-1)^{jk} G^{\pm im}[\phi_0] \; \tensor[_{m,}]{S}{_{,kl}} \;  G^{\pm lj}[\phi_0]\hat{\phi}^k _1 \nonumber \\ & &+ \frac{1}{2}G^{\pm im}[\phi_0] \; \tensor[_{m,}]{S}{_{,kl}} \; \hat{\phi}^l _1  G^{\pm kj}[\phi_0] + ... \nonumber \\
&=&  G^{\pm ij}[\phi_0] +  G^{\pm ij}_{,k}[\phi_0]\hat{\phi}^k _1 + ... \; .
\end{eqnarray}
Again it is not possible to specify the forms of the higher terms in the expansion. The coefficients of these terms will generally \textit{not} be just functional derivatives $G^{\pm ij}_{,kl...}[\phi_0]$ of the classical Green's functions. 

Introduce now the space-time generalization of the step function:
\begin{equation}
\theta(i,j') \equiv 
\begin{cases}
1 & {\rm if} \; \;  x^{0}>x'^{0} \\
\frac{1}{2 } & {\rm if} \; \; x^{0}=x'^{0} \\
0  & {\rm if} \; \; x^{0}<x'^{0} 
\end{cases}
\end{equation}
where $x^{0}$ is a global timelike coordinate and the submanifolds $x^{0} = {\rm constant}$ are complete spacelike Cauchy hypersurfaces. If the points $x$ and $x'$ are not in the immediate vicinity of one another, then 
\begin{eqnarray}
\hat{\phi}^i \hat{\phi}^{j'} - \mathcal{T}(\hat{\phi}^i \hat{\phi}^{j'}) &=& \left[\theta(i,j')+\theta(j',i)\right]\hat{\phi}^i \hat{\phi}^{j'}-\theta(i,j')\hat{\phi}^i \hat{\phi}^{j'} \nonumber \\ & &-(-1)^{ij'}\theta(j',i)\hat{\phi}^{j'}\hat{\phi}^i \nonumber \\
&=& \theta(j',i)\left[\hat{\phi}^i,\hat{\phi}^{j'} \right] \nonumber \\
&=& i\theta(j',i)\hat{\tilde{G}}^{ij'} \nonumber \\
&=& i\hat{G}^{+ij'}.
\end{eqnarray} 
We shall assume that this equation in fact holds for \textit{all} $x$, $x'$, at least up to the order needed in our analysis of the expanded dynamical equations. 
Equation \eqref{taylor_din_eq} may then be written in the form 
\begin{eqnarray}
-J_{l} &=& S_{,l}[\phi_0] + S_{,lk}[\phi_0]\hat{\phi}^k_1 + \frac{1}{2}\tensor{S}{_{,ljk}}[\phi_0]\left[ \mathcal{T}(\hat{\phi}^k \hat{\phi}^{j})+i\hat{G}^{+kj}\right] + ... \nonumber \\
&=&  \mathcal{T}(S_{,l}[\hat{\phi}]+ \frac{1}{2}i\tensor{S}{_{,ljk}}[\hat{\phi}]G^{+kj}[\hat{\phi}]+...),
\end{eqnarray}
where the dots in the second line stands not only for the unwritten terms in the first line but also for the error in replacing $\tensor{S}{_{ijk}}[\phi_0]\hat{G}^{+kj}$ by $\mathcal{T}( \tensor{S}{_{,ijk}}[\hat{\phi}]G^{+kj}[\hat{\phi})$. The previous expression can be simplified by recalling that $\tensor[_1]{S}{_1}$ is a negative inverse of  $G^{+}$; therefore:
\begin{eqnarray}
\tensor{S}{_{,ijk}}\;\tensor{G}{^+^{kj}} &=& -\tensor{S}{_{,jk}}\;\tensor{G}{^+^{kj}_{,i}} +(\tensor{S}{_{,jk}}\;\tensor{G}{^+^{kj}})_{,i} \nonumber \\
&=& -\tensor{S}{_{,jk}}\;\tensor{G}{^+^{kj}_{,i}} +(-\tensor{\delta}{_{k}^{k}})_{,i} \nonumber \\
&=& -\tensor{S}{_{,jk}}\;\tensor{G}{^+^{kj}_{,i}} \nonumber \\
&=& -(-1)^j\tensor[_{j,}]{S}{_{,k}}\;\tensor{G}{^+^{kj}_{,i}} \nonumber \\
&=& \log |{\rm sdet} G^{+}|)_{,i}.
\end{eqnarray}
Thus
\begin{equation}
-J_{k} = \mathcal{T}(S_{,k}[\hat{\phi}]+ \frac{1}{2}i(\log|{\rm sdet} G^{+}[\hat{\phi}]|)_{,k} + ...). \label{appr_post_eq}
\end{equation}
Comparison of eqs. \eqref{post_eq} and \eqref{appr_post_eq} yields 
\begin{equation}
\mu[\phi] \approx {\rm const } \cdot |{\rm sdet} G^{+}[\phi]|^{-\frac{1}{2}}. \label{appr_meas}
\end{equation}

\subsection{Functional Fourier analysis. The Feynman functional integral}
The transition amplitude $\langle {\rm out }|{\rm in } \rangle$ is a functional of the external sources. Let us try to express it as a functional Fourier integral: 
\begin{eqnarray}
& &\langle {\rm out }|{\rm in } \rangle = \int X[\phi] e^{iJ\phi} [d\phi] = \int e^{iJ\phi} X[\phi] [d\phi], \\
& &[d\phi] \equiv \prod_{i} d\phi^i . \label{field_measure}
\end{eqnarray}
Here the $\phi^i$ are supernumber-valued variables of integration, and the product in eq. \eqref{field_measure} is a continuous infinite one. Both it and the integral itself are thus formal expressions. Integration over the $c$-number variables is to be understood as patterned on ordinary integration. Integration over the $a$-number variables is to be understood as an infinite limit of a multiple Berezin integral (see \cite{berazin2012method} and \cite{dewitt1992supermanifolds}). The integral is to be understood as taken over a certain subspace of the space of field histories $\Phi$, whose properties will be indicated below. 
Assuming the validity of integrating by parts, and making use of eqs. \eqref{def_TA1} and \eqref{post_eq}, one may write 
\begin{eqnarray}
& &\int X[\phi] \frac{\overleftarrow{\delta}}{i\delta\phi^k}e^{iJ\phi} [d\phi] \nonumber = \\ 
&=& - \int X (e^{iJ\phi} \frac{\overleftarrow{\delta}}{i\delta\phi^k}) [d\phi]  \nonumber\\
&=& - \int X (e^{iJ\phi} \frac{\overleftarrow{\delta}}{i\delta\phi^k}) [d\phi]  \nonumber\\
&=& - \int X e^{iJ\phi} J_{k} [d\phi]  \nonumber\\
&=& - \int X J_{k} e^{iJ\phi}  [d\phi]  \nonumber\\
&=& -(-1)^{kX} J_{k}\int X  e^{iJ\phi}  [d\phi]  \nonumber\\
&=& -(-1)^{kX} J_{k}\int X  e^{iJ\phi}  [d\phi]  \nonumber\\
&=& -(-1)^{kX} J_{k}\langle {\rm out }|{\rm in } \rangle \nonumber\\
&=& (-1)^{k(X+F)} \langle {\rm out }|- J_{k}|{\rm in } \rangle \nonumber\\ 
&=& (-1)^{k(X+F)} \langle {\rm out }|\mathcal{T}\left( \lbrace S[\hat{\phi}]-i\log\mu[\hat{\phi}]\rbrace \frac{\overleftarrow{\delta}}{\delta \phi^k}\right)|{\rm in } \rangle \nonumber\\ 
&=& (-1)^{kX}\lbrace S_{,k}\left[\overrightarrow{\delta}/i\delta J\right]-i\mu^{-1}\left[\overrightarrow{\delta}/i\delta J\right]\mu_{,k}\left[\overrightarrow{\delta}/i\delta J\right]\rbrace \langle {\rm out }|{\rm in } \rangle  \nonumber\\
&=& (-1)^{kX}\lbrace S_{,k}\left[\overrightarrow{\delta}/i\delta J\right]-i\mu^{-1}\left[\overrightarrow{\delta}/i\delta J\right]\mu_{,k}\left[\overrightarrow{\delta}/i\delta J\right]\rbrace \int e^{iJ\phi} X[\phi] [d\phi]  \nonumber\\
&=& (-1)^{kX}\int \lbrace S[\phi]-i(\log\mu[\phi])\rbrace\frac{\overleftarrow{\delta}}{\delta\phi^k}  e^{iJ\phi} X[\phi] [d\phi] .
\end{eqnarray} 
Because of the uniqueness of Fourier integral representations the integrands in the first and last lines must be equal: 
\begin{equation}
X[\phi] \frac{\overleftarrow{\delta}}{i\delta\phi^k} = (-1)^{kX}\lbrace S[\phi]-i(\log\mu[\phi])\rbrace\frac{\overleftarrow{\delta}}{\delta\phi^k} X[\phi].
\end{equation}
One possible solution of this equation is 
\begin{equation}
X[\phi] = N e^{iS[\phi]}\mu[\phi],
\end{equation}
where $N$ is a constant of integration. This leads to 
\begin{equation}
\langle {\rm out }|{\rm in } \rangle = N \int  e^{i(S[\phi]+J\phi)}\mu[\phi] [d\phi]. \label{path_int}
\end{equation}

\subsection{The Schwinger variational principle revisited}
Expression \eqref{path_int}, when combined with eq. \eqref{def_TA1}, yields immediately a functional integral expression for the \lq\lq in-out'' matrix elements of chronological products: 
\begin{equation}
\langle {\rm out }|\mathcal{T}(A[\hat{\phi}])|{\rm in } \rangle = (-1)^{A(F+N)}N \int  A[\phi]e^{i(S[\phi]+J\phi)}\mu[\phi] [d\phi] . \label{def_TA3}
\end{equation}

The previous equation can be used to obtain a partial check on the consistency of the Feynman integral with the Schwinger variational principle which was used to derive it in the first place. Under a variation $\delta S$ in the functional form of the action, such that ${\rm supp }\;\delta S_{,i}$ lies in the \lq\lq in between'' region, eqs. \eqref{path_int} and \eqref{def_TA3} yield 
\begin{eqnarray}
\delta \langle {\rm out }|{\rm in } \rangle &=& N \int \left( i\delta S[\phi] e^{i(S[\phi]+J\phi)}\mu[\phi] + e^{i(S[\phi]+J\phi)} \delta \mu[\phi] \right) [d\phi] \nonumber \\
&=& N \int \left( i\delta S[\phi] e^{i(S[\phi]+J\phi)}\mu[\phi] + e^{i(S[\phi]+J\phi)}\mu[\phi] \delta \log \mu[\phi] \right) [d\phi]  \nonumber \\
&=& iN \int \left( \delta S[\phi]  -i \delta \log\mu[\phi]  \right)e^{i(S[\phi]+J\phi)}\mu[\phi] \nonumber \\
&=& i \langle {\rm out }|\mathcal{T}( \delta S[\hat{\phi}]  -i \delta \log\mu[\hat{\phi}] )|{\rm in } \rangle .
\end{eqnarray}
But from \eqref{appr_meas} one has 
\begin{eqnarray}
\delta \log \mu[\phi] &=& \mu^{-1}[\phi] \delta \mu[\phi] \nonumber \\
&\approx & |{\rm sdet} G^{+}[\phi]|^{\frac{1}{2}} \delta |{\rm sdet} G^{+}[\phi]|^{-\frac{1}{2}}  \nonumber \\
&=& |{\rm sdet} G^{+}[\phi]|^{\frac{1}{2}} \left(-\frac{1}{2}\right)|{\rm sdet} G^{+}[\phi]|^{-\frac{3}{2}}  \delta |{\rm sdet} G^{+}[\phi]|  \nonumber \\
&=& |{\rm sdet} G^{+}[\phi]|^{\frac{1}{2}} \left(-\frac{1}{2}\right)|{\rm sdet} G^{+}[\phi]|^{-\frac{3}{2}} |{\rm sdet} G^{+}[\phi]| {\rm str }( G^{+}[\phi] \delta\tensor[_1]{S}{_1}[\phi])\nonumber \\
&=& -\frac{1}{2} {\rm str }( G^{+}[\phi] \delta\tensor[_1]{S}{_1}[\phi]).
\end{eqnarray}
Hence 
\begin{equation}
\delta \langle {\rm out }|{\rm in } \rangle =  i \langle {\rm out }|\mathcal{T}( \delta S[\hat{\phi}]  + \frac{i}{2} {\rm str }( G^{+}[\hat{\phi}] \delta\tensor[_1]{S}{_1}[\hat{\phi}])+... )|{\rm in } \rangle.
\end{equation}
By the same kind of rearrangement as was used in obtaining eq. \eqref{appr_post_eq} one easily sees that the chronological product is, at least approximately, just $\delta S[\hat{\phi}]$ in its self-adjoint operator form. 

\subsection{Expressions involving $S_{,i}$}
Let $A[\phi]$ be an arbitrary functional of $\phi$ such that the support of $A_{,j}$ lies between the \lq\lq in'' and \lq\lq out'' regions. Since the functional integral respects the procedure of integration by parts, the following identity holds: 
\begin{eqnarray}
0 &=& (-1)^{(A+i)(F+N)}N \int [d\phi]\left(A[\phi]\mu[\phi]e^{i(S[\phi]+J\phi)} \right) \frac{\overleftarrow{\delta}}{\delta\phi^k} \nonumber \\
&=& \langle {\rm out }|\mathcal{T}(A_{,k}[\hat{\phi}]+A[\hat{\phi}](\log\mu)_{,k}[\hat{\phi}]+iA[\hat{\phi}] S_{,k}[\hat{\phi}]   )    |{\rm in } \rangle .
\end{eqnarray}
Since the \lq\lq in'' and \lq\lq out'' state supervectors may be chosen arbitrarily this is in fact a statement about chronological products of operators involving $\hat{S}_{,i}$ when the sources vanish: 
\begin{equation}
\mathcal{T}\left(    (S[\hat{\phi}] -i \log\mu [\hat{\phi}]         )\frac{\overleftarrow{\delta}}{\delta \phi^k}  A[\hat{\phi}]  \right) = i(-1)^{iA}\mathcal{T}\left(A_{,k}[\hat{\phi}] \right) .
\end{equation}
This expression does not generally vanish despite eq. \eqref{post_eq}.

\section{Functional Integration in QFT with Invariance Flows}
\subsection{Structure of the space of field histories}
The task of extending the previous treatment to gauge theories, i.e., field theories whose action functional features invariance flows, requires a basic understanding of the properties of the space $\Phi$ of field histories, its structure and its geometry. We shall only focus on Type-I theories in which the coefficients $\tensor[]{c}{^\alpha _\beta _\gamma}$ are structure constants of an infinite dimensional Lie group, the gauge group, which we shall denote by $\mathcal{G}$. As already stated, $\Phi$ may be viewed as a principal bundle having $\mathcal{G}$ as its typical fibre. Real physics takes place in the base space of this bundle, i.e., the space of orbits (or fibres), denoted by $\Phi/\mathcal{G}$.

Since $\mathcal{G}$ is a group manifold it admits an invariant Riemannian or pseudo-Riemannian metric. This metric can be extended in an infinity of ways to a group (or flow) invariant metric on $\Phi$. But it turns out that if one requires the extended metric to be \textit{ultralocal} (a requirement that greatly simplifies the analysis of any formalism in which it is used) then, up to a scale factor, it is unique in the case of the Yang-Mills field and belongs to a one-parameter family in the case of gravity, these being the two primary gauge systems of interest. 

Let us denote this metric tensor by $\gamma$ and its components in the chart specified by the dynamical variables $\phi^i$ by $\tensor[_i]{\gamma}{_j}$. Let $x$ and $x'$ be the space-time points specified by the indices $i$ and $j$. Ultralocality of $\gamma$ is the condition that $\tensor[_i]{\gamma}{_j}$ be equal to the undifferentiated delta distributon $\delta(x, x')$ times a coefficient that involves no space-time derivatives of fields. Group invariance of $\gamma$ is the statement
\begin{equation}
\mathcal{L}_{Q_{\alpha}}\gamma=0 .
\end{equation} 

The $Q_{\alpha}$ are Killing vectors for the metric $\gamma$ and vertical vector fields for the principal bundle $\Phi$.
In the following, we shall rely on the following convention: in gauge theories the indices from the first part of the Greek alphabet are always $c$-type. Therefore 
they need never appear in exponents of $(-1)$. Indices of type $a$ from the middle of the Latin alphabet refer to fermion fields that may be coupled to the basic gauge fields. 

Choice of an invariant metric on $\Phi$ immediately singles out a natural family of connection 1-forms $\omega^\alpha$ on $\Phi$: 
\begin{equation}
\tensor[_i]{\omega}{^\alpha} \equiv \tensor[_i]{Q}{_\beta} \; \mathcal{N}^{\beta\alpha},
\end{equation}
where $\tensor[_i]{Q}{_\alpha} = \tensor[_i]{\gamma}{_j}  \; \tensor[^j]{Q}{_\alpha}$ and $\mathcal{N}^{\beta\alpha}$ is any coherent Green's function of the real self-adjoint (and hence symmetric) operator 
\begin{equation}
\mathcal{M}_{\alpha\beta} \equiv - \tensor[^i]{Q}{_\alpha} \; \tensor[_i]{\gamma}{_j} \; \tensor[^j]{Q}{_\beta},
\end{equation}
which, in all cases of interest, turns out to be globally nonsingular, i.e., over the whole of $\Phi$. In view of the relation $\mathcal{M}_{\alpha\beta}\mathcal{N}^{\beta\gamma} = -\tensor{\delta}{_{\alpha}^{\gamma}}$ one easily sees that 
\begin{eqnarray}
\tensor[^i]{Q}{_\alpha} \; \tensor[_i]{\omega}{^\beta} &=& \tensor[^i]{Q}{_\alpha} \; \tensor[_i]{Q}{_\gamma} \; \mathcal{N}^{\gamma\beta} \nonumber\\
&=& \tensor[^i]{Q}{_\alpha} \; \tensor[_i]{\gamma}{_j}  \; \tensor[^j]{Q}{_\gamma} \; \mathcal{N}^{\gamma\beta}  \nonumber \\
&=& -\mathcal{M}_{\alpha\gamma} \; \mathcal{N}^{\gamma\beta}  \nonumber\\
&=& \tensor{\delta}{_{\alpha}^{\beta}} ,
\end{eqnarray}
and that horizontal vectors on $\Phi$ are those that are perpendicular (under the metric $\gamma$) to the fibres. A horizontal vector may be obtained from any vector by application of the \textit{horizontal projection operator}:
\begin{equation}
\tensor{\Pi}{^i_j} \equiv \tensor{\delta}{^i_j}-\tensor[^i]{Q}{_\alpha}\; \tensor{\omega}{^\alpha_j}, \; \; \;  \tensor{\omega}{^\alpha_i} \equiv (-1)^i \tensor[_i]{\omega}{^\alpha} .
\end{equation}
The name is justified by the following properties:
\begin{eqnarray}
\tensor{\Pi}{^i_j} \; \tensor{\Pi}{^j_k} &=&(\tensor{\delta}{^i_j}-\tensor[^i]{Q}{_\alpha}\; \tensor{\omega}{^\alpha_j})(\tensor{\delta}{^j_k}-\tensor[^j]{Q}{_\alpha}\; \tensor{\omega}{^\alpha_k}) \nonumber \\
&=& \tensor{\delta}{^i_k} -\tensor[^i]{Q}{_\alpha}\; \tensor{\omega}{^\alpha_k}-\tensor[^i]{Q}{_\alpha}\; \tensor{\omega}{^\alpha_k}+\tensor[^i]{Q}{_\alpha}\; \tensor{\omega}{^\alpha_j} \;\tensor[^j]{Q}{_\beta}\; \tensor{\omega}{^\beta_k} \nonumber \\
&=& \tensor{\delta}{^i_k} -2\;\tensor[^i]{Q}{_\alpha}\; \tensor{\omega}{^\alpha_k} +(-1)^j \; \tensor[^i]{Q}{_\alpha}\; \tensor[_j]{Q}{_\gamma}\; \mathcal{N}^{\gamma\alpha}   \;\tensor[^j]{Q}{_\beta}\; \tensor{\omega}{^\beta_k} \nonumber \\
&=& \tensor{\delta}{^i_k} -2\;\tensor[^i]{Q}{_\alpha}\; \tensor{\omega}{^\alpha_k} +(-1)^j \; \tensor[^i]{Q}{_\alpha}\; \tensor[_j]{\gamma}{_l} \; \tensor[^l]{Q}{_\gamma}\;  \mathcal{N}^{\gamma\alpha}   \;\tensor[^j]{Q}{_\beta}\; \tensor{\omega}{^\beta_k} \nonumber \\
&=& \tensor{\delta}{^i_k} -2\;\tensor[^i]{Q}{_\alpha}\; \tensor{\omega}{^\alpha_k} +(-1)^{j+j} \; \tensor[^i]{Q}{_\alpha}\; \tensor[^j]{Q}{_\beta} \;\tensor[_j]{\gamma}{_l} \; \tensor[^l]{Q}{_\gamma}\;  \mathcal{N}^{\gamma\alpha}  \; \tensor{\omega}{^\beta_k} \nonumber \\
&=& \tensor{\delta}{^i_k} -2\;\tensor[^i]{Q}{_\alpha}\; \tensor{\omega}{^\alpha_k} - \tensor[^i]{Q}{_\alpha} \; \mathcal{M}_{\beta\gamma}\;  \mathcal{N}^{\gamma\alpha}  \; \tensor{\omega}{^\beta_k} \nonumber \\
&=& \tensor{\delta}{^i_k} -2\;\tensor[^i]{Q}{_\alpha}\; \tensor{\omega}{^\alpha_k} + \tensor[^i]{Q}{_\alpha} \; \tensor[]{\delta}{_\beta^\alpha}  \; \tensor{\omega}{^\beta_k} \nonumber \\
&=& \tensor{\delta}{^i_k} -2\;\tensor[^i]{Q}{_\alpha}\; \tensor{\omega}{^\alpha_k} + \tensor[^i]{Q}{_\alpha} \; \tensor{\omega}{^\alpha_k} \nonumber \\
&=& \tensor{\delta}{^i_k} -\tensor[^i]{Q}{_\alpha}\; \tensor{\omega}{^\alpha_k} \nonumber \\
&=& \tensor{\Pi}{^i_k}, \\
\nonumber 
\end{eqnarray}
\begin{eqnarray}
\tensor{\omega}{^\alpha_i}\; \tensor{\Pi}{^i_j} &=& \tensor{\omega}{^\alpha_i} \; (\tensor{\delta}{^i_j}-\tensor[^i]{Q}{_\beta}\; \tensor{\omega}{^\beta_j}) \nonumber \\
&=& \tensor{\omega}{^\alpha_j} - \tensor{\omega}{^\alpha_i} \; \tensor[^i]{Q}{_\beta}\; \tensor{\omega}{^\beta_j} \nonumber \\
&=& \tensor{\omega}{^\alpha_j} - (-1)^{i+j} \; \tensor[_i]{\omega}{^\alpha} \; \tensor[^i]{Q}{_\beta}\;  \tensor[_j]{\omega}{^\beta}  \nonumber \\
&=& \tensor{\omega}{^\alpha_j} - (-1)^{j} \; \tensor[^i]{Q}{_\beta} \; \tensor[_i]{\omega}{^\alpha} \; \tensor[_j]{\omega}{^\beta}  \nonumber \\
&=& \tensor{\omega}{^\alpha_j} - (-1)^{j} \; \tensor[]{\delta}{_\beta^\alpha} \; \tensor[_j]{\omega}{^\beta}  \nonumber \\
&=& \tensor{\omega}{^\alpha_j} - (-1)^{j} \; \tensor[_j]{\omega}{^\alpha}  \nonumber \\
&=& \tensor{\omega}{^\alpha_j} - \tensor[]{\omega}{^\alpha_j}  \nonumber \\
&=& 0, \\
\nonumber 
\end{eqnarray}
\begin{eqnarray}
\tensor{\Pi}{^i_j} \; \tensor[^j]{Q}{_\alpha} &=& (\tensor{\delta}{^i_j}-\tensor[^i]{Q}{_\beta}\; \tensor{\omega}{^\beta_j}) \; \tensor[^j]{Q}{_\alpha} \nonumber \\
&=& \tensor[^i]{Q}{_\alpha} - \tensor[^i]{Q}{_\beta}\; \tensor{\omega}{^\beta_j} \tensor[^j]{Q}{_\alpha} \nonumber \\
&=& \tensor[^i]{Q}{_\alpha} - (-1)^{j} \; \tensor[^i]{Q}{_\beta}\; \tensor[_j]{\omega}{^\beta} \; \tensor[^j]{Q}{_\alpha} \nonumber \\
&=& \tensor[^i]{Q}{_\alpha} - (-1)^{j+j} \; \tensor[^i]{Q}{_\beta}\; \tensor[^j]{Q}{_\beta} \; \tensor[_j]{\omega}{^\alpha}  \nonumber \\
&=& \tensor[^i]{Q}{_\alpha} -  \tensor[^i]{Q}{_\beta}\; \tensor[]{\delta}{_\alpha^\beta} \nonumber \\
&=& \tensor[^i]{Q}{_\alpha} -  \tensor[^i]{Q}{_\alpha} \nonumber \\
&=& 0 .
\end{eqnarray}
\subsection{Fibre-adapted coordinate patches}
When dealing with $\Phi$ it is convenient to consider a transformation 
\begin{equation}
\phi^i \rightarrow I^A,K^\alpha
\end{equation}
to a set of \textit{fibre-adapted coordinates} $I^A,K^\alpha$. Here the $I$ 's label the fibres (i.e., the points in $\Phi/\mathcal{G}$) and are gauge invariant (i.e., flow invariant): 
\begin{equation}
I^A \tensor{Q}{_\alpha} = \tensor{I}{^A_{,i}} \; \tensor[^i]{Q}{_\alpha} = 0 . \label{I_flow_inv}
\end{equation}
The $K$'s label the points within each fibre. Because there is no canonical way of associating points on one fibre with those on another, transformations between fibre-adapted coordinate patches have the general structure 
\begin{equation}
I^{'A}=I^{'A}[I],  \; \; \; K^{'\alpha}=K^{'\alpha}[I,K],
\end{equation}
which is still special enough so that the Jacobian of the transformation splits into factors: 
\begin{equation}
\frac{\delta(I',K')}{\delta(I,K)} = \frac{\delta(I')}{\delta(I)} \frac{\delta(K')}{\delta(K)}.
\end{equation}
One often makes specific choices for the $K$'s. One usually singles out a \textit{base point} $\phi_*$ in $\Phi$ and chooses the $K$'s to be local functionals of the $\phi$'s of such a form that the matrix 
\begin{equation}
\underline{\mathcal{M}}^{\alpha}_{\beta} \equiv K^\alpha Q_\beta = \tensor{K}{^\alpha_{,i}} \; \tensor[^i]{Q}{_\beta} \label{def_underl_M}
\end{equation}
i\-s a n\-ons\-ing\-ula\-r dif\-fer\-enti\-al ope\-rat\-or at and in a ne\-igh\-bor\-hoo\-d of $\phi_*$. Ty\-pica\-l con\-ven\-ient ch\-oice\-s for $\phi_*$ are $A^{\alpha}_{\mu *}(x)=0$ in pu\-re Ya\-ng-Mi\-lls th\-eor\-y and $g_{\mu\nu *}(x) =$ so\-me we\-ll st\-udie\-d ba\-ckg\-rou\-nd me\-tr\-ic (Mi\-nko\-wsk\-i, Fri\-edm\-ann-Le\-maître-Robertson-Walker, black hole, etc.) in pure gravity theory. 

One can also make specific choices for the $I$'s, but in general such invariants depend nonlocally on the $\phi$'s and are clumsy to work with; no specific choices will be made here, so in what follows the $I$'s will remain purely conceptual.

In the region of $\Phi$ where the operator $\underline{\mathcal{M}}$ is nonsingular it is easy to show that
\begin{equation}
\frac{\overleftarrow{\delta}}{\delta K^\alpha} = - Q_{\beta} \underline{\mathcal{N}}^\beta _\alpha, \label{K_field}
\end{equation} 
where $\underline{\mathcal{N}}$ is a Green's function of $\underline{\mathcal{M}}$. In fact, by multiplying \eqref{def_underl_M} first on the right by $\underline{\mathcal{N}}$ and then on the left by $\frac{\overleftarrow{\delta}}{\delta K}$, and using \eqref{I_flow_inv}, one obtains
\begin{eqnarray}
\underline{\mathcal{M}}^{\alpha}_{\beta} \underline{\mathcal{N}}^{\beta}_{\gamma} &=& \tensor{K}{^\alpha_{,i}} \tensor[^i]{Q}{_\beta} \; \underline{\mathcal{N}}^{\beta}_{\gamma} , \nonumber \\
- \tensor{\delta}{^\alpha _\gamma} &=& \tensor{K}{^\alpha_{,i}} \tensor[^i]{Q}{_\beta} \; \underline{\mathcal{N}}^{\beta}_{\gamma}, \nonumber \\
- \frac{\overleftarrow{\delta}}{\delta K^\alpha} \; \tensor{\delta}{^\alpha _\gamma} &=&  \frac{\overleftarrow{\delta}}{\delta K^\alpha} \; \tensor{K}{^\alpha_{,i}} \; \tensor[^i]{Q}{_\beta} \; \underline{\mathcal{N}}^{\beta}_{\gamma}, \nonumber \\
- \frac{\overleftarrow{\delta}}{\delta K^\gamma} &=& \frac{\overleftarrow{\delta}}{\delta K^\alpha} \; K^\alpha \frac{\overleftarrow{\delta}}{\delta \phi ^i} \;  \tensor[^i]{Q}{_\beta} \; \underline{\mathcal{N}}^{\beta}_{\gamma}, \nonumber \\
- \frac{\overleftarrow{\delta}}{\delta K^\gamma} &=& \left(\frac{\overleftarrow{\delta}}{\delta K^\alpha} \; K^\alpha \frac{\overleftarrow{\delta}}{\delta \phi ^i}+\frac{\overleftarrow{\delta}}{\delta I^A} \; I^A \frac{\overleftarrow{\delta}}{\delta \phi ^i} \right)\;  \tensor[^i]{Q}{_\beta} \; \underline{\mathcal{N}}^{\beta}_{\gamma}, \nonumber \\
- \frac{\overleftarrow{\delta}}{\delta K^\gamma} &=&  \frac{\overleftarrow{\delta}}{\delta \phi ^i} \;  \tensor[^i]{Q}{_\beta} \; \underline{\mathcal{N}}^{\beta}_{\gamma}, \nonumber \\
- \frac{\overleftarrow{\delta}}{\delta K^\gamma} &=& Q_{\beta} \underline{\mathcal{N}}^{\beta}_{\gamma}.
\end{eqnarray}
It is important to stress that when $\mathcal{G}$ is non-Abelian it is impossible for the $K^\alpha$ to be valid coordinates globally: in fact if they were, then the $\overleftarrow{\delta}/\delta K^\alpha$, which are vertical vector fields that commute with each other, would generate Abelian orbits (fibres). This means that $\underline{\mathcal{M}}$, unlike $\mathcal{M}$, cannot be nonsingular globally on $\Phi$.

\subsection{Functional integration for \lq\lq in-out'' amplitudes}
Since the physics of a gauge theory takes place in the base space $\Phi/\mathcal{G}$ it is natural to try to write a functional integral for \lq\lq in-out'' amplitudes in the form 
\begin{equation}
\langle {\rm out }|{\rm in } \rangle = \int \mu_I [I] [dI] e^{iS[I]}. \label{path_int_flows}
\end{equation}
All functional integrals encountered in previous sections were purely formal. Equation \eqref{path_int_flows} is even \textit{more} formal, for the following reasons: 
\begin{itemize}
	\item[1.] The labels $I^A$ are not chosen explicitly but used only conceptually. 
	\item[2.] Since all known usable explicit choices depend nonlocally on the $\phi$'s it is hard to know what one can mean by an advanced Green's function of the Jacobi field operator $\tensor[_{A,}]{S}{_{,B}}$ (or its superdeterminant) and hence how to determine the measure $\mu_I [I]$ even approximately. 
	\item[3.] It is also hard to know how to set boundary conditions.
\end{itemize}
To bring the local variables $\phi^i$ into the theory one must first introduce the remaining variables $K^\alpha$ of a fibre-adapted coordinate system and then transform to the $\phi$'s. Let $\Omega[I,K]$ be a real scalar function(al) on $\Phi$ such that the integral 
\begin{equation}
\Delta[I] \equiv \int e^{i\Omega[I,K]} \mu_K[I,K] [dK] \label{def_Delta}
\end{equation}
exists and is nonvanishing for all $I$, the measure  $\mu_K[I,K]$ being assumed to transform under changes (generally $I$-dependent) of the fibre-adapted coordinates $K^\alpha$ according to
\begin{equation}
\mu_{K'}[I,K'] = \mu_K[I,K] \frac{\delta K}{\delta K'}.
\end{equation}
Then $\Delta[I]$ is invariant under such coordinate changes and one may write:
\begin{equation}
\langle {\rm out }|{\rm in } \rangle = \int [dI] \int [dK]   e^{i(S[I]+\Omega[I,K])}\Delta[I]^{-1} \mu_{I,K} [I,K] , \label{path_int_flows1}
\end{equation}
where 
\begin{equation}
\mu_{I,K} [I,K] \equiv  \mu_I [I]\mu_K[I,K].
\end{equation}
In a similar way one may write the analog of eq. \eqref{def_TA1} in the forms 
\begin{eqnarray}
\langle {\rm out }|\mathcal{T}(A[I])| {\rm in } \rangle &=& \int \mu_I [I] [dI] A[I] e^{iS[I]} \\
&=& \int [dI] \int [dK]  \nonumber \\
& & \; A[I] e^{i(S[I]+\Omega[I,K])}\Delta[I]^{-1} \mu_{I,K} [I,K] . \label{path_int_flows2}
\end{eqnarray}

Under changes of fibre-adapted coordinates the measure $\mu_I [I]$ must obviously transform according to 
\begin{equation}
\mu_{I'} [I'] = \mu_I [I] \frac{\delta I}{\delta I'},
\end{equation}
and hence the total measure $\mu_{I,K} [I,K]$ transforms as it should: 
\begin{equation}
\mu_{I',K'} [I',K'] = \mu_{I,K} [I,K] \frac{\delta I}{\delta I'}  \frac{\delta K}{\delta K'} = \mu_{I,K} [I,K] \frac{\delta (I,K)}{\delta (I',K')} .
\end{equation}

To make the transformation from the $I^A, K^\alpha$ to the local coordinates $\phi^i$ one must include also the formal Jacobian 
\begin{equation}
J[\phi] \equiv \frac{\delta (I,K)}{\delta\phi} = {\rm sdet} \left(\begin{array}{c} I^A_{,i} \\ K^\alpha_{,i} \end{array} \right).
\end{equation}

Then the functional integrals \eqref{path_int_flows1} and \eqref{path_int_flows2} take the forms:
\begin{eqnarray}
\langle {\rm out }|{\rm in } \rangle &=& \int [d\phi]  e^{i(S[\phi]+\Omega[\phi])}\Delta[\phi]^{-1} J[\phi] \mu_{I,K} [\phi], \label{path_int_flows2bis} \\
\langle {\rm out }|\mathcal{T}(A[\phi])| {\rm in } \rangle &=&  \int [d\phi] A[\phi]e^{i(S[\phi]+\Omega[\phi])} \bullet \nonumber \\ & & \; \bullet \; \Delta[\phi]^{-1} J[\phi] \mu_{I,K} [\phi], \label{path_int_flows3}
\end{eqnarray}
in which we have abused notation somewhat by simply writing $\Delta[I]=\Delta[\phi]$, $S[I]=S[\phi]$, $\mu_{I,K} [I]=\mu_{I,K} [\phi]$, $\Omega[I,K]=\Omega[\phi]$ and $A[I]=A[\phi]$. The last abuse in fact allows a certain generalization of the formalism. In eqs. \eqref{def_TA1}, \eqref{def_TA2} the functional $A$ was an invariant, i.e., a physical observable. The integral \eqref{path_int_flows3} may be regarded as a generalized average which can give meaning to $\langle {\rm out }|\mathcal{T}(A[\phi])| {\rm in } \rangle$ even when $A$ is not gauge invariant. True physical amplitudes, of course, only involve $A$'s that are gauge invariant. Note that when (and only when) $A$ is gauge invariant the average \eqref{path_int_flows3} is completely independent of the choice of the functional $\Omega[\phi]$.

\subsection{Properties of the jacobian $J[\phi]$}
Suppose we carry out an infinitesimal transformation of the fibre-adapted coordinates $K^\alpha$:
\begin{equation}
K^{'\alpha}= K^{\alpha}+\delta K^{\alpha}[I,K] .
\end{equation} 
Formally this will produce the following change in the Jacobian $J[\phi]$: 
\begin{eqnarray}
\delta J[\phi] &=& \delta \; {\rm sdet } \left(\begin{array}{c} I^A_{,i} \\ K^\alpha_{,i} \end{array} \right) \nonumber \\
&=& J \; { \rm str } \left[\left(\begin{array}{c}  I^A_{,j} \\  K^\alpha_{,j} \end{array} \right)^{-1} \left(\begin{array}{c} \delta I^A_{,i} \\ \delta K^\alpha_{,i} \end{array} \right)\right]  \nonumber \\
&=& J \; { \rm str } \left[\left(\begin{array}{c} \phi^j_{,A}\\ \phi^j_{,\alpha} \end{array} \right) \left(\begin{array}{c} 0 \\ \delta K^\alpha_{,i} \end{array} \right)\right]  \nonumber \\
&=& (-1)^i J \phi^{i}_{,\alpha} \; \delta K^\alpha_{,i},
\end{eqnarray}
Here we encounter an immediate problem: the meaning to be given to the factor $\phi^{i}_{,\alpha}$. If we apply the operator \eqref{K_field} to the fields $\phi^{i}$ we get 
\begin{equation}
\phi^{i}_{,\alpha} = - \tensor[^i]{Q}{_\beta} \; \underline{\mathcal{N}}^{\beta}_{\alpha} ,
\end{equation}
and we have to decide which Green's function of $\underline{\mathcal{M}}$ to use. Different choices correspond to different possible interpretations of the Jacobian itself. Tentatively we choose $\underline{\mathcal{N}}$ and $J$ to be coherent with the boundary conditions appropriate to the functional integral. 

Now note that
\begin{eqnarray}
\delta \; \log J &=& J^{-1} \delta J \nonumber \\
&=& (-1)^i \phi^{i}_{,\alpha} \; \delta K^\alpha_{,i} \nonumber \\
&=& -(-1)^i \; \tensor[^i]{Q}{_\beta} \; \underline{\mathcal{N}}^{\beta}_{\alpha} \; \delta K^\alpha_{,i} \nonumber \\
&=& - \underline{\mathcal{N}}^{\beta}_{\alpha} \; \delta K^\alpha_{,i} \; \tensor[^i]{Q}{_\beta} \nonumber \\
&=& - \underline{\mathcal{N}}^{\beta}_{\alpha} \; \delta \underline{\mathcal{M}}^\alpha _\beta  \nonumber \\
&=& - \delta \; \log \;{\rm det } \underline{\mathcal{N}};
\end{eqnarray}
therefore 
\begin{equation}
\delta \;( J \; {\rm det } \underline{\mathcal{N}}) = 0,
\end{equation}
i.e.,the product $J \; {\rm det } \underline{\mathcal{N}}$ is independent of how the coordinates $K^\alpha$ are chosen. The same is also true of the products $J^\pm \; {\rm det } \underline{\mathcal{N}}^\pm$, where the $J^\pm$ are the Jacobians interpreted according to advanced or retarded boundary conditions. This does \textit{not} automatically mean that these products depend only on the $I$'s and are hence gauge invariant: in fact, these products are not scalar functionals, but scalar densities of unit weight; therefore:
\begin{eqnarray}
(J \; {\rm det } \underline{\mathcal{N}}) \overleftarrow{\mathcal{L}}_{Q_\alpha} &=& \;J \; {\rm det } \underline{\mathcal{N}} \; {\rm str }\tensor[^i]{Q}{_{\alpha}_{,k}} + (J \; {\rm det } \underline{\mathcal{N}}) Q_\alpha \nonumber \\
&=&  (-1)^i \; J \; {\rm det } \underline{\mathcal{N}} \;\tensor[^i]{Q}{_\alpha_{,i}}  + (J \; {\rm det } \underline{\mathcal{N}}\tensor{)}{_{,i}} \tensor[^i]{Q}{_\alpha} \nonumber \\
&=& (-1)^i \; J \; {\rm det } \underline{\mathcal{N}} \;\tensor[^i]{Q}{_\alpha_{,i}}   \nonumber \\
&& + (J \; {\rm det } \underline{\mathcal{N}}\tensor{)}{_{,A}}\;I^A _{,i}\; \tensor[^i]{Q}{_\alpha} \nonumber \\
&& + (J \; {\rm det } \underline{\mathcal{N}}\tensor{)}{_{,\beta}}\;K^\beta _{,i}\; \tensor[^i]{Q}{_\alpha} \nonumber \\
&=& (-1)^i \; J \; {\rm det } \underline{\mathcal{N}} \;\tensor[^i]{Q}{_\alpha_{,i}}  \nonumber \\ 
&& + (J \; {\rm det } \underline{\mathcal{N}}\tensor{)}{_{,A}}\;(I^A \; Q_{\alpha}) \nonumber \\
&& + (J \; {\rm det } \underline{\mathcal{N}}\tensor{)}{_{,\beta}}\;K^\beta _{,i}\; \tensor[^i]{Q}{_\alpha} \nonumber \\
&=& (-1)^i \; J \; {\rm det } \underline{\mathcal{N}} \;\tensor[^i]{Q}{_\alpha_{,i}} , \label{Lie_der_meas}
\end{eqnarray}
where the following facts have been used:
\begin{itemize}
	\item[1.] the $I$ 's are flow invariant, i.e., $I^A \; Q_\alpha = 0$,
	\item[2.] $J \; {\rm det } \underline{\mathcal{N}}$ is independent of the $K$ 's, i.e., $(J \; {\rm det } \underline{\mathcal{N}}\tensor{)}{_{,\beta}}=0$,
	\item[3.] $\overleftarrow{\delta}/\delta \phi^i = \overleftarrow{\delta}/\delta I^A \;I^A _{,i} + \overleftarrow{\delta}/\delta K^\beta \;K^\beta _{,i}$.
\end{itemize}

Taking the Lie derivative of $J \; {\rm det } \underline{\mathcal{N}}$ with respect to $Q_\alpha$ is seen to be the same as multiplying it by $(-1)^i\;\tensor[^i]{Q}{_\alpha_{,i}}$ , which is a constant. This constant, which essentially describes a rescaling of $J \; {\rm det } \underline{\mathcal{N}}$ as one moves up and down the fibres, depends in no way on the choice made for the $K$'s. For \textit{practical} purposes it may be taken as \textit{zero}, for 
two reasons: 
\begin{itemize}
	\item[1.] In Yang-Mills theory the zero value may follow formally from the compactness of the finite dimensional Lie group with which it is associated, which implies $\tensor{f}{^\gamma _\gamma _\alpha}=0$:
	\begin{eqnarray}
	\tensor[^\gamma _\mu]{Q}{_\alpha _' _{,}^\mu _\gamma } &=& \int  dx \;  \delta (x,x') \delta (x,x) \underline{\delta}^\mu _\mu \; \tensor{f}{^\gamma _\gamma _\alpha} \nonumber \\
	&=& N \delta (x',x') \tensor{f}{^\gamma _\gamma _\alpha} \nonumber \\
	&=& 0.
	\end{eqnarray}
	\item[2.] In both Yang-Mills and gravity theories it is also formally either a $\delta$-distribution, or the derivative of a $\delta$-distribution, with coincident arguments. In dimensional regularization such formal expressions vanish.
\end{itemize}
From now on therefore we set 
\begin{equation}
(-1)^i\;\tensor[^i]{Q}{_\alpha_{,i}} = 0,
\end{equation}
and similarly
\begin{equation}
\tensor[]{c}{^\beta _\alpha _\beta} = 0.
\end{equation}

\subsection{A special choice for $\Omega[\phi]$ and a new measure functional }
Although we know that the $K^\alpha$ cannot be global coordinates when the gauge group is non-Abelian, in the loop expansion we can pretend that they are. That is, we pretend that they are coordinates in a tangent space. A favorite choice for the functional $\Omega[I,K]$ is then 
\begin{equation}
\Omega = \frac{1}{2} \kappa_{\alpha\beta}K^\alpha K^\beta
\end{equation}
where $(\kappa_{\alpha\beta})$ is a symmetric ultralocal invertible continuous real matrix which can be chosen either to be constant or to depend on the base point $\phi _*$ in the neighborhood of which the operator $\underline{\mathcal{M}}^\alpha _\beta$ of \eqref{def_underl_M} is nonsingular. The $K$'s themselves may be chosen to vanish at this base point.

Since we are staying in a single chart it is simplest to choose 
\begin{equation}
\mu _{K}[I,K] = 1
\end{equation}
so that $\eqref{def_Delta}$ reduces to 
\begin{equation}
\Delta = {\rm const } \cdot ({\rm det } \kappa)^{-1/2}.
\end{equation}
Equations \eqref{path_int_flows2bis} and \eqref{path_int_flows3} then take the forms 
\begin{eqnarray}
\langle {\rm out }|{\rm in } \rangle &=& \int \mu[\phi] [d\phi]  e^{i(S[\phi]+\frac{1}{2} \kappa_{\alpha\beta}K^\alpha K^\beta)} ({\rm det } \underline{\mathcal{N}})^{-1}, \label{path_integral_4}\\
\langle {\rm out }|\mathcal{T}(A[\phi])| {\rm in } \rangle &=&  \int \mu[\phi] [d\phi] A[\phi]e^{i(S[\phi]+\frac{1}{2} \kappa_{\alpha\beta}K^\alpha K^\beta)}({\rm det } \underline{\mathcal{N}})^{-1}, \label{path_integral_5} 
\end{eqnarray}
where 
\begin{equation}
\mu[\phi] = {\rm const } \cdot \mu _I [\phi] ({\rm det } \kappa)^{1/2} J[\phi] {\rm det } \underline{\mathcal{N}}
\end{equation}
The previous expression may be regarded as a new measure functional, which is to be used when the integration is carried out over the whole space of histories $\Phi$ rather than just the base space $\Phi/\mathcal{G}$. By virtue of eq. \eqref{Lie_der_meas}, the constancy of $\kappa$, and the 
fact that $\mu _I [\phi]$ depends only on the $I$'s, it follows that this measure satisfies 
\begin{equation}
\mu \overleftarrow{\mathcal{L}}_{Q_\alpha} = 0.
\end{equation}

\subsection{Ghosts and BRST symmetry}
The functional integrals \eqref{path_integral_4} and \eqref{path_integral_5} do not differ greatly in form from expressions \eqref{path_int} and \eqref{def_TA3}. The chief difference is the presence of the $K$'s and $\kappa$'s, and the curious factor $({\rm det }\underline{\mathcal{N}})^{-1}$ which comes ultimately from the Jacobian $J$. By introducing the $a$-type \textit{ghost fields} $\chi_\alpha,\psi^\beta$, one obtains
\begin{equation}
({\rm det }\underline{\mathcal{N}})^{-1} = \int [d\chi] \int [d\psi] \; e^{i\chi_\alpha \underline{\mathcal{M}}^\alpha _\beta \psi^\beta}.
\end{equation}
Therefore eq. \eqref{path_integral_4} may be written 
\begin{equation}
\langle {\rm out }|{\rm in } \rangle = \int \mu[\phi] [d\phi] \int [d\chi] \int [d\psi] \; e^{i(S[\phi]+\frac{1}{2} \kappa_{\alpha\beta}K^\alpha K^\beta + \chi_\alpha \underline{\mathcal{M}}^\alpha _\beta \psi^\beta)}. \label{path_int_ghost}
\end{equation}
It is important to emphasize that the ghost fields arise entirely from the fiber-bundle structure of $\Phi$, from the Jacobian of the transformation from the fiber-adapted coordinates to the conventional local fields $\phi^i$. \footnote{These \lq\lq tricky extra particles'' were first introduced by R.P. Feynman (see \cite{feynman1963quantum}) as a way of compensating for the propagation of nonphysical modes in one-loop order.}

Hence, the theory may be seen as a field theory on an \lq\lq extended'' space of field histories $\bar{\Phi}$, where the ghost field $\chi_\alpha,\psi^\beta$ appears in addition to $\phi$'s; however, the new fields have to be considered non-physical, since their \lq\lq fermionic number'' is the opposite of the one suggested by their indices, i.e., they are fermionic fields altough $\alpha ,\beta$ are bosonic indices.
It is clear that the full argument of the exponential in \eqref{path_int_ghost} is no longer invariant under gauge transformations, because of the \lq\lq gauge-averaging'' term $\tfrac{1}{2} \kappa_{\alpha\beta}K^\alpha K^\beta$ and the ghost term $\chi_\alpha \underline{\mathcal{M}}^\alpha _\beta \psi^\beta$.

Nevertheless, both the full action functional 
\begin{equation}
\bar{S}[\phi,\chi,\psi] \equiv S[\phi]+\tfrac{1}{2} \kappa_{\alpha\beta}K^\alpha K^\beta + \chi_\alpha \underline{\mathcal{M}}^\alpha _\beta \psi^\beta
\label{full_act} \end{equation}
and the measure 
\begin{equation}
\bar{\mu}[\phi,\chi,\psi][d\phi][d\chi][d\psi] \equiv \mu[\phi] [d\phi] [d\chi] [d\psi]
\end{equation}
are invariant under a group of \textit{global transformations} whose infinitesimal form is
\begin{eqnarray}
\delta \phi^i &=& \tensor[^i]{Q}{_\alpha}\; \psi^\alpha \delta \lambda, \\
\delta \chi_\alpha &=& \kappa_{\alpha\beta} K^\beta \delta \lambda, \\
\delta \psi^\alpha &=& -\tfrac{1}{2} \tensor{c}{^\alpha _\beta _\gamma} \; \psi^\beta \psi^\gamma \delta \lambda,
\end{eqnarray}
where $\delta \lambda$ is an infinitesimal $a$-number. They are called Becchi-Rouet-Stora-Tyutin (BRST) transformations.

As is clear, for the fields $\phi^i$, they are gauge transformations with an $a$-type parameter; therefore, from the invariance of $\mu[\phi]$ under gauge transformations, $\bar{\mu}[\phi,\chi,\psi]$ is invariant under BRST transformations; one can show that $\bar{S}[\phi,\chi,\psi]$ is invariant too:
\begin{equation}
\delta S[\phi] = 0, \label{dim_BRST_1}
\end{equation}
\begin{eqnarray}
\delta (\tfrac{1}{2} \kappa_{\alpha\beta}K^\alpha K^\beta) &=& \tfrac{1}{2} \cdot 2 \kappa_{\alpha\beta}\delta K^\alpha K^\beta \nonumber \\
&=& \kappa_{\alpha\beta}( K^\alpha_{,i}\;\delta\phi^i) K^\beta \nonumber \\
&=& \kappa_{\alpha\beta}( K^\alpha_{,i}\;\tensor[^i]{Q}{_\gamma}\; \psi^\gamma \delta \lambda) K^\beta \nonumber \\
&=& \kappa_{\alpha\beta}( K^\alpha_{,i}\;\tensor[^i]{Q}{_\gamma}\; \psi^\gamma) K^\beta \delta \lambda \nonumber \\
&=& \kappa_{\alpha\beta} \underline{\mathcal{M}}^\alpha _\gamma \; \psi^\gamma K^\beta \delta \lambda , \label{dim_BRST_2}
\end{eqnarray} 
\begin{eqnarray}
\delta (\chi_\alpha \underline{\mathcal{M}}^\alpha _\beta \psi^\beta) &=& (\delta \chi_\alpha) \underline{\mathcal{M}}^\alpha _\beta \psi^\beta + \chi_\alpha \underline{\mathcal{M}}^\alpha _\beta (\delta \psi^\beta) + \chi_\alpha (\delta\underline{\mathcal{M}}^\alpha _\beta) \psi^\beta \nonumber \\
&=& \kappa_{\alpha\gamma} K^\gamma \delta \lambda \underline{\mathcal{M}}^\alpha _\beta \psi^\beta + \chi_\alpha \underline{\mathcal{M}}^\alpha _\beta ( -\tfrac{1}{2} \tensor{c}{^\beta _\alpha _\gamma} \; \psi^\alpha \psi^\gamma \delta \lambda)  \nonumber \\&&+ \chi_\alpha (\delta K^\alpha _{,i} \; \tensor[^i]{Q}{_\beta}) \psi^\beta + \chi_\alpha (K^\alpha _{,i} \; \delta \tensor[^i]{Q}{_\beta}) \psi^\beta \nonumber \\
&=& -\kappa_{\alpha\beta}   \underline{\mathcal{M}}^\alpha _\gamma \psi^\gamma K^\beta \delta \lambda -\tfrac{1}{2}\chi_\alpha \underline{\mathcal{M}}^\alpha _\beta \; \tensor{c}{^\beta _\zeta _\gamma} \; \psi^\zeta \psi^\gamma \delta \lambda \nonumber \\ &&+ \chi_\alpha ( K^\alpha _{,ij}\delta \phi^j \; \tensor[^i]{Q}{_\beta}) \psi^\beta +\chi_\alpha (K^\alpha _{,i} \;  \tensor[^i]{Q}{_\beta _{,j}}) \; \delta \phi^j \psi^\beta \nonumber \\
&=& -\kappa_{\alpha\beta}   \underline{\mathcal{M}}^\alpha _\gamma \psi^\gamma K^\beta \delta \lambda -\tfrac{1}{2}\chi_\alpha \underline{\mathcal{M}}^\alpha _\beta \; \tensor{c}{^\beta _\zeta _\gamma} \; \psi^\zeta \psi^\gamma \delta \lambda  \nonumber \\ && +\chi_\alpha ( K^\alpha _{,ij}\;\tensor[^j]{Q}{_\eta}\; \psi^\eta \delta \lambda \; \tensor[^i]{Q}{_\beta}) \psi^\beta + \chi_\alpha (K^\alpha _{,i} \;  \tensor[^i]{Q}{_\beta _{,j}}) \; \tensor[^j]{Q}{_\zeta}\;\psi^\zeta \delta \lambda  \psi^\beta  \nonumber \\
&=& -\kappa_{\alpha\beta}   \underline{\mathcal{M}}^\alpha _\gamma \psi^\gamma K^\beta \delta \lambda -\tfrac{1}{2}\chi_\alpha \underline{\mathcal{M}}^\alpha _\beta \; \tensor{c}{^\beta _\zeta _\gamma} \; \psi^\zeta \psi^\gamma \delta \lambda  \nonumber \\ && +\chi_\alpha  K^\alpha _{,ij}\;\tensor[^j]{Q}{_\eta} \; \tensor[^i]{Q}{_\beta} \; \psi^\beta \psi^\eta \delta \lambda  \nonumber \\ &&+ \chi_\alpha K^\alpha _{,i} \;  \tensor[^i]{Q}{_\beta _{,j}} \; \tensor[^j]{Q}{_\zeta}\;  \psi^\beta \psi^\zeta \delta \lambda . \label{dim_BRST_3}
\end{eqnarray}
In the last equation, consider the third term:
\begin{eqnarray}
\chi_\alpha  K^\alpha _{,ij}\;\tensor[^j]{Q}{_\eta} \; \tensor[^i]{Q}{_\beta} \; \psi^\beta \psi^\eta \delta \lambda &=& \chi_\alpha  K^\alpha _{,ji}\;\tensor[^i]{Q}{_\eta} \; \tensor[^j]{Q}{_\beta} \; \psi^\beta \psi^\eta \delta \lambda \nonumber \\
&=& (-1)^{ij} \; \chi_\alpha  K^\alpha _{,ij}\;\tensor[^i]{Q}{_\eta} \; \tensor[^j]{Q}{_\beta} \; \psi^\beta \psi^\eta \delta \lambda \nonumber \\
&=& \chi_\alpha  K^\alpha _{,ij}\; \tensor[^j]{Q}{_\beta} \;\tensor[^i]{Q}{_\eta} \; \psi^\beta \psi^\eta \delta \lambda \nonumber \\
&=& -\chi_\alpha  K^\alpha _{,ij}\; \tensor[^j]{Q}{_\beta} \;\tensor[^i]{Q}{_\eta} \;  \psi^\eta \psi^\beta \delta \lambda \nonumber \\
&=& -\chi_\alpha  K^\alpha _{,ij}\; \tensor[^j]{Q}{_\eta} \;\tensor[^i]{Q}{_\beta} \;  \psi^\beta \psi^\eta \delta \lambda,
\end{eqnarray}
therefore 
\begin{equation}
\chi_\alpha  K^\alpha _{,ij}\;\tensor[^j]{Q}{_\eta} \; \tensor[^i]{Q}{_\beta} \; \psi^\beta \psi^\eta \delta \lambda = 0.
\end{equation}
Consider again \eqref{dim_BRST_3}; adding the second and the fourth terms, one obtains 
\begin{eqnarray}
&&-\tfrac{1}{2}\chi_\alpha \underline{\mathcal{M}}^\alpha _\beta \; \tensor{c}{^\beta _\zeta _\gamma} \; \psi^\zeta \psi^\gamma \delta \lambda + \chi_\alpha K^\alpha _{,i} \;  \tensor[^i]{Q}{_\beta _{,j}} \; \tensor[^j]{Q}{_\zeta}\;  \psi^\beta \psi^\zeta \delta \lambda   \nonumber \\
&&= -\tfrac{1}{2}\chi_\alpha \; K^\alpha _{,i} \; \tensor[^i]{Q}{_\beta} \; \tensor{c}{^\beta _\zeta _\gamma} \; \psi^\zeta \psi^\gamma \delta \lambda + \chi_\alpha K^\alpha _{,i} \;  \tensor[^i]{Q}{_\zeta _{,j}} \; \tensor[^j]{Q}{_\gamma}\;  \psi^\zeta \psi^\gamma \delta \lambda  \nonumber \\
&&= -\tfrac{1}{2}\chi_\alpha \; K^\alpha _{,i} \; \tensor[^i]{Q}{_\beta} \; \tensor{c}{^\beta _\zeta _\gamma} \; \psi^\zeta \psi^\gamma \delta \lambda \nonumber \\
&& + \tfrac{1}{2} \chi_\alpha K^\alpha _{,i} \;  \tensor[^i]{Q}{_\zeta _{,j}} \; \tensor[^j]{Q}{_\gamma}\;  \psi^\zeta \psi^\gamma \delta \lambda - \tfrac{1}{2} \chi_\alpha K^\alpha _{,i} \;  \tensor[^i]{Q}{_\gamma _{,j}} \; \tensor[^j]{Q}{_\zeta}\;  \psi^\zeta \psi^\gamma \delta \lambda \nonumber \\
&&= -\tfrac{1}{2}\chi_\alpha \; K^\alpha _{,i} \; \tensor[^i]{Q}{_\beta} \; \tensor{c}{^\beta _\zeta _\gamma} \; \psi^\zeta \psi^\gamma \delta \lambda \nonumber \\
&& + \tfrac{1}{2} \chi_\alpha K^\alpha _{,i} (\tensor[^i]{Q}{_\zeta _{,j}} \; \tensor[^j]{Q}{_\gamma}- \tensor[^i]{Q}{_\gamma _{,j}} \; \tensor[^j]{Q}{_\zeta})  \psi^\zeta \psi^\gamma \delta \lambda \nonumber \\
&&=  -\tfrac{1}{2}\chi_\alpha \; K^\alpha _{,i} \; \tensor[^i]{Q}{_\beta} \; \tensor{c}{^\beta _\zeta _\gamma} \; \psi^\zeta \psi^\gamma \delta \lambda \nonumber \\ 
&& + \tfrac{1}{2} \chi_\alpha K^\alpha _{,i} \; \tensor[^i]{Q}{_\beta} \;\tensor{c}{^\beta _\zeta _\gamma}  \;\psi^\zeta \psi^\gamma \delta \lambda \nonumber \\
&& = 0.
\end{eqnarray}
Noting that the first term in \eqref{dim_BRST_3} is the opposite of the rhs of \eqref{dim_BRST_2}, it is proven that
\begin{equation}
\delta \bar{S} = \delta S + \delta (\tfrac{1}{2} \kappa_{\alpha\beta}K^\alpha K^\beta) + \delta (\chi_\alpha \underline{\mathcal{M}}^\alpha _\beta \psi^\beta) = 0.
\end{equation}
Moreover, BRST invariance is often a good substitute for the original gauge invariance. For example, if $A$ is a functional of $\phi$, but not of the ghost field, it is easy to see that $A$ is gauge invariant if and only if it is BRST invariant:
\begin{eqnarray}
\delta _{BRST} A &=&  \tensor{A}{_{,i}} \; \delta _{BRST} \phi^i  \nonumber \\
&=& \tensor{A}{_{,i}} \; \tensor[^i]{Q}{_\alpha} \; \psi^\alpha \delta \lambda; 
\end{eqnarray}
In this case, since the $\psi^\alpha$ are arbitrary functions on space-time (although not necessarily of compact support) BRST invariance of $A$ implies gauge invariance in the original sense, and vice versa. 

Therefore, any Type-I gauge theory with action functional $S[\phi]$ may be viewed as a non-gauge, BRST-symmetric theory on an extendend space of field histories where ghost fields appear, with action functional $\bar{S}[\phi,\chi,\psi]$ given by \eqref{full_act}.

\begin{remark}
	In the general case when no assumption on the bosonic nature of the indices from the first part of the Greek alphabet, ${\rm sdet } (\underline{N}^\alpha _\beta)^{-1}$ appears in place of ${\rm det } (\underline{N}^\alpha _\beta)^{-1}$; therefore (see \cite{berazin2012method} and \cite{dewitt1992supermanifolds}) if the indices from the first part of the Greek alphabet are bosonic, one obtains
	\begin{equation}
	{\rm sdet } (\underline{N}^\alpha _\beta)^{-1} = {\rm det } (\underline{N}^\alpha _\beta)^{-1} = \int [d\chi] \int [d\psi] \; e^{i\chi_\alpha \underline{\mathcal{M}}^\alpha _\beta \psi^\beta},
	\end{equation}
	where $\chi_\alpha,\psi^\beta$ are fermionic fields, as shown in the previous section.
	On the other hand, if the indices from the first part of the Greek alphabet are fermionic, then 
	\begin{equation}
	{\rm sdet } (\underline{N}^\alpha _\beta)^{-1} = {\rm det } (\underline{N}^\alpha _\beta) = \int [d\chi] \int [d\psi] \; e^{i\chi_\alpha \underline{\mathcal{M}}^\alpha _\beta \psi^\beta},
	\end{equation}
	where, in this case, $\chi_\alpha,\psi^\beta$ are bosonic fields.
	
	Hence, we infer that the fermionic nature of the ghost fields is \textit{always} opposite to the one suggested by their indices, and this is another clue that ghost fields do not represent physical particles.
\end{remark}
\subsection{A few words on the measure functional}
The measure functional was introduced as a device for correcting the possible failure of chronological ordering to yield Hermitian (or skew-Hermitian) operator field equations. It arose from the noncommutativity (or nonanticommutativity) of field operators and hence is a purely quantum construct. But the measure functional plays a far deeper role, and we shall briefly outline the reason in this section.

As is well known, the main tool to evaluate transition amplitudes in an interacting field theory is renormalized perturbation theory; in order to obtain renormalized observables, one has to choose a renormalization scheme and has to deal with divergent Feynman diagrams, up to a chosen order. Consider now one-loop perturbation theory for some field in Minkowski space-time; in Minkowski space-time one can use the Fourier transform and pass to momentum space; therefore the task is to evaluate a graph consisting of a single closed loop with $r$ external prongs. Let the momenta assigned to the internal lines all have the same orientation around the loop. Then, making use of the so-called \lq\lq Feynman's trick'' to combine the factors contributed by the internal lines, i.e., by the propagators, and appropriately shifting the integration zero point, one finds for the Feynman-propagator contribution to the value of the graph an expression having the general form 
\begin{eqnarray}
I(C) &=& {\rm constant} \; \cdot \; \int d^{r-1}y \int d^n k \;\frac{P_m (y,k,p)}{[k^2 - i\epsilon + Q_m (y,k,p)]^r} \label{Wick_1} \\
&=& {\rm constant} \; \cdot \; \int d^{r-1}y \int_C d^n k \; \frac{P_m (y,k,p)}{[k^2 + Q_m (y,k,p)]^r} \label{Wick_2}
\end{eqnarray}
in which external space-time and/or spinor indices have been suppressed. Here \lq\lq $y$'' denotes the parameters $y_{1} , y_{2} ,...,y_{r-1}$ needed to implement \lq\lq Feynman's trick'' and  $\int d^{r-1} y$ is a schematic symbol for the integrations in which these parameters are involved. The incoming momenta at the external prongs are $(-p_{2} - p_{3} -...- p_{r} ), p_{2} ,...,p_{r}$. $Q_m$ is a quadratic function of these momenta, which also depends on the $y$'s and on the masses $m$ associated with the internal lines. $P_m$ is a polynomial in the $k$'s and $p$'s, which depends on the $y$'s and $m$'s. $C$ in \eqref{Wick_2} denotes the contour in the complex plane of the time component $k^0$ of the $k$-variable which is appropriate to the Feynman propagator: this contour runs from $-\infty$ to $0$ below the negative real axis (in the complex $k^0$-plane) and from $0$ to $+\infty$ above the positive real axis: it is the same as integrating on the real line with the $i\epsilon$ prescription used in \eqref{Wick_1}. \textit{If the integral were convergent}, the contour could be rotated so that it would run along the imaginary axis. One would set $k^0 = ik^n$, and \eqref{Wick_2} would become an integral over Euclidean momentum-$n$-space. Generically, however, this rotation, which is known as \textit{Wick rotation}, is not legitimate. Contributions from arcs at infinity, which themselves diverge or are nonvanishing, have to be included. \textit{These contributions cannot be handled by dimensional regularization}.

When the measure is included it contributes to the generic one-loop graph an amount equal to the negative of the integral
\begin{eqnarray}
I(C^+) &=& {\rm constant} \; \cdot \; \int d^{r-1}y \int d^n k \; \frac{P_m (y,k,p)}{[-(k^0 - i\epsilon)^2 + (\vec{k})^2 + Q_m (y,k,p)]^r} \nonumber \\
&=& {\rm constant} \; \cdot \; \int d^{r-1}y \int_{C^+} d^n k \; \frac{P_m (y,k,p)}{[k^2 + Q_m (y,k,p)]^r} \label{Wick_3}
\end{eqnarray} 
where $C^+$ is the contour (in the complex $k^0$-plane) appropriate to the advanced Green function: it runs from $-\infty$ to $+\infty$ below the real axis; it is the same as integrating on the real line with the $i\epsilon$ prescription used on the first line of the previous equation. These two contributions, taken together, yield $I(C)-I(C^+)$ as the correct value of the graph. This corresponds to taking a contour that runs from $+\infty$ to $0$ below the positive real axis and then back to $+\infty$ again above the positive real axis, and yields an integral that \textit{can} be handled by dimensional regularization.

The remarkable fact is that $I(C)-I(C^+)$ is equal precisely to the value that is obtained by Wick rotation. This means that \textit{the measure justifies
	the Wick-rotation procedure}. Although it has never been proved, one may speculate that the exact measure functional, whatever it is, will justify the Wick rotation to all orders and will establish a rigorous connection between quantum field theory in Minkowski space-time and its corresponding euclideanized version.

\section{Green's Functions: Neutral Scalar Meson}
\subsection{Integral representations in Minkowski space-time}
Green's functions have been introduced in section 2 : in gauge theories, they are the negative inverses of the differential operator $\tensor[_i]{F}{_j}$, while in field theories with no gauge transformations, they are the negative inverses of the non-singular operator $\tensor[_1]{S}{_1}$. The prototypes of the Green's functions of interest in quantum field theory are those of the neutral scalar meson in Minkowski space-time; by spin-statistics theorem, it has to be a bosonic particle, and its Lagrangian is
\begin{equation}
L = -\tfrac{1}{2} \left(\tensor{\phi}{_,^\mu} \phi _{,\mu} + m^2 \phi^2 \right).
\end{equation} 
Therefore 
\begin{equation}
\tensor{S}{_1} = \tensor[_1]{S}{} = \frac{\delta}{\delta \phi(x)} S =  \left( \tensor{\phi}{_, ^\mu _\mu} - m^2 \phi \right),
\end{equation}
and 
\begin{eqnarray}
\tensor[_1]{S}{_1} &=& \tensor[_2]{S}{} \nonumber \\
&=& \frac{\delta}{\delta \phi(x')}\frac{\delta}{\delta \phi(x)} S \nonumber \\
&=&  \left( \tensor{\delta(x,x')}{_, ^\mu _\mu} - m^2 \delta(x,x') \right) \nonumber \\
&=&  \left( \partial_\mu \partial^\mu - m^2 \right)\delta(x,x').
\end{eqnarray}
Hence \eqref{def_Gr} takes the form
\begin{eqnarray}
&& \int dy \; \left[\left( \partial_\mu \partial^\mu - m^2 \right)\delta(x,y)\right] G(y,x') \nonumber  \\ &&=  \left( \partial_\mu \partial^\mu - m^2 \right)G(x,x') = -\delta(x,x').
\end{eqnarray}
This equation is most easily solved in \lq\lq momentum'' space, i.e., using the Fourier transform; as is well known, it is a linear, invertible operator which turns derivative operators into multiplication operators, i.e., the Fourier transform of a linear, differential equation for a function is a linear, algebraic equation which is easily solved; therefore the desired function can be obtained by inverse Fourier transform; in our case, the result is 
\begin{equation}
G(x,x') = \frac{1}{(2\pi)^4} \int dp \; \frac{e^{ip(x-x')}}{m^2 + p^2}. \label{scalar_Gr}
\end{equation}
In the previous equation the contours in the $p^1, p^2, p^3$ planes are confined to the real axis and the choice of Green's function is determined by selecting a contour in the $p^0$ plane which passes in an appropriate fashion around the poles at $\pm E$ where 
\begin{eqnarray}
E &\equiv & \sqrt{m^2 + \vec{p}^2} \equiv \omega, \\
\vec{p} & \equiv & (p^1, p^2, p^3), \\
(\vec{p})^2 &\equiv & (p^1)^2 + (p^2)^2 + (p^3)^2 .
\end{eqnarray}
The most important contours are shown in Fig. \ref{figure}. From these contours the following relations between the various Green's functions are easily established: 
\begin{eqnarray}
\bar{G} &=& \tfrac{1}{2} \left(G^+  + G^- \right) = \tfrac{1}{2} \tilde{G} +  G^- = -\tfrac{1}{2} \tilde{G} + G^+ ,  \\ 
\tilde{G} &=& G^+  - G^- = G^{(+)}  + G^{(-)}, \label{useful_1}\\
G^{(1)} &=& i \left(G^{(+)}  - G^{(-)}\right), \\
G &=& \bar{G} + \tfrac{1}{2}i G^{(1)} = G^- +  G^{(-)} = G^+ - G^{(+)}, \\
G^* &=& \bar{G} - \tfrac{1}{2}i G^{(1)} = G^- +  G^{(+)} = G^+ - G^{(-)}. 
\end{eqnarray}
\begin{figure}[htbp]
	\includegraphics[scale=0.45]{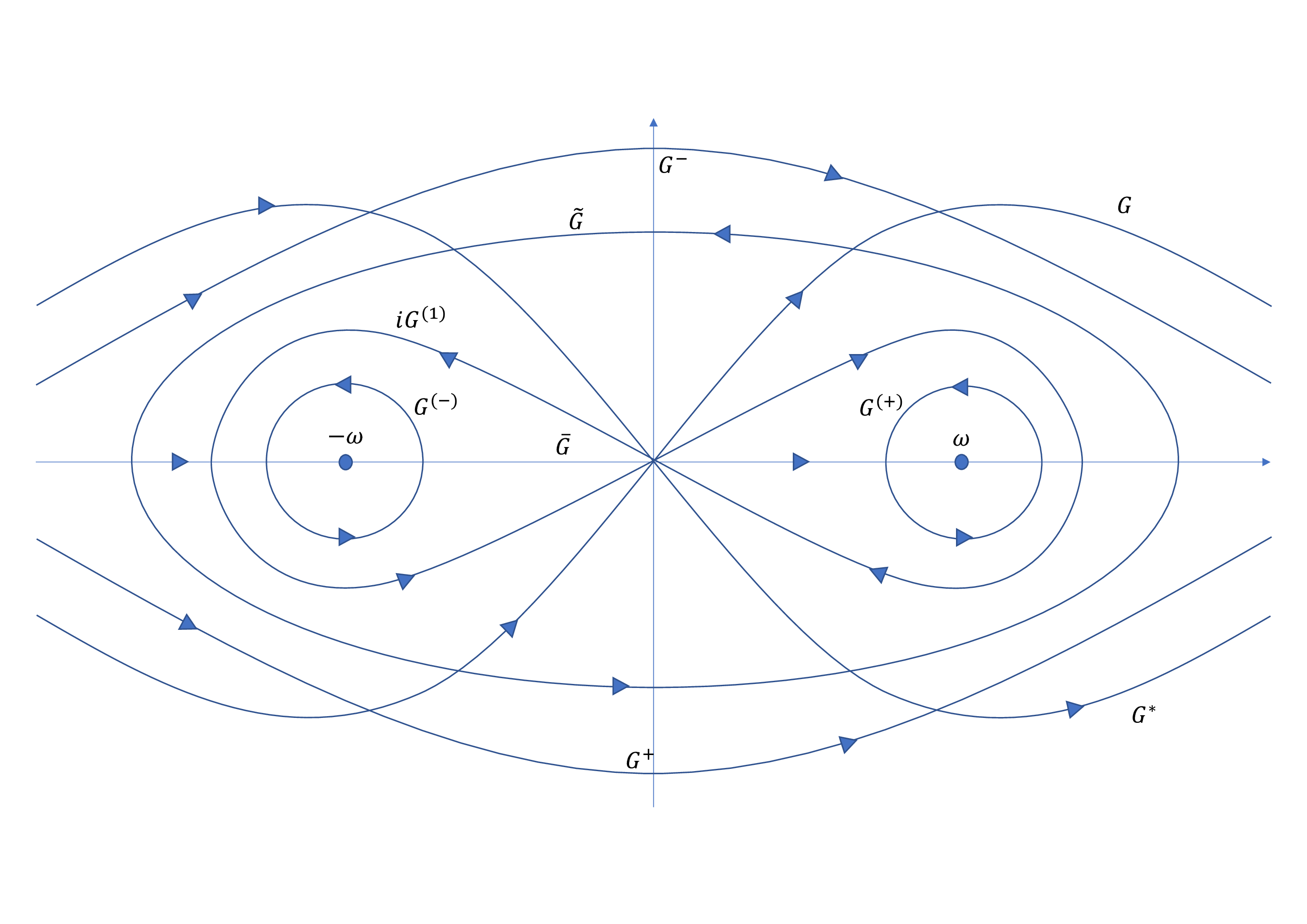}
	\vspace*{8pt}
	\caption{Contours in the complex $p^0$-plane for the integral representation of the Green's function of the neutron scalar meson.}
	\label{figure}
\end{figure}

By closing the contours for $G^+$ and $G^-$ at infinity it is easy to see that these functions satisfy the kinematical conditions \eqref{ret_adv_Gr} and hence are the advanced and retarded Green's functions. Their uniqueness is also evident. We may therefore write the further relations 
\begin{eqnarray}
G^+ (x,x') &=& 2\theta(x',x) \bar{G} (x,x')= \theta(x',x)\tilde{G} , \label{useful_f1}\\
G^- (x,x') &=& 2\theta(x,x') \bar{G} (x,x')= -\theta(x,x')\tilde{G}  ,\label{useful_f2}\\
\tilde{G}(x,x') &=& -2\epsilon(x,x')\bar{G} (x,x'), \label{useful_2} \\
\bar{G} (x,x') &=& -\tfrac{1}{2}\epsilon(x,x')\tilde{G}(x,x'), \label{useful_3}
\end{eqnarray}
where $\theta(x,x')$ and $\epsilon(x,x')$ are the step functions, defined by 
\begin{eqnarray}
\theta(x,x') &\equiv &\begin{cases}
1 \; \; {\rm for } \; \;x > x', \\
0  \;  \;{\rm for } \; \; x < x',
\end{cases} \\
&=& 1- \theta(x',x) = \tfrac{1}{2} \left[ 1 + \epsilon(x,x') \right],  \label{useful_4}\\
\epsilon(x,x') &=& \theta(x,x') - \theta(x',x) = \begin{cases}
1  \; \;{\rm for }   \; x > x', \\ -1  \; \;{\rm for }  \;  x < x', \\
\end{cases}
= - \epsilon(x',x).
\end{eqnarray}
We also have 
\begin{eqnarray}
G(x,x') &=& -\theta(x,x')G^{(+)}(x,x')+\theta(x',x)G^{(-)}(x,x'), \label{useful_def_Feynman}\\
G^*(x,x') &=& \theta(x',x)G^{(+)}(x,x')-\theta(x,x')G^{(-)}(x,x'), \\
G^{(+)}(x,x') &=&  -\theta(x,x')G(x,x') + \theta(x',x)G^* (x,x'), \\
G^{(-)}(x,x') &=&  \theta(x',x)G(x,x') - \theta(x,x')G^* (x,x'), 
\end{eqnarray}
which follow from \eqref{useful_1}, \eqref{useful_2}, \eqref{useful_3}, \eqref{useful_4}, and the identities 
\begin{eqnarray}
\theta(x,x')\theta(x',x) &=& 0 \\
\left[ \theta(x,x') \right]^2 &=&  \theta(x,x') \\
\left[ \epsilon(x,x') \right]^2 &=&  1
\end{eqnarray}
Care should be exercised in the use of the step functions. Strictly speaking, all the equations where $\theta(x,x')$, $\epsilon(x,x')$ appear and their corollaries can be inferred to hold only when one of the two points $x, x'$ is clearly to the future or the past of the other. When the two points are separated by a space-like interval further investigation is needed. We shall see presently that the functions $G^\pm(x, x')$ vanish for finite space-like  
separations, and hence the investigation reduces to a study of the behavior of the Green's functions when $x'$ is in the immediate neighborhood of $x$. The study is complicated by the fact that the Green's functions are actually \textit{distributions} rather than ordinary functions. It turns out, in the present case, that the above relations are in fact valid everywhere. Analogous relations, for the Green's functions of systems more complicated than the neutral scalar meson, however, do not always similarly hold when $x = x'$. In this work we shall avoid this difficulty by using the step functions only when $x \neq x'$. We may also remark that there will never be any ambiguity about the Green's functions themselves. In the present case they are well defined by the integral representation \eqref{scalar_Gr}, once the contour is chosen. 

\subsection{Symmetries of Green's functions}
For the neutral scalar meson, the reciprocity relations \eqref{recipr_rel} and \eqref{symm_supercomm} read
\begin{eqnarray}
G^\pm (x,x') &=& G^\mp (x',x), \\
\tilde{G}(x,x') &=& -\tilde{G}(x',x), \\
\bar{G}(x,x') &=& \bar{G}(x',x).
\end{eqnarray}
The contour for the function $\bar{G}$ corresponds to performing a \textit{principal value} integration along the real axis. In light of the reality of all the integration variables in this case, and because of the symmetry (in $p$) of the denominator of the integrand of \eqref{scalar_Gr}, we may infer the reality of $\bar{G}$:
\begin{equation}
\bar{G}^* = \bar{G}.
\end{equation}
Similarly, by performing the transformation $p \mapsto - p$, paying attention to the contour and using the previous relations, we may infer 
\begin{eqnarray}
G(x,x')&=&G(x',x), \\
G^{(1)}(x,x')&=& G^{(1)}(x',x), \\
G^{(\pm)}(x,x')&=& -G^{(\mp)}(x',x),
\end{eqnarray}
and 
\begin{eqnarray}
G^{\pm *} &=& G^{\pm }, \\
\tilde{G}^* &=& \tilde{G}, \\
G^{(1)*}&=&G^{(1)}, \\
G^{(\pm)*}&=& - G^{(\pm)},
\end{eqnarray}
i.e., $G^\pm,  \tilde{G}, G^{(1)}$ are all real, while $G^{(\pm)}$ is imaginary; it follows that $G^*$, as defined above, is the complex conjugate of $G$.

We note, finally, the differential equations satisfied by the various functions: 
\begin{eqnarray}
\left( \partial_\mu \partial^\mu - m^2 \right)G(x,x') &=&  \left( \partial_\mu \partial^\mu - m^2 \right)\bar{G}(x,x') \nonumber \\ &=& \left( \partial_\mu \partial^\mu - m^2 \right)G^\pm(x,x')\nonumber \\&=&-\delta(x,x'), \\
\left( \partial_\mu \partial^\mu - m^2 \right)\tilde{G}(x,x') &=&  \left( \partial_\mu \partial^\mu - m^2 \right)G^{(1)}(x,x') \nonumber \\&=& \left( \partial_\mu \partial^\mu - m^2 \right)G^{(\pm)}(x,x')\nonumber \\&=&0.
\end{eqnarray}

\subsection{The Feynman propagator}
In harmony with section 2, $\tilde{G}$ is known as the \textit{commutator function}, and $G^{(+)}$, $G^{(-)}$ are called its \textit{positive and negative frequency parts}, respectively. $G^{(1)}$ is known as \textit{Hadamard's elementary function}, and $G$ is called the Feynman propagator. From the relations given above it may be seen that all of the functions which we have introduced may be obtained from the Feynman propagator by splitting it into its real, 
imaginary, advanced, and retarded parts, and recombining these parts in various ways. It suffices therefore to evaluate the Feynman propagator in order to obtain all the rest. From Fig. \ref{figure} it is not hard to see that the contour for the Feynman propagator may be displaced to the real axis provided we give to the mass $m$ in eq. \eqref{scalar_Gr} an infinitesimal negative imaginary part, i.e., we move up the negative pole and move down the positive one in the complex $p^0$-plane. We therefore write 
\begin{equation}
G(x,x') = \frac{1}{(2\pi)^4}\int dp \; \frac{e^{ip(x-x')}}{m^2 - i\epsilon + p^2}, \; \; \; \epsilon > 0.
\end{equation}
with the understanding that the limit $\epsilon \to 0$ has to be taken at the end of all integrations. It is important to stress that the Feynman propagator can be obtained by analytic continuation from the \textit{unique} Green's function which the operator $(\partial _\mu \partial ^\mu - m^2)$ possesses when the $x$-manifold has a positive definite metric. This fact is responsible for many of the remarkable properties which characterize the Feynman propagator, and the analytic continuation method is often employed to obtain it. 

Making use of the integral identities
\begin{eqnarray}
&&\int_0 ^{+\infty} ds \; e^{-is(\xi-i\epsilon)} = \frac{1}{i(\xi -i\epsilon)}, \; \; \; \epsilon > 0 \\
&&\int_{-\infty} ^{+\infty} dx \;  e^{iax^2} = \sqrt{(\pi / |a|)} e^{i {\rm sgn}(a)(\pi /4)}, \; \; \; a \; {\rm real},
\end{eqnarray}
(where {\rm sgn} is the signum function: ${\rm sgn}(a)\equiv a/|a|$), one obtains
\begin{eqnarray}
G(x,x') &=& \frac{i}{(2\pi)^4}\int_0 ^{+\infty} ds \int dp \;  e^{ip(x-x')} e^{-is(m^2 + p^2 -i\epsilon)} \nonumber \\
&=& \frac{i}{(2\pi)^4}\int_0 ^{+\infty} ds \int dp \; e^{-i\left[(m^2 + p^2 -i\epsilon)s  -p(x-x')      )        \right]} \nonumber \\
&=& \frac{i}{(2\pi)^4}\int_0 ^{+\infty} ds \; e^{-is(m^2 -i\epsilon)} \int dp \;  e^{-isp^2 +ip(x-x')} \nonumber \\
&=& \frac{i}{(2\pi)^4}\int_0 ^{+\infty} ds \; e^{-is(m^2 -i\epsilon)} \int dp \;  e^{-isp^2 +ip(x-x') - i(\tfrac{x-x'}{2\sqrt{s}})^2} e^{i(\tfrac{x-x'}{2\sqrt{s}})^2} \nonumber \\
&=& \frac{i}{(2\pi)^4}\int_0 ^{+\infty} ds \; e^{-is(m^2 -i\epsilon)}e^{i\left(\tfrac{x-x'}{2\sqrt{s}}\right)^2} \int dp \; \bullet \nonumber \\ && \; \;\bullet  \; \;  e^{-i\left(sp^2 -p(x-x') + \left(\tfrac{x-x'}{2\sqrt{s}}\right)^2\right)} \nonumber \\
&=& \frac{i}{(2\pi)^4}\int_0 ^{+\infty} ds \; e^{-is(m^2 -i\epsilon)+i\left(\tfrac{x-x'}{2\sqrt{s}}\right)^2} \int dp \;  e^{-i\left(\sqrt{s}p - \tfrac{x-x'}{2\sqrt{s}}\right)^2} \nonumber \\
&=& \frac{i}{(2\pi)^4}\int_0 ^{+\infty} ds \; e^{-is(m^2 -i\epsilon)+i\left(\tfrac{x-x'}{2\sqrt{s}}\right)^2} \int dp \;  e^{-is\left(p - \tfrac{x-x'}{2s}\right)^2} \nonumber \\
&=& \frac{i}{(2\pi)^4}\int_0 ^{+\infty} ds \; e^{-is(m^2 -i\epsilon)+i\left(\tfrac{x-x'}{2\sqrt{s}}\right)^2} \;\bullet \nonumber \\ && \; \bullet\left(\sqrt{(\pi / s)}\right)^4  \left(e^{i {\rm sgn}(-s)(\pi /4)}\right)^2 \nonumber \\
&=& \frac{i}{(2\pi)^4}\int_0 ^{+\infty} ds \; e^{-is(m^2 -i\epsilon)+i\left(\tfrac{x-x'}{2\sqrt{s}}\right)^2} \frac{\pi^2}{s^2} e^{-i(\pi /4)2} \nonumber \\
&=& \frac{i}{(2\pi)^4}\int_0 ^{+\infty} ds \; e^{-is(m^2 -i\epsilon)+i\left(\tfrac{x-x'}{2\sqrt{s}}\right)^2} \frac{\pi^2}{s^2} e^{-i\pi/2 } \nonumber \\
&=& \frac{-i^2}{(4\pi)^2}\int_0 ^{+\infty} ds \; \frac{1}{s^2} e^{-is(m^2 -i\epsilon)+i\left(\tfrac{x-x'}{2\sqrt{s}}\right)^2} \nonumber \\
&=& \frac{1}{(4\pi)^2}\int_0 ^{+\infty} ds \; \frac{1}{s^2} e^{-is(m^2 -i\epsilon)+i\left(\tfrac{x-x'}{2\sqrt{s}}\right)^2} \nonumber \\
&=& \frac{1}{(4\pi)^2}\int_0 ^{+\infty} ds \; \frac{1}{s^2} e^{-i\left(m^2 s-\tfrac{(x-x')^2}{4s}\right)} . \label{Feyn_Gr}
\end{eqnarray}
In the final form the negative imaginary part $-i\epsilon$ attached to $m^2$ has been dropped, with the understanding that $G(x,x')$ has to be regarded as the \textit{boundary value} (on the real axis) of a function of $m^2$ and $(x-x')^2$ which is analytic in the lower half $m^2$ plane and in the upper half $(x-x')^2$ plane. The fact that $G(x,x')$ depends on $x$ and $x'$ only through the combination $(x-x')^2$ is a consequence of Lorentz invariance and the homogeneity of flat space-time. We shall see later that in a curved space-time the dependence of $G(x,x')$ on $x$ and $x$' will not be so simple.

When $(x-x')^2 <0$ it is convenient to introduce the new variables 
\begin{eqnarray}
&& z^2 = -m^2 (x-x')^2 >0, \; \; \; z>0 \\
&& u = -2im^2 \tfrac{s}{z},
\end{eqnarray}
which convert \eqref{Feyn_Gr} to 
\begin{eqnarray}
G(x,x') &=& \frac{1}{16\pi^2} \int_{0} ^{-i\infty} du \;  \left( -\frac{z}{2im^2}\right) \left(-\frac{4m^4}{u^2 z^2} \right) e^{-i\left(-m^2 \tfrac{uz}{2im^2}+\tfrac{im^2(x-x')^2}{2uz}\right) } \nonumber \\
&=& \frac{1}{16\pi^2} \int_{-i\infty} ^{0} du \; \left( \frac{iz}{2m^2}\right) \left(\frac{4m^4}{u^2 z^2} \right) e^{\left( \tfrac{uz}{2}+\tfrac{m^2(x-x')^2}{2uz}\right) } \nonumber \\
&=& \frac{1}{16\pi^2} \frac{2im^2}{z} \int_{-i\infty} ^{0} du \; \frac{1}{u^2} e^{\tfrac{z}{2}\left( u-\tfrac{1}{u}\right) } \nonumber \\
&=& \frac{im^2}{8\pi^2} \frac{1}{z} \int_{-i\infty} ^{0} du \; \frac{1}{u^2} e^{\tfrac{z}{2}\left( u-\tfrac{1}{u}\right) }
\end{eqnarray}
The contour of integration may be deformed in the manner shown in Fig. \ref{figure2}, and in virtue of the well known integral representation
\begin{equation}
H_1 ^{(2)}(z) = \frac{1}{i\pi} \int_C du \; \frac{1}{u^2} e^{\tfrac{z}{2}\left( u-\tfrac{1}{u}\right) }
\end{equation}
of the Hankel function of the second kind, of order $1$, we finally have
\begin{equation}
G(x,x') = -\frac{m^2}{8\pi} \frac{H_1 ^{(2)}(z)}{z}. \label{Feynman_Hankel}
\end{equation} 
\begin{figure}[htbp]
	\includegraphics[scale=0.40]{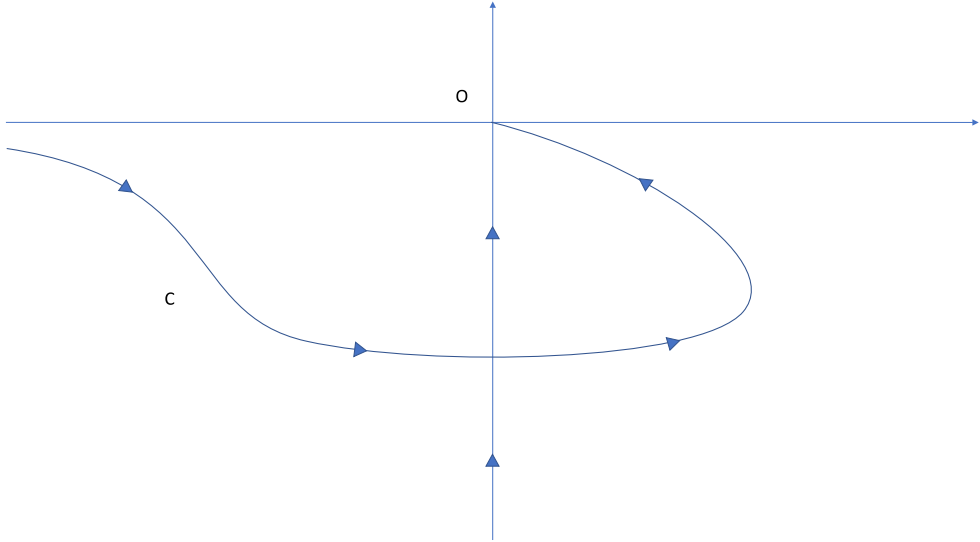}
	\vspace*{8pt}
	\caption{Contour for the Hankel function $H_1 ^{(2)}(z)$.}
	\label{figure2}
\end{figure}

\subsection{Series expansions and singularities}
For small values of $z$ (i.e., near the \textit{light cone}) it is convenient to use the power series expansions 
\begin{equation}
H_1 ^{(2)}(z) = J_1(z) -iY_1(z),
\end{equation}
\begin{eqnarray}
J_1(z) &=& \frac{z}{2} - \frac{z^3}{2^2 4} + \frac{z^5}{2^2 4^2 6} + ..., \\
Y_1(z) &=& \frac{2}{\pi} \bigg[-\frac{1}{z} J_0 (z) + (\gamma + {\rm log} \tfrac{z}{2}) J_1(z) \nonumber \\&& -\frac{z}{2} +\frac{z^3}{2^2 4}(1+\tfrac{1}{2})-\frac{z^5}{2^2 4^2 6}(1+\tfrac{1}{2} +\tfrac{1}{3})+...\bigg], \\
&& J_0 (z)= 1- \frac{z^2}{2^2}+\frac{z^4}{2^2 4^2} - ..., \\
&& \gamma = 0,5772... \; .
\end{eqnarray}
Remembering that analytic continuation should be performed in the lower half $z^2$ plane, and making use of the identities, which hold  $\forall z^2 \in \mathcal{R}$:
\begin{eqnarray}
\lim_{\epsilon \to 0^+} \frac{1}{z^2 - i \epsilon} &=& \frac{1}{z^2} + i\pi \delta(z^2), \\
\lim_{\epsilon \to 0^+}{\rm log}(z^2 - i \epsilon) &=& {\rm log} |z^2| -i\pi \theta(-z^2),
\end{eqnarray} 
we find, on splitting the Feynman propagator into its real and imaginary parts 
\begin{eqnarray}
&&\bar{G}(x,x')={\rm Re}(G)(x,x') = \frac{1}{4\pi}\delta((x-x')^2)  \nonumber \\&&\; \; - \frac{m^2}{8\pi}\theta(-(x-x')^2)\left[ \tfrac{1}{2}+\frac{m^2 (x-x')^2}{2^2 4} + \frac{m^4 (x-x')^4}{2^2 4^2 6}+... \right], 	\label{exp_real_F_flat}\\
&&G^{(1)}(x,x')= 2 {\rm Im}(G)(x,x') = \frac{m^2}{2\pi^2}\bigg\{ \frac{1}{m^2 (x-x')^2} \nonumber \\&&\; \;+ \left[\gamma - {\rm log}2 + {\rm log}m +\tfrac{1}{2}{\rm log}|(x-x')^2| \right] \left[\tfrac{1}{2}+\frac{m^2 (x-x')^2}{2^2 4} + ... \right]  \nonumber \\ &&\; \; -\tfrac{1}{4}-\frac{m^2 (x-x')^2}{2^2 4}(1+\tfrac{1}{4})+\frac{m^4 (x-x')^4}{2^2 4^2 6}(1+\tfrac{1}{2} +\tfrac{1}{6})- ... \bigg\}. \label{exp_imag_F_flat}
\end{eqnarray}
The Green's function $G$ (and hence also $G^+$ and $G^-$) is seen to have a $\delta$-distribution type singularity on the light cone $[(x - x')^2 = 0]$ and to vanish outside the light cone $[(x - x')^2 > 0]$. It also vanishes inside the light cone when $m = 0$\footnote{This property no longer holds when space-time is curved.}. In this case we have 
\begin{eqnarray}
\bar{G}(x,x') &=& \frac{1}{4\pi}\delta((x-x')^2),\\
G^{(1)}(x,x') &=&  \frac{1}{2\pi^2  (x-x')^2},
\end{eqnarray}
whence, in virtue of eqs. \eqref{useful_f1} and \eqref{useful_f2} and the identity 
\begin{equation}
\delta(\xi^2 - a^2) = \frac{1}{2a}\left[\delta(\xi-a) + \delta(\xi+a)\right], \; \; \; a>0 \\
\end{equation}
we obtain
\begin{equation}
G^\pm (x,x') = \frac{1}{4\pi} \frac{1}{|\vec{x}-\vec{x'}|}\delta(x^0 - x'^0 \pm |\vec{x}-\vec{x'}|).
\end{equation}

\subsection{Curved space-time formalism}
Consider now the scalar field theory obtained by applying the \textit{minimal coupling} to the gravitational field to the field theory examined in the previous sections:
\begin{itemize}
	\item[1.] Replace the Minkowski metric $\eta_{\mu\nu}$ by $g_{\mu\nu}$. 
	\item[2.] Replace ordinary space-time derivatives by the covariant derivatives associated to the (unique) Levi-Civita connection determined by the metric.
	\item[3.] Multiply the Lagrange function by $|\mathrm{g}|^{1/2}$, where $\mathrm{g} = {\rm det}(g_{\mu\nu})$. 
\end{itemize}
Then the action functional for the theory is
\begin{equation}
S[\phi] = -\tfrac{1}{2} \int dx \; |\mathrm{g}|^{1/2} \left(\phi_;^\mu \phi _{;\mu} + m^2 \phi^2 \right).
\end{equation}
Hence
\begin{equation}
S_1 [\phi]=  |\mathrm{g}|^{1/2} \left(\tensor{\phi}{_; ^\mu _\mu}  - m^2 \phi \right),
\end{equation}
and 
\begin{eqnarray}
\tensor[_1]{S}{_1} &=& |\mathrm{g}(x)|^{1/2} \left(\tensor{\delta(x,x')}{_; ^\mu _\mu}  - m^2 \delta(x,x') \right) \nonumber \\
&=& |\mathrm{g}(x)|^{1/2} \left(\nabla^\mu \nabla_\mu  - m^2  \right)\delta(x,x').
\end{eqnarray}
Therefore the equation for the Green's functions is
\begin{equation}
|\mathrm{g}(x)|^{1/2} \left(\nabla^\mu \nabla_\mu  - m^2  \right)G(x,x') = -\delta(x,x'). \label{scalar_Gr_eq_curved}
\end{equation}
A ve\-ry el\-eg\-ant me\-th\-od for so\-lvi\-ng th\-is eq\-ua\-ti\-on ex\-ist\-s, wh\-ich is due to Sch\-win\-ger. One re\-gar\-ds the Green's function as the matrix element of an operator $G$ in an abstract (nonphysical) Hilbert space: 
\begin{equation}
G(x,x') = \left\langle x | G | x' \right\rangle, 
\end{equation}
the basis vectors $| x' \rangle$ being eigenvectors of a commuting set of Hermitian operators $x^\mu$
\begin{equation}
x^\mu | x' \rangle =  x'^\mu | x' \rangle, \; \; \; \left\langle x'' | x' \right\rangle = \delta(x'',x').
\end{equation}
The differential eq. \eqref{scalar_Gr_eq_curved} may then be recast in the operator form 
\begin{equation}
(p_\mu |\mathrm{g}|^{1/2} g^{\mu\nu}p_\nu + m^2 |\mathrm{g}|^{1/2}) G = 1, \label{scalar_Gr_eq_curved_operator}
\end{equation}
where the $p_\mu$ are Hermitian operators which satisfy the commutation relations
\begin{equation}
[x^\mu, p_\nu] = i \underline{\delta}^\mu _\nu, \; \; \; [p_\mu,p_\nu]=0.
\end{equation} 

\subsection{General definition of the Feynman propagator}
In order to solve the operator equation \eqref{scalar_Gr_eq_curved} we must first decide which Green's function we want. As in the previous sections, we shall choose the Feynman propagator as the basic Green's function of interest. However, this immediately begs the question of what we \textit{mean} by the Feynman propagator when space-time is curved and non-empty. In a flat empty space-time the Feynman propagator can be defined as that Green's function which propagates positive frequencies into the future and negative frequencies into the past (see eq. \eqref{useful_def_Feynman}). The same definition can be used when space-time is curved provided it becomes asymptotically flat at large space-like and time-like distances and the words \lq\lq future'' and \lq\lq past'' are replaced by \lq\lq remote future'' and \lq\lq remote past'' respectively. Under these circumstances the same variational law holds for the Feynman propagator as well as the retarded and advanced Green's functions: 
\begin{equation}
\delta G^{ij} = G^{ik} \; \tensor[_k]{\delta F}{_l} \;  G^{lj}, \label{good_var_law}
\end{equation}
for this law immediately permits the expansion about the flat-empty-space-time values, $\tensor[^0]{F}{}$ and $\tensor[^0]{G}{}$, of the operators $F$ and $G$, 
\begin{eqnarray}
G - \tensor[^0]{G}{} &=& \delta G = ( \tensor[^0]{G}{} + \delta G )(F- \tensor[^0]{F}{})( \tensor[^0]{G}{} + \delta G) \nonumber \\
&=&  \tensor[^0]{G}{} \; U \; \tensor[^0]{G}{} + \delta G \; F \; \tensor[^0]{G}{} + \tensor[^0]{G}{} \; F \;  \delta G + \delta G \; F \; \delta G \nonumber \\
&=&  \tensor[^0]{G}{} \; U \; \tensor[^0]{G}{} + \tensor[^0]{G}{} \; U \; \tensor[^0]{G}{} U \; \tensor[^0]{G}{} + ...
\end{eqnarray}
where $U \equiv F- \tensor[^0]{F}{}$; hence one obtains
\begin{eqnarray}
G &=& \tensor[^0]{G}{} +  \tensor[^0]{G}{} \; U \; \tensor[^0]{G}{} + \tensor[^0]{G}{} \; U \; \tensor[^0]{G}{} \; U \; \tensor[^0]{G}{} + ... \nonumber \\
&=&  \tensor[^0]{G}{} \; (1- U \tensor[^0]{G}{})^{-1} \\
&=& (1- \tensor[^0]{G}{} U )^{-1} \; \tensor[^0]{G}{}.
\end{eqnarray}
We see that the first $\tensor[^0]{G}{}$ standing on the left and the last $\tensor[^0]{G}{}$ standing on the right, in each term of the expansion, do indeed ensure that ultimately only pure positive frequencies are found in the remote future and pure negative frequencies in the remote past, owing to the effectively limited domain over which $U$ is non-vanishing. 

A word is perhaps in order at this point regarding the very special properties the Feynman propagator possesses. When $F$ is symmetric (and we have always assumed it is) the Feynman propagator is symmetric. Since it also satisfies the variational law \eqref{good_var_law} \textit{it is the only Green's function which, when regarded as a continuous matrix, obeys all the rules of finite matrix theory}. In a certain sense it may therefore be regarded as \textit{the} inverse of the matrix $(-\tensor[_i]{F}{_j})$. In flat space-time its special properties stem from the fact (already noted) that it may be obtained by analytic continuation from the \textit{unique} inverse which $(-\tensor[_i]{F}{_j})$ possesses in a Euclidean space. When space-time is curved these properties may themselves be used to define the Feynman propagator \textit{even when space-time is not asymptotically flat}. 

\subsection{Integral representation (I)}
From the results of the previous sections, we shall obtain the Feynman propagator, in curved space-times as well as flat, simply by giving the mass parameter $m$ an infinitesimal negative imaginary part. This has the effect of rendering the operator in \eqref{scalar_Gr_eq_curved_operator} nonsingular so that inverses may be taken in a simple and direct fashion. It also emphasizes once again that Green's functions are boundary values of analytic functions. Multiplying equation \eqref{scalar_Gr_eq_curved_operator} on the left by $|\mathrm{g}|^{-1/4}$ and on the right by $|\mathrm{g}|^{1/4}$, we obtain 
\begin{eqnarray}
&&|\mathrm{g}|^{-1/4}(p_\mu |\mathrm{g}|^{1/2} g^{\mu\nu}p_\nu + m^2 |\mathrm{g}|^{1/2}) G |\mathrm{g}|^{1/4}= |\mathrm{g}|^{-1/4}|\mathrm{g}|^{1/4}, \nonumber \\
&&|\mathrm{g}|^{-1/4}p_\mu |\mathrm{g}|^{1/2} g^{\mu\nu}p_\nu G |\mathrm{g}|^{1/4} +  m^2 |\mathrm{g}|^{1/4} G |\mathrm{g}|^{1/4} = 1, \nonumber \\
&&|\mathrm{g}|^{-1/4}p_\mu |\mathrm{g}|^{1/2} g^{\mu\nu}p_\nu |\mathrm{g}|^{-1/4}|\mathrm{g}|^{1/4} G |\mathrm{g}|^{1/4} +  m^2 |\mathrm{g}|^{1/4} G |\mathrm{g}|^{1/4} = 1, \nonumber \\
&&(|\mathrm{g}|^{-1/4}p_\mu |\mathrm{g}|^{1/2} g^{\mu\nu}p_\nu |\mathrm{g}|^{-1/4}+  m^2)|\mathrm{g}|^{1/4} G |\mathrm{g}|^{1/4} = 1, \nonumber \\
&&(H+m^2)|\mathrm{g}|^{1/4} G |\mathrm{g}|^{1/4} = 1,
\end{eqnarray}
where 
\begin{equation}
H \equiv |\mathrm{g}|^{-1/4}p_\mu |\mathrm{g}|^{1/2} g^{\mu\nu}p_\nu |\mathrm{g}|^{1/4}.
\end{equation}
Therefore, with the correct prescription for the Feynman propagator:
\begin{equation}
|\mathrm{g}|^{1/4} G |\mathrm{g}|^{1/4} = \frac{1}{H+m^2 -i\epsilon} = i\int _0 ^{+\infty} ds \; e^{-is(H+m^2)}.
\end{equation}
Taking matrix elements of the previous equation we obtain 
\begin{eqnarray}
|\mathrm{g(x')}|^{1/4} G |\mathrm{g}(x'')|^{1/4} &=& i \int _0 ^{+\infty} ds \; \left\langle x' |e^{-isH}|x'' \right\rangle e^{-ism^2}, \nonumber \\
|\mathrm{g'}|^{1/4} G |\mathrm{g}''|^{1/4}  &=&  i \int _0 ^{+\infty} ds \; \left\langle x',s |x'',0 \right\rangle e^{-ism^2}, \label{corr_prescr_F}
\end{eqnarray}
with 
\begin{equation}
\left\langle x',s |x'',0 \right\rangle \equiv \left\langle x' |e^{-isH}|x'' \right\rangle.
\end{equation}
Thus we are led to an associated dynamical problem governed by the \lq\lq Hamiltonian'' $H$. 

The \lq\lq transition amplitude'' $\left\langle x',s |x'',0 \right\rangle$ satisfies the Schr\"{o}dinger equation 
\begin{equation}
i \frac{\partial}{\partial s} \left\langle x',s |x'',0 \right\rangle = \left\langle x',s |H|x'',0 \right\rangle = -\tensor{\left\langle x',s |x'',0 \right\rangle}{_{;\mu'}^{\mu'}} \label{curved_ampl}
\end{equation} 
and the boundary condition
\begin{equation}
\left\langle x',0 |x'',0 \right\rangle = \delta(x',x''). \label{curved_ampl_in}
\end{equation} 
In flat empty space-time, this equation is solved by 
\begin{equation}
\left\langle x',s |x'',0 \right\rangle _{{\rm Minkowski}} = \frac{-i}{16 \pi^2} \frac{1}{s^2} e^{i \frac{(x'-x'')^2}{4s}}, \label{flat_ampl}
\end{equation}
which agrees with \eqref{Feyn_Gr}.

In order to discuss the generalization to curved space-time, some insight on auxiliary geometric quantities is necessary.

\subsection{Auxiliary geometric quantities}
\subsubsection{$k$-point tensors}
As is well known, a $(r,s)$-tensor field $T$ on a manifold $M$ is a map which assigns to every point $p \in M$ an element from the direct product of the tangent space $T_p M$, taken $r$ times, and the cotangent space $T_p ^* M$, taken $s$ times:
\begin{eqnarray}
&& T: M \rightarrow \left( TM \right)^r \otimes  \left( T^* M \right)^s ,  \nonumber \\
&& p \mapsto T(p) = \tensor{T}{^{\mu _1...\mu _r}_{\nu _1...\nu _s}}(p)\frac{\partial}{\partial x^{\mu _1}}\bigg |_p \otimes ... \otimes \frac{\partial}{\partial x^{\mu _r}}\bigg |_p \otimes dx^{\nu _1}\bigg |_p \otimes ... \otimes  dx^{\nu _s}\bigg |_p. \nonumber \\
&& 
\end{eqnarray}
The tensor field concept can be generalized in this way: we shall call $k$-point tensor on a manifold $M$ a map which assigns to every $k$-tuple of points in $M$ an element from the direct product of the tensor spaces built upon those points:
\begin{eqnarray}
&& T: M^k \rightarrow \left[\left( TM \right)^{r_1} \otimes  \left( T^* M \right)^{s_1} \right] \otimes ... \otimes \left[\left( TM \right)^{r_k} \otimes  \left( T^* M \right)^{s_k} \right]  ,\nonumber \\ \nonumber \\
&& (p^{(1)},p^{(2)},...,p^{(k)})  \mapsto \nonumber \\
&& \; \; \; \; \tensor{T}{^{\mu^{(1)} _1...\mu^{(1)} _r}_{\nu^{(1)} _1...\nu^{(1)} _s} ^{...} _{...} ^{\mu ^{(k)} _1...\mu^{(k)} _{r^{k}}} _{\nu^{(k)} _1...\nu^{(k)} _{s^{(k)}}}  }(p^{(1)},p^{(2)},...,p^{(k)}) \cdot \nonumber \\ && \; \; \; \; \; \; \cdot \tfrac{\partial}{\partial x^{\mu^{(1)} _1}}\bigg |_{p^{(1)}} \otimes ... \otimes \tfrac{\partial}{\partial x^{\mu^{(1)} _{r^{(1)}}}}\bigg |_{p^{(1)}} \otimes dx^{\nu^{(1)} _1}\bigg |_{p^{(1)}} \otimes ... \otimes  dx^{\nu^{(1)} _{s^{(1)}}}\bigg |_{p^{(1)}} \otimes ... \nonumber \\ && \; \; \; \; \; \; \; \;    \otimes \tfrac{\partial}{\partial x^{\mu^{(k)} _1}}\bigg |_{p^{(k)}} \otimes ... \otimes \tfrac{\partial}{\partial x^{\mu^{(k)} _{r^{(k)}}}}\bigg |_{p^{(k)}} \otimes dx^{\nu ^{(k)} _1}\bigg |_{p^{(k)}} \otimes ... \otimes  dx^{\nu^{(k)} _{s^{(k)}}}\bigg |_{p^{(k)}} . \nonumber \\
\end{eqnarray} 
Roughly speaking, whenever $k-1$ points (however chosen) are held fixed, a $k$-point tensor becomes a tensor field.

\subsubsection{Geodesics}
For a detailed discussion on geodesics on a Riemannian manifold, see  \cite{milnor1969morse}; for a Lorentzian manifold, see \cite{hawking1973large}. Here we will only introduce tools necessary for later treatise.
As is well known, given a connection $\nabla$ on a manifold $M$, there is exactly one parallel transport on $M$, i.e., exactly one way to parallel transport a given vector along any curve; one shall define \textit{geodesic} (associated to that connection) any curve whose tangent vector is parallel transported along the curve. If $M$ is a (pseudo-) Riemannian manifold with metric tensor $g$, and $\nabla$ is the unique Levi-Civita connection associated to that metric, then, given two close enough points, the curve for which the length functional (associated to the metric $g$) is stationary (on the curves which connect those points) is a geodesic.

In fact the equations for a geodesic $x(\tau)$ with affine parameter, 
\begin{equation}
\ddot{x}^\mu (\tau) + \Gamma^\mu _{\rho\sigma}(x(\tau))\dot{x}^\rho (\tau) \dot{x}^\sigma (\tau) =0 
\end{equation}
are precisely the Euler-Lagrange equations associated to the functional $$\underline{S}[x(\tau)] = {\rm Length }[x(\tau)] =\int d\tau \; \underline{L}(x(\tau),\dot{x}(\tau) =\int d\tau \; \left[\pm\dot{x}^\rho g_{\rho\sigma} \dot{x}^\sigma\right]^{1/2}$$:
\begin{eqnarray}
&&\frac{\delta \underline{S}}{\delta x^\mu} = \frac{\partial \underline{L}}{\partial x^\mu} - \frac{d}{d\tau} \frac{\partial \underline{L}}{\partial \dot{x}^\mu} = \nonumber \\
&=& \tfrac{1}{2} \left[\pm\dot{x}^\rho g_{\rho\sigma} \dot{x}^\sigma\right]^{-1/2} \left[\pm g_{\rho\sigma ,\mu}\dot{x}^\rho\dot{x}^\sigma \mp  \frac{d}{d\tau} \left( 2 g_{\mu\sigma} \dot{x}^\sigma  \right)\right] \nonumber \\
&=& \mp \frac{1}{\underline{L}} \left[\frac{d}{d\tau} \left( g_{\mu\sigma} \dot{x}^\sigma    \right)  -  \tfrac{1}{2}   g_{\rho\sigma ,\mu}\dot{x}^\rho\dot{x}^\sigma  \right]\nonumber \\
&=& \mp \frac{1}{\underline{L}} \left( g_{\mu\sigma} \ddot{x}^\sigma +\frac{d}{d\tau} \left( g_{\mu\sigma} \right) \dot{x}^\sigma - \tfrac{1}{2}   g_{\rho\sigma ,\mu}\dot{x}^\rho\dot{x}^\sigma \right)\nonumber \\
&=& \mp \frac{1}{\underline{L}}\left(g_{\mu\sigma} \ddot{x}^\sigma + g_{\mu\sigma ,\rho} \dot{x}^\rho \dot{x}^\sigma -\tfrac{1}{2}   g_{\rho\sigma ,\mu}\dot{x}^\rho\dot{x}^\sigma \right) \nonumber \\
&=& \mp \frac{1}{\underline{L}}\left( g_{\mu\sigma} \ddot{x}^\sigma + \tfrac{1}{2}g_{\mu\sigma ,\rho} \dot{x}^\rho \dot{x}^\sigma + \tfrac{1}{2}g_{\mu\rho ,\sigma} \dot{x}^\rho \dot{x}^\sigma - \tfrac{1}{2}   g_{\rho\sigma ,\mu}\dot{x}^\rho\dot{x}^\sigma \right) = 0, \nonumber \\
&&g_{\mu\sigma} \ddot{x}^\sigma + \tfrac{1}{2}g_{\mu\sigma ,\rho} \dot{x}^\rho \dot{x}^\sigma + \tfrac{1}{2}g_{\mu\rho ,\sigma} \dot{x}^\rho \dot{x}^\sigma - \tfrac{1}{2}   g_{\rho\sigma ,\mu}\dot{x}^\rho\dot{x}^\sigma  = 0.
\end{eqnarray}
Raising the index $\mu$, we obtain
\begin{equation}
\ddot{x}^\mu + \tfrac{1}{2} g^{\mu\nu}\left(g_{\nu\sigma ,\rho} +g_{\nu\rho ,\sigma} -   g_{\rho\sigma ,\nu}\right)\dot{x}^\rho \dot{x}^\sigma = 0,
\end{equation}
which are exactly the geodesic equations, being
\begin{equation}
\Gamma ^{\mu}_{\sigma\rho} = \tfrac{1}{2} g^{\mu\nu}\left(g_{\nu\sigma ,\rho} +g_{\nu\rho ,\sigma} -  g_{\rho\sigma ,\nu}\right).
\end{equation}
As is straightforward to verify, the same equations are obtained considering the functional $S[x(\tau)] = \int d\tau \; L(x(\tau),\dot{x}(\tau))=\int d\tau \; \tfrac{1}{2} \dot{x}^\rho g_{\rho\sigma} \dot{x}^\sigma$.

Since a variational principle has been introduced, the theory of geodesics may be viewed \textit{formally} as a dynamical theory, and all the results of Hamilton-Jacobi theory can be immediately applied to it. The \lq\lq conjugate momenta'' and \lq\lq Hamiltonian'' are given by
\begin{eqnarray}
p_\mu &\equiv &\frac{\partial L}{\partial \dot{x}^\mu} =  g_{\mu\nu} \dot{x}^\nu  \equiv  \dot{x}_\mu, \\
H &\equiv & p_\mu \dot{x}^\mu - L = \dot{x}_\mu \dot{x}^\mu - \tfrac{1}{2} \dot{x}^\mu g_{\mu\nu} \dot{x}^\nu = L. 
\end{eqnarray}  
Hence we are led to the \lq\lq energy integral'' for the geodesics:
\begin{equation}
H = L = \tfrac{1}{2} \dot{x}^\rho g_{\rho\sigma} \dot{x}^\sigma = \tfrac{1}{2} \left( \frac{ds}{d\tau}\right)^2 = {\rm const},
\end{equation}
where $s$ is the arc length defined on the curve; the action functional on a solution, i.e., a geodesic, whose endpoints are $x(\tau) \equiv x$, $x(\tau ') \equiv x'$, reduces to
\begin{eqnarray}
S(x,\tau |x',\tau ') &=& \int_{\tau '} ^{\tau} d\tau '' \; \tfrac{1}{2} \dot{x}^\rho g_{\rho\sigma} \dot{x}^\sigma  \nonumber \\
&=& \int_{\tau '} ^{\tau} d\tau '' \; \tfrac{1}{2} \left( \frac{ds}{d\tau ''}\right)^2 \nonumber \\
&=& \tfrac{1}{2} \int_{x'} ^{x} ds \frac{d\tau ''}{ds} \; \left( \frac{ds}{d\tau ''}\right)^2 \nonumber \\
&=& \tfrac{1}{2} \int_{x'} ^{x} ds \; \left( \frac{ds}{d\tau ''}\right) \nonumber \\
&=& \tfrac{1}{2} \left( \frac{ds}{d\tau }\right) \int_{x'} ^{x} ds \nonumber \\
&=& \tfrac{1}{2} \left( \frac{ds}{d\tau }\right) \left(s(x)-s(x')\right) \nonumber \\
&=& \tfrac{1}{2} \left( \frac{s(x)-s(x')}{\tau -\tau '}\right) \left(s(x)-s(x')\right) \nonumber \\
&=& \frac{\sigma (x,x')}{\tau -\tau '},
\end{eqnarray}
where the \textit{bi-scalar} $\sigma (x,x')$, which we shall call \textit{geodetic interval} or \textit{world function}, is equal to one half the square of the distance along the geodesic between $x$ and $x'$. 

The bi-scalar of geodetic interval satisfies an important differential equation which follows immediately from the Hamilton-Jacobi equation for the action $S$; we have
\begin{eqnarray}
p_\mu &=& \frac{\partial S}{\partial x^\mu} = \frac{\sigma _{;\mu}}{\tau -\tau '}, \\
0 &=& \frac{\partial S}{\partial \tau} + H = -\frac{\sigma (x,x')}{(\tau -\tau ')^2} + \tfrac{1}{2} p_\mu p^\mu ,
\end{eqnarray}
where $p_\mu$ is now the \lq\lq momentum'' at $x$ corresponding to the geodesic defined by the endpoints $x$, $x'$; therefore the world function is the solution of the Cauchy problem
\begin{equation}
\begin{cases} 
\tfrac{1}{2} \sigma _{;\mu}\sigma ^{;\mu} = \sigma,  \label{ham_jac_eq}\\
\sigma (x',x') = 0.
\end{cases}
\end{equation}
Obviously, the Hamilton-Jacobi equation holds on the other endpoint too; then
\begin{equation}
\tfrac{1}{2}\sigma _{;\mu}\sigma ^{;\mu} =\tfrac{1}{2}\sigma _{;\mu '}\sigma ^{;\mu '}= \sigma.
\end{equation}
In other words, $\sigma _{;\mu}$ is a vector of length equal to the distance along the geodesic between $x$ and $x'$, tangent to the geodesic at $x$, and oriented in the direction $x' \rightarrow x$, while $\sigma _{;\mu'}$ is a vector of equal length, tangent to the geodesic at $x'$, and oriented in the opposite direction. The geodetic interval itself is obviously a symmetric function of $x$ and $x'$: 
\begin{equation}
\sigma(x,x')=\sigma(x',x).
\end{equation}

\subsubsection{Caustic surfaces}
In a general Riemannian manifold the geodetic interval is not single-valued, except when $x$ and $x'$ are sufficiently close to one another. The geodesics emanating from a given point will often, beyond a certain distance, begin to cross over one another. The locus of points at which the onset of overlap occurs forms an envelope of the family of geodesics, known as a \textit{caustic surface}. The equation for the caustic surface relative to a given point can be expressed in terms of the quantity ${\rm det } (\sigma_{;\mu\nu'})$.

In fact, a geodesic can be specified by means of its endpoints or by means of one of its endpoints together with a tangent vector at that point having a length equal to the length of the geodesic. Therefore we can vary $\sigma_{;\mu'}$ helding $x'$ fixed, and evaluate the resulting variation in $x$; it is straightforward to obtain
\begin{equation}
\delta \sigma_{;\mu'} =  \sigma_{;\mu '\nu} \delta x^\nu;
\end{equation}
therefore
\begin{equation}
\delta x^\mu = - D^{-1 \mu \nu'} \delta \sigma_{;\nu '}, \label{caust_eq}
\end{equation}
where $D^{-1 \mu \nu '}$ is the inverse transpose of the finite matrix having the elements $D_{\rho \nu '} = - \sigma_{;\nu '\rho}$, i.e.,
\begin{equation}
D^{-1 \mu \nu'} D_{\rho \nu '} = - D^{-1 \mu \nu'} \sigma_{;\nu '\rho} = \underline{\delta}^{\mu}_{\rho}.
\end{equation}
When $D^{-1 \mu \nu '}$ is a singular matrix, it is possible to choose a variation in $\sigma_{;\mu '}$ which produces no variation in the $x$. The point $x$ then lies on the caustic surface relative to $x'$, and the condition for this is evidently $D^{-1} = 0$, where 
\begin{equation}
D = -{\rm det } (D_{\mu\nu'}),
\end{equation}
the minus sign expressing a convention appropriate to the metric of space-time. In $4$-dimensional space-time the caustic surface will usually be a $3$-dimensional hypersurface, but degenerate forms having fewer dimensions, including zero (focal points) can occur. It will be noted that variations of $\sigma_{;\mu '}$ which leave $x$ unchanged must be orthogonal to $\sigma_{;\mu '}$; that is, the length of the geodesic itself must remain unchanged. This may be inferred by taking the derivative of the Hamilton-Jacobi equation \eqref{ham_jac_eq}:
\begin{equation}
\sigma^{;\mu }\sigma_{;\mu\nu ' }= \sigma_{;\nu '}, \label{sec_der_sigma}
\end{equation} 
and recasting it in the form 
\begin{equation}
-D^{-1 \mu \nu '}\sigma_{;\nu '} = \sigma^{;\mu } \neq 0,
\end{equation}
which shows, together with \eqref{caust_eq} that changing the length of $\sigma^{;\mu ' }$ without changing its direction necessarily shifts $x$ a proportional distance: in fact, by taking $ \delta \sigma_{;\mu '} = \epsilon \sigma_{;\mu '}$, one obtains
\begin{eqnarray}
\delta x^\mu &=& - D^{-1 \mu \nu'} \delta \sigma_{;\nu '} \nonumber \\
&=& - \epsilon D^{-1 \mu \nu'} \sigma_{;\nu '} \nonumber \\
&=& - \epsilon \sigma^{;\mu }.
\end{eqnarray}
\subsubsection{Divergence of geodesics}
The determinant $D$ is a bi-density, of unit weight at both $x$ and $x'$. Not surprisingly it plays a fundamental role in the description of the \textit{rate} at which geodesics emanating from fixed points diverge from or converge toward one another. If we differentiate eq. \eqref{sec_der_sigma} with respect to $x^\rho$ and we note that the indices $\mu$ and $\rho$ commute, we get 
\begin{eqnarray}
\tensor{\sigma}{_; ^\mu _\rho} \sigma_{;\mu\nu ' } + \sigma^{;\mu }\sigma_{;\mu\nu '\rho } &=& \sigma_{;\nu '\rho}, \nonumber \\
\tensor{\sigma}{_; ^\mu _\rho} \sigma_{;\mu\nu ' } + \sigma^{;\mu }\sigma_{;\rho\nu '\mu } &=& \sigma_{;\nu '\rho}, \nonumber \\
-\tensor{\sigma}{_; ^\mu _\rho} D_{ \mu \nu'} -\sigma^{;\mu }D_{\nu' \rho;\mu }&=& - D_{\rho\nu '}, \nonumber \\
D_{\rho\nu '} &=& \tensor{\sigma}{_; ^\mu _\rho} D_{ \mu \nu'} + \sigma^{;\mu }D_{\nu' \rho;\mu },
\end{eqnarray} 
which, on multiplication by $D^{-1 \rho \nu'}$, gives 
\begin{eqnarray}
D_{\rho\nu '}D^{-1 \rho \nu'} &=& \tensor{\sigma}{_; ^\mu _\rho} D_{ \mu \nu'}D^{-1 \rho\nu'} + \sigma^{;\mu }D_{\nu' \rho;\mu }D^{-1 \rho \nu'} \nonumber \\
-\underline{\delta}^\rho _\rho &=&  -\tensor{\sigma}{_; ^\mu _\rho} \underline{\delta}^\rho _\mu   -   \sigma^{;\mu } D D_{;\mu} \nonumber \\
4 &=& \tensor{\sigma}{_; ^\mu _\mu} + \sigma^{;\mu } D^{-1} D_{;\mu}  \label{div_geod_0}\\
D^{-1} \left(D  \sigma^{;\mu }\right)_{;\mu} &=& 4. \label{div_geod}
\end{eqnarray}
The significance of this equation may be made transparent by first replacing $D$ with the bi-scalar 
\begin{equation}
\Delta \equiv |{\rm g}| ^{-1/2}  D  |{\rm g'}| ^{-1/2} 
\end{equation}
and observing that the operator $\sigma^{;\mu }\partial _{\mu}$ gives the derivative of any function along the geodesic from $x'$. Thus
\begin{equation}
\sigma^{;\mu }\partial _{\mu} f = (\tau-\tau')\dot{f}
\end{equation}
where $f$ is any scalar. Arbitrarily setting $\tau' = 0$, we may recast eq. \eqref{div_geod} in the form 
\begin{equation}
\tensor{\sigma}{_; ^\mu _\mu} = 4 - \frac{d(\log \Delta)}{d (\log \tau)}.  \label{div_geod2}
\end{equation}
In fact 
\begin{eqnarray}
\sigma^{;\mu } D^{-1} D_{;\mu} &=&  \sigma^{;\mu } \left( |{\rm g}| ^{1/2}  \Delta  |{\rm g'}| ^{1/2} \right)^{-1} \left( |{\rm g}| ^{1/2}  \Delta  |{\rm g'}| ^{1/2} \right)_{;\mu} \nonumber \\
&=& \sigma^{;\mu } |{\rm g}| ^{-1/2}  \Delta^{-1}  |{\rm g'}| ^{-1/2}  |{\rm g}| ^{1/2}  \Delta_{;\mu}  |{\rm g'}| ^{1/2} \nonumber \\
&=& \sigma^{;\mu } \Delta^{-1}\Delta_{;\mu}  \nonumber \\
&=& \sigma^{;\mu } ({\rm log} \Delta) _{;\mu} \nonumber \\
&=& \tau \frac{d}{d\tau }({\rm log} \Delta) \nonumber \\
&=& \frac{d(\log \Delta)}{d (\log \tau)}.
\end{eqnarray}
From \eqref{div_geod2} it follows immediately that $\Delta$ increases or decreases along each geodesic from $x'$ according as the rate of divergence of the neighboring geodesics from $x'$, which is measured by $\tensor{\sigma}{_; ^\mu _\mu}$, is less than or greater than $4$, the rate in flat space-time. If the divergence rate becomes negatively infinite a caustic surface develops and $\Delta$ blows up. 

\subsubsection{Geodetic parallel displacement}
Another geometrical quantity of fundamental importance is the \textit{geodetic parallel displacement bi-vector}, $g_{\mu\nu'}$, which is defined by the differential equations 
\begin{equation}
\tensor{\sigma}{_; ^\eta} g_{\mu\nu';\eta} = 0 \label{par_displ}
\end{equation}
together with the boundary condition 
\begin{equation}
{\rm lim}_{x' \rightarrow x} g_{\mu\nu'} = g_{\mu\nu}. \label{par_displ_limit}
\end{equation}
The bi-vector $g_{\mu\nu'}$ gets its name from the fact that the result of applying it, for example, to a local contravariant vector $A^{\mu'}$ at $x'$, is to obtain the covariant form or the vector which results from displacing $A^{\mu'}$ in a parallel fashion along the geodesic from $x'$ to $x$. This follows from the defining eq. \eqref{par_displ}, which requires the covariant derivative of $g_{\mu\nu'}$ to vanish in directions tangent to the geodesic: in fact
\begin{eqnarray}
\frac{d}{d\tau}\left(g_{\mu\nu'} A^{\nu'} \right) &=& \frac{\tensor{\sigma}{_{;} ^{\rho}}}{\tau}\left(g_{\mu\nu'} A^{\nu'} \right)_{;\rho} \nonumber \\
&=& \tfrac{1}{\tau}\tensor{\sigma}{_{;} ^{\rho}}g_{\mu\nu';\rho}A^{\nu'}  \nonumber \\
&=&  0,
\end{eqnarray}
and 
\begin{eqnarray}
{\rm lim}_{x\rightarrow x'} g^{\mu\rho} g_{\rho\nu'} A^{\nu'}  &=& g^{\mu'\rho'} g_{\rho'\nu'} A^{\nu'} \nonumber \\
&=& \underline{\delta}_{\nu'}^{\mu'}A^{\nu'} \nonumber \\
&=&A^{\mu'}.
\end{eqnarray}
From its geometrical significance and the fact that tangents to a geodesic are self-parallel the following properties of $g_{\mu\nu'}$ are obvious:
\begin{eqnarray}
\tensor{g}{_\mu ^{\nu'}} \tensor{\sigma}{_{;\nu'}} = - \tensor{\sigma}{_{;\mu}}, \; \; \;&& \; \; \; \tensor{g}{^\nu _{\mu'}} \tensor{\sigma}{_{;\nu}} = - \tensor{\sigma}{_{;\mu'}}, \\
\tensor{\sigma}{_; ^{\eta'}} g_{\mu\nu';\eta'} &=& 0, \\
\tensor{g}{_\mu _{\nu'}} &=& \tensor{g}{ _{\nu'}  _\mu}, \\
\tensor{g}{_\mu _{\rho'}}\tensor{g}{_{\nu'} ^{\rho'} } = \tensor{g}{_\mu _\nu},  \; \; \;&& \; \; \; \tensor{g}{_\rho _{\mu'} }\tensor{g}{^\rho _{\nu'}} = \tensor{g}{_{\mu'} _{\nu'}}, \\
{\rm det}(-g_{\mu\nu'})&=&|{\rm g}|^{1/2} |{\rm g'}|^{1/2}.
\end{eqnarray}
In a similar manner one may define a \textit{geodetic parallel displacement bi-spinor} $I(x,x')$ which satisfies 
\begin{eqnarray}
\tensor{\sigma}{_; ^\mu} I_{;\mu} &=& 0, \label{par_displ_sc} \\
{\rm lim}_{x' \rightarrow x} I &=& {\rm unity \; matrix},  \label{par_displ_sc2}
\end{eqnarray}
and which transforms like $\psi \equiv \psi(x)$ at $x$ and like $\psi' \equiv \psi(x')$ at $x'$. 

\subsection{Integral representation (II)}
Now we are ready to propose an ansatz to solve \eqref{curved_ampl}, \eqref{curved_ampl_in}: in \eqref{flat_ampl}, replace
\begin{eqnarray}
\frac{1}{s^2} &\mapsto & \frac{D(x,x')^{1/2}}{s^2} \\
(x-x')^2 &\mapsto &  2 \sigma (x,x')
\end{eqnarray}
and multiplicate by a power series in $(is)$:
\begin{eqnarray}
&&\sum_{n=0}^{+\infty} a_n (x,x') (is)^n, \\
&&{\rm lim}_{x' \rightarrow x} a_0 (x,x') = 1,
\end{eqnarray}
Then the ansatz is
\begin{eqnarray}
&& \left\langle x',s |x'',0 \right\rangle = \frac{-i}{16 \pi^2} \frac{D(x,x')^{1/2}}{s^2} e^{i \frac{\sigma (x,x')}{2s}}\sum_{n=0}^{+\infty} a_n (x,x') (is)^n, \label{ansatz}\\
&&{\rm lim}_{x' \rightarrow x} a_0 (x,x') = 1.
\end{eqnarray}
A requirement for the ansatz to be meaningful is the existence of a recurrence relation for the unknown $a_n$'s. Inserting \eqref{ansatz} in \eqref{curved_ampl}, the lhs is:
\begin{eqnarray}
&& i \frac{\partial}{\partial s}\left\langle x',s |x'',0 \right\rangle = i \frac{\partial}{\partial s}\left(\frac{-i}{16 \pi^2} \frac{D^{1/2}}{s^2} e^{i \frac{\sigma}{2s}}\sum_{n=0}^{+\infty} a_n(is)^n\right) \nonumber \\
&&= \frac{D^{1/2}}{16 \pi^2} \frac{\partial}{\partial s} \left(\frac{1}{s^2}e^{i \frac{\sigma }{2s}}\sum_{n=0}^{+\infty} a_n (is)^n\right) \nonumber \\
&&= \frac{D^{1/2}}{16 \pi^2} e^{i \frac{\sigma }{2s}} \bigg( -\frac{2}{s^3}\sum_{n=0}^{+\infty} a_n (is)^n  +\frac{1}{s^2}  \frac{-i\sigma }{2s^2}\sum_{n=0}^{+\infty} a_n (is)^n\nonumber \\
&& \; \; \; \; \; +\frac{1}{s^2}\sum_{n=1}^{+\infty}in a_n (is)^{n-1} \bigg) \nonumber \\
&&=\frac{D^{1/2}}{16 \pi^2} e^{i \frac{\sigma}{2s}} \bigg(\sum_{n=0}^{+\infty} (-2i^{3}a_n )(is)^{n-3} +\sum_{n=0}^{+\infty} (-ii^4\sigma a_n /2)(is)^{n-4}  \nonumber \\ 
&& \; \; \; \; \;+\sum_{n=1}^{+\infty}(ii^2n a_n )(is)^{n-3}\bigg) \nonumber \\
&&= \frac{iD^{1/2}}{16 \pi^2} e^{i \frac{\sigma }{2s}}  \bigg(\sum_{n=0}^{+\infty}(2a_n )(is)^{n-3}+\sum_{n=0}^{+\infty} (-\sigma a_n /2)(is)^{n-4}  \nonumber \\
&& \; \; \; \; \; +\sum_{n=1}^{+\infty}(-n a_n)(is)^{n-3}\bigg),
\end{eqnarray} 
while the rhs is 
\begin{eqnarray}
&&\tensor{-\left\langle x',s |x'',0 \right\rangle}{_{;\mu}^{\mu}} = -\left( \frac{-i}{16 \pi^2} \frac{D^{1/2}}{s^2} e^{i \frac{\sigma }{2s}}\sum_{n=0}^{+\infty} a_n  (is)^n\right)\tensor{}{_{;\mu}^{\mu}}\nonumber \\
&&= \frac{i}{16 \pi^2}\frac{1}{s^2}\left( D^{1/2}e^{i \frac{\sigma }{2s}}\sum_{n=0}^{+\infty} a_n  (is)^n\right)\tensor{}{_{;\mu}^{\mu}} \nonumber \\
&&= \frac{i}{16 \pi^2}\frac{1}{s^2}e^{i \frac{\sigma }{2s}} \bigg(D^{1/2}\tensor{}{_{;\mu}^{\mu}}\sum_{n=0}^{+\infty} a_n  (is)^n \nonumber \\
&&\; \; \; \; \; +D^{1/2} \frac{i}{2s} \tensor{\sigma}{_{;\mu}^{\mu}}\sum_{n=0}^{+\infty} a_n  (is)^n +D^{1/2} \left(\frac{i}{2s}\right)^2\tensor{\sigma}{_{;\mu}}\tensor{\sigma}{^{;\mu}}\sum_{n=0}^{+\infty} a_n  (is)^n \nonumber \\
&&\; \; \; \; \; +D^{1/2}\sum_{n=0}^{+\infty} \tensor{a}{_n _{;\mu}^{\mu}} (is)^n   +2D^{1/2}\tensor{}{_{;\mu}}\left(\frac{i}{2s}\right)\tensor{\sigma}{^{;\mu}}\sum_{n=0}^{+\infty} a_n  (is)^n  \nonumber \\
&&\; \; \; \; \; +2D^{1/2}\tensor{}{_{;\mu}}\sum_{n=0}^{+\infty} \tensor{a }{_n^{;\mu}} (is)^n   +2D^{1/2}\left(\frac{i}{2s}\right)\tensor{\sigma}{_{;\mu}}\sum_{n=0}^{+\infty}\tensor{a }{_n^{;\mu}} (is)^n \bigg) \nonumber \\
&&= -\frac{iD^{1/2}}{16 \pi^2}e^{i \frac{\sigma }{2s}} \bigg( \sum_{n=0}^{+\infty} \frac{D^{1/2}\tensor{}{_{;\mu}^{\mu}}}{D^{1/2}}  a_n  (is)^{n-2}\nonumber \\  
&&\; \; \; \; \; +\sum_{n=0}^{+\infty} \frac{-\tensor{\sigma}{_{;\mu}^{\mu}}}{2}a_n  (is)^{n-3}  +\sum_{n=0}^{+\infty} \frac{\tensor{\sigma}{_{;\mu}}\tensor{\sigma}{^{;\mu}}}{4}a_n(is)^{n-4} \nonumber \\
&&\; \; \; \; \; +\sum_{n=0}^{+\infty} \tensor{a}{_n _{;\mu}^{\mu}} (is)^n   +\sum_{n=0}^{+\infty}\frac{-D^{1/2}\tensor{}{_{;\mu}}}{D^{1/2}}\tensor{\sigma}{^{;\mu}} a_n(is)^{n-3} \nonumber \\
&&\; \; \; \; \; +\sum_{n=0}^{+\infty} \frac{2D^{1/2}\tensor{}{_{;\mu}}}{D^{1/2}} \tensor{a }{_n^{;\mu}} (is)^{n-2}  +\sum_{n=0}^{+\infty} (-\tensor{\sigma}{_{;\mu}}\tensor{a}{_n ^{;\mu}})(is)^{n-3}\bigg); \nonumber \\
\end{eqnarray}
exploiting \eqref{ham_jac_eq} and \eqref{div_geod_0}, one obtains
\begin{eqnarray}
&&\tensor{-\left\langle x',s |x'',0 \right\rangle}{_{;\mu}^{\mu}}= - \frac{iD^{1/2}}{16 \pi^2}e^{i \frac{\sigma }{2s}} \bigg( \sum_{n=0}^{+\infty} \frac{D^{1/2}\tensor{}{_{;\mu}^{\mu}}}{D^{1/2}}  a_n  (is)^{n-2} \nonumber \\ 
&&\; \; \; \; \; +\sum_{n=0}^{+\infty} (-2 a_n (is)^{n-3} +\sum_{n=0}^{+\infty} \frac{\sigma}{2}a_n(is)^{n-4}  \nonumber \\
&&\; \; \; \; \; +\sum_{n=0}^{+\infty} \tensor{a}{_n _{;\mu}^{\mu}} (is)^{n-2} +\sum_{n=0}^{+\infty} \frac{2D^{1/2}\tensor{}{_{;\mu}}}{D^{1/2}} \tensor{a }{_n^{;\mu}} (is)^{n-2} \nonumber \\
&& \; \; \; \; \;+\sum_{n=0}^{+\infty} (-\tensor{\sigma}{_{;\mu}}\tensor{a}{_n ^{;\mu}})(is)^{n-3}\bigg). 
\end{eqnarray}
Finally, equating lhs and rhs one obtains
\begin{eqnarray}
0&=&\frac{iD^{1/2}}{16 \pi^2} e^{i \frac{\sigma }{2s}} \bigg(\sum_{n=0}^{+\infty}(-n a_n)(is)^{n-3} +\nonumber \\
&&  \sum_{n=0}^{+\infty}\Delta^{-1/2}(\Delta^{1/2} a_n)\tensor{}{_{;\mu}^{\mu}}(is)^{n-2} +\nonumber \\
&&  \sum_{n=0}^{+\infty}(-\tensor{\sigma}{_{;\mu}}\tensor{a}{_n ^{;\mu}})(is)^{n-3}\bigg).
\end{eqnarray}
The necessary and sufficient condition for this equation to be satisfied for every $s$ is the multiplicative coefficient of every monomial $(is)^{k}$ be zero; therefore
\begin{eqnarray}
&&(is)^{-3}:   \tensor{\sigma}{_{;\mu}}\tensor{a}{_0 ^{;\mu}} = 0, \label{rec_rel1} \\
&&(is)^{k},\; k>-3:  \nonumber \\
&& \;\;\;\;\;    \tensor{\sigma}{_{;\mu}}\tensor{a}{_{n+1} ^{;\mu}}+(n+1) a_{n+1} = \Delta^{-1/2}(\Delta^{1/2} a_n)\tensor{}{_{;\mu}^{\mu}}. \label{rec_rel2}
\end{eqnarray}
In view of \eqref{par_displ_sc} and \eqref{par_displ_sc2} the equations for $a_0$ are solved by 
\begin{equation}
a_0(x,x') = I(x,x'),
\end{equation}
while the recurrence relation \eqref{rec_rel2} may be solved by integrating along each geodesic emanating from $x'$; in fact, multiplying the lhs by $\tau^{ '' n}$, where $\tau '' $ is the parameter labeling a point $x''$ on the geodesic between $x' (\tau '' = 0)$  and $x (\tau '' =\tau)$, one obtains
\begin{eqnarray}
&&\tau^{''n}\tensor{\sigma}{_{;\mu}}\tensor{a}{_{n+1} ^{;\mu}}(x(\tau''),x')+(n+1)\tau^{'' n} a_{n+1}(x(\tau''),x') = \nonumber \\
&=&    \tau^{'' n+1}\frac{d}{d\tau''}a(x(\tau''),x') + \left(\frac{d}{d\tau''} \tau^{'' n+1}\right)a_{n+1}(x(\tau''),x') \nonumber \\
&=& \frac{d}{d\tau''}\left(\tau^{'' n+1}  a_{n+1}(x(\tau''),x') \right),
\end{eqnarray} 
therefore
\begin{equation}
a_{n+1}(x,x') = \tau^{-n-1} \int_0 ^\tau d\tau'' \;  \tau^{'' n}\Delta^{''-1/2}(\Delta^{''1/2} a_n '')\tensor{}{_{;\mu''}^{\mu''}}
\end{equation}

\subsection{Series expansions}
Inserting \eqref{ansatz} in \eqref{corr_prescr_F}, we now get
\begin{eqnarray}
G(x,x') &=& \frac{\Delta^{1/2}}{(4\pi)^2}\int_0 ^{\infty}ds \; \frac{1}{s^2} e^{-i\left(m^2 s - \frac{\sigma}{2s}\right)} \sum_{n=0}^{+\infty} a_n (is)^n \nonumber \\
&=&  \frac{\Delta^{1/2}}{(4\pi)^2}  \sum_{n=0}^{+\infty} a_n \left(-\frac{\partial}{\partial m^2} \right)^n \int_0 ^{\infty}ds \;  e^{-i\left(m^2 s - \frac{\sigma}{2s}\right)}.
\end{eqnarray}
The latter integral has already been evaluated (eqs. \eqref{Feyn_Gr} and \eqref{Feynman_Hankel}). Breaking the Feynman propagator into its real and imaginary parts and making use of the expansions \eqref{exp_real_F_flat} and \eqref{exp_imag_F_flat}, we obtain, upon carrying out the differentiations with respect to $m^2$, 
\begin{eqnarray}
G(x,x') &=& \bar{G}(x,x')+\frac{1}{2}iG^{(1)}(x,x'), \nonumber \\
\bar{G}(x,x') &=& \frac{\Delta^{1/2} I }{8 \pi}\delta(\sigma)-\frac{\Delta^{1/2}  }{8 \pi}\theta(-\sigma) \bigg[\tfrac{1}{2}(m^2 I - a_{1})\nonumber \\
&&+\frac{2\sigma}{2^2 4}(m^4 I -2m^2 a_1 +2a_2)  \nonumber \\ 
&& +\frac{(2\sigma)^2}{2^2 4^2 6}(m^6 I -3m^4 a_1 +6m^2 a_2 -6a_3) + ... \bigg],  \\
G^{(1)}(x,x') &=& \frac{\Delta^{1/2} I }{4 \pi^2 \sigma}  \nonumber \\ 
&& +\frac{\Delta^{1/2}  }{2 \pi^2}(\gamma -\tfrac{1}{2}{\rm log }2 + \tfrac{1}{2}{\rm log }|2m^2 \sigma |)\bullet \nonumber \\ && \; \bullet\bigg[\tfrac{1}{2}(m^2 I -a_1)+ \frac{2\sigma}{2^2 4}(m^4 I -2m^2 a_1 + 2 a_2) + ... \bigg]  \nonumber \\
&&  -\frac{\Delta^{1/2}  }{2 \pi^2}\bigg[\tfrac{1}{4}m^2 I + \frac{2\sigma}{2^2 4}(\tfrac{5}{4}m^4 I-2m^2 a_1 + a_2)  \nonumber \\
&& \;  +\frac{(2\sigma)^2}{2^2 4^2 6} \left( \frac{5}{3}m^6 I -\frac{9}{2}m^4 a_1 +\frac{15}{2}m^2 a_2 -\frac{9}{2}a_3 \right) + ... \bigg] \nonumber  \\
&&  +\frac{\Delta^{1/2}  }{2 \pi^2} \bigg[ \left( \frac{a_2}{m^2}+\frac{a_3}{4m^4}+\frac{a_4}{2m^6}+... \right)  \nonumber \\
&& \; -\frac{2\sigma}{2^2 4}\left(\frac{a_3}{m^2}+\frac{a_4}{m^4}+ ... \right) + ... \bigg]. \label{exp_G1_curved}
\end{eqnarray}

Several comments must be made about these expansions. First, there is the obvious remark that they are useful only for small values of $\sigma$. However, this is precisely the domain in which we are often interested, particularly in renormalization theory. We note that the Feynman propagator has, at $\sigma = 0$, the same types of singularity in the presence of a gravitational field as it has in a flat empty space-time. The Green's function $\bar{G}$, which can be split into the advanced and retarded Green's functions, has a $\delta$-distribution singularity on the light cone and vanishes outside. We note, however, that when $m = 0$, it no longer vanishes inside the light cone as it does when space-time is flat and empty. Instead, we have 
\begin{equation}
\bar{G}=\frac{\Delta^{1/2} I }{8 \pi}\delta(\sigma)+\frac{\Delta^{1/2}  }{16 \pi}\theta(-\sigma)\left(a_1 -\frac{\sigma}{2}a_2 + \frac{\sigma^2}{2 \cdot  4} a_3 - ...\right).
\end{equation}
lt is important to observe in this connection that although the expansion in terms of the $a$'s can be used for $\bar{G}$ when $m = 0$, it cannot be used for $G^{(1)}$. This may be seen from the last line of \eqref{exp_G1_curved} which shows that an expansion in inverse powers of $m^2$ is involved. When $m$ is vanishing, alternative methods, based either on special properties of the fields or on perturbation theory, must be found for evaluating the Feynman propagator. 

\section{Conclusions}
This work has pedagogical purposes, hence the particular care in calculations, most of which are explicitly shown. A powerful formalism for gauge field theories has been described and has been used to obtain a \textit{manifesly covariant quantization} of such theories, even in curved space-time; an important application is the evaluation of physical observables, such as the \textit{stress-energy tensor} through point-split regularization: in any theory of interacting fields the set of currents that describe the interaction is of fundamental importance; in General Relativity, these currents are the components of the stress-energy tensor, therefore the \textit{main problem in developing a quantum field theory in curved space-time is precisely to understand the stress-energy tensor} (see \cite{dewitt1975quantum1}). 

One fundamental result is that it can always be expressed through Feynman's Green function and its derivatives (see \cite{bimonte2003photon},  \cite{christensen1976vacuum}, \cite{christensen1978regularization}, \cite{dewitt1975quantum1}) but the actual task is to give it meaning by some subtraction process. A regularization, or subtraction, process conventionally makes use of the vacuum state, but in a curved space-time, this notion is not trivial, as already stressed. Compared to the flat space-time case, in a curved background the resulting \textit{renormalized stress-energy tensor} is covariantly conserved, of course, but it \textit{possesses a state-independent anomalous trace} (see \cite{kay2006}).

Currently the author is working on the study of the stress-energy tensor for Maxwell's theory, along the lines of Christensen's work, which is a cornerstone in this area (see \cite{christensen1976vacuum} and \cite{christensen1978regularization}).
Last, it is important to emphasize that similar calculations appear interestingly also in the context of effective action in curved space-time, whose divergent part is essential to discuss renormalization group equations for the Newton constant and the cosmological constant (see \cite{giacchini2020}).

\section*{Acknowledgments}
The author is grateful to G. Esposito for his guidance and support; his love of physics is deeply inspiring.

\end{document}